\newtheorem{theorem}{Theorem}
\newtheorem{lemma}[theorem]{Lemma}
\theoremstyle{definition}
\newtheorem{definition}{Definition}
\newtheorem{remark}{Remark}
\newtheorem{example}{Example}
\title{\bf Impossibility Theorems and the Universal Algebraic Toolkit}
\author{
        \textsc{Mario Szegedy} and
        \textsc{Yixin Xu}
        \mbox{}\\ %
        Department of Computer Science\\
        Rutgers, The State University of New Jersey\\
        \mbox{}\\ %
        \normalsize
            \texttt{szegedy/yixinxu}
        \normalsize
            \texttt{@cs.rutgers.edu}
}
\date{\today }
\begin{document}

\maketitle

\begin{abstract}
We elucidate a close connection between the Theory of Judgment Aggregation
(more generally, Evaluation Aggregation),
and a relatively young but rapidly growing field of universal algebra, that was primarily developed 
to investigate constraint satisfaction problems. 
Our connection yields a full classification
of non-binary evaluations into possibility and impossibility domains
both under the idempotent and the supportive conditions.
Prior to the current result E. Dokow and R. Holzman nearly classified
non-binary evaluations in the supportive case, by combinatorial means.
The algebraic approach gives us new insights to the easier binary case as well, which had been
fully classified by the above authors.
Our algebraic view lets us put forth a suggestion about a strengthening of the Non-dictatorship criterion,
that helps us avoid ``outliers'' like the affine subspace. 
Finally, we give upper bounds on the complexity 
of computing if a domain is impossible or not (to our best knowledge no finite time bounds were given earlier).
\end{abstract}

{\bf Keywords:} {\em Judgment aggregation, Discursive dilemma, Condorcet's paradox, Arrow's impossibility theorem, Social choice theory, Dictatorship, Compatible operations, Constraint satisfaction problems, Dichotomy, Polymorphisms}

\newpage


\section{Introduction}

The goal of Judgment Aggregation is to investigate the existence 
or nonexistence of functions that ``aggregate individual
sets of judgments on multiple, logically connected propositions into collective sets of judgments'' \cite{LP02}. 
Arrow's impossibility theorem, which can be interpreted in the judgment aggregation framework, is a powerful example 
showing that even in simple cases 
we cannot meaningfully aggregate individual opinions
into a social opinion.

\medskip

Elad Dokow and Ron Holzman \cite{DH10a,DH10b} have studied an 
elegant generalization of Judgment Aggregation, that we call 
after their title {\em Aggregation of Evaluations}. 
Let $J$ be a finite set of issues and $D$ be a finite set of possible positions/opinions (like `yes', `no' etc.).
Without loss of generality we assume that
\[
J = [m]=\{1,\ldots,m\}.
\]
An evaluation $(v_{1},\ldots,v_{m})\in D^{m}$ assigns a position in $D$ to each $j\in [m]$.
The binary case, when $D=\{0,1\}$
has received special attention \cite{Wilson75, Rubinstein86, DH10a}.
Our fundamental object is the {\em domain} $X\subseteq D^{m}$ 
of {\em feasible evaluations}, these are the evaluations (i.e. opinion-combinations) that we 
allow for the voters to choose from. 

\begin{example}
Assume that during a murder trial  the members of the jury 
have to vote on two issues: 1. the suspect had a knife; 2. the suspect was the murderer; with 
taking a position either `yes' or `no' on each of the issues. Each member must take a position on both issues, 
but the jury agrees that the position-combination (1: no, 2: yes) should not be valid: neither as an individual vote
nor when we {\em aggregate} the votes.  Thus $X = \{\mbox{(no,no), (yes,no), (yes,yes)}\}$.
\end{example}

\medskip

{\bf Aggregators.} When $n$ members  of a society take a position on all of the $m$ issues, 
and each member's vote is from $X$, we get a {\em profile} vector 
\[
(x^{(1)},\ldots,x^{(n)})\in X^{n}.
\]
Our goal is to design a function $f:X^{n}\rightarrow X$ that takes
profile vectors into single elements of $X$. Such functions are called
{\em aggregators}.  The aggregators we shall consider must satisfy three conditions that
come directly from Arrow's famous conditions, he has identified
while studying the special case of preference list
aggregation. Before we describe them we remark that
\[
x^{(i)}=(x^{(i)}_{1},\ldots, x^{(i)}_{m})\in D^{m}\;\;\;\;\;\mbox{for}\; i = 1,\cdots,n,
\] 
are vectors themselves: $x^{(i)}_{j}$ is the $i^{\rm th}$ voter position on the $j^{\rm th}$ issue.
Thus the profile is a vector of vectors.
The output of $f$ is a vector in $D^{m}$,
representing the aggregated positions on the $m$ issues. 
The latter vector must also belong to $X$. 

\medskip

The first and key condition
is that each issue ought to be aggregated independently from the others 
(also called point-wise aggregation or Issue by Issue Aggregation):

\begin{description}
\item[Issue-by-Issue Aggregation (IIA):] There are functions $f_{j}: D^{n}\rightarrow D$ ($1\le j\le m$) such that 
for every $(x^{(1)},\ldots,x^{(n)})\in X^{n}$:
\[
f(x^{(1)},\ldots,x^{(n)}) = \left(f_{1}\left(x^{(1)}_{1},\ldots,x^{(n)}_{1}\right),\ldots,f_{m}\left(x^{(1)}_{m},\ldots,x^{(n)}_{m}\right)\right)
\]
\end{description}

There is a nice way to visualize the IIA property via the picture

\medskip

\begin{center}
\begin{tabular}{ccccc}
 $x^{(1)}_{1}$ & $\cdots$ & $x^{(1)}_{m}$  & $\in$ & $X$ \\
                      & $\vdots$ &                        &         &         \\
\rule[-2ex]{0pt}{0pt}  $x^{(n)}_{1}$ & $\cdots$ & $x^{(n)}_{m}$   & $\in$ & $X$ \\\hline
\rule{0pt}{3ex}   $\downarrow_{f_{1}}$       & $\cdots$ & $\downarrow_{f_{m}}$       &  \\
\rule{0pt}{3ex}   $x_{1}$       & $\cdots$ & $x_{m}$       & $\in$ & $X$ \\
\end{tabular}
\end{center}

\medskip

Above 
we aggregate column $j$ (where $j\in [m]$ is an issue) by function $f_{j}$. The condition that $f$ takes $X^{n}$ to $X$ is
equivalent to saying that if
each row belongs to $X$, then so does the aggregated row.
The component aggregate functions should work in unison to accomplish this.
We have adopted the term ``Issue by Issue Aggregation'' coined by E. Dokow and R. Holzman \cite{DH10a,DH10b},
which is in this generalized context more fitting than the commonly used 
``Independence of Irrelevant Alternatives'' expression,
with the benefit that the acronym remains the same.

\medskip

{\bf Uniqueness of the IIA decomposition.} The representation of $f:X^{n}\rightarrow X$ as $(f_{1},\ldots,f_{m})$ is clearly not unique for instance when $D$ 
contains any element that does not occur as a constituent in any $x\in X$. In order to avoid non-uniqueness of the $f_{j}$s we define
\[
D_{j} = {\rm pr}_{j} X = \{ u_{j} \mid (u_{1},\ldots, u_{m})\in X\}. 
\]
If we define $f_{j}$ on $D_{j}^{n}$ instead of $D^{n}$, it is easy to see that $f_{j}$ becomes unique. 
Throughout the paper we shall assume this.

\medskip

Next we describe the two other conditions (besides IIA) Arrow has imposed on an aggregator $f:X^{n}\rightarrow X$.

\begin{description}
\item[Idempotency (or Unanimity):] $f(x,\ldots,x)=x$ for every $x\in X$.
\end{description}

\begin{lemma}
An IIA aggregator $f=(f_{1},\ldots,f_{m})$ is idempotent if and only if every $f_{j}$ is idempotent
in the universal algebraic sense, i.e.
\[
\forall \; 1\le j\le m \;\;\;\; \;\; \forall \; u\in D_{j} : \;\;\;\;\;\;\; f_{j}(u,\ldots,u) = u
\]
\end{lemma}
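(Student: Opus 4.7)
The plan is to prove both directions by directly unwinding the IIA decomposition and the definition of $D_j$ as the projection of $X$ onto coordinate $j$.

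For the forward direction ($\Rightarrow$), I assume $f$ is idempotent, fix an index $j$ and an element $u \in D_j$, and produce a vector $x \in X$ whose $j$-th coordinate is $u$. Such an $x$ exists precisely because $D_j = \mathrm{pr}_j X$, so every $u \in D_j$ is realized as $x_j$ for some $x \in X$. Applying idempotency of $f$ to this $x$ gives $f(x,\ldots,x)=x$, and then reading off the $j$-th coordinate via the IIA decomposition yields $f_j(u,\ldots,u) = f_j(x_j,\ldots,x_j) = x_j = u$.

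For the backward direction ($\Leftarrow$), I assume each $f_j$ is idempotent on $D_j^n$ and take an arbitrary $x=(x_1,\ldots,x_m) \in X$. By definition of $D_j$, each $x_j \in D_j$, so I can apply the assumed idempotency of $f_j$ coordinatewise:
\[
f(x,\ldots,x) = \bigl(f_1(x_1,\ldots,x_1),\ldots,f_m(x_m,\ldots,x_m)\bigr) = (x_1,\ldots,x_m) = x,
\]
which is exactly idempotency of $f$.

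There is essentially no obstacle here; the statement is a direct consequence of two facts already in place, namely the IIA decomposition formula for $f$ and the convention (just fixed in the paragraph preceding the lemma) that each $f_j$ is defined on $D_j^n$ with $D_j = \mathrm{pr}_j X$. The only subtle point worth flagging in the write-up is that the restriction to $D_j$ is what makes the equivalence clean: without it, the forward direction would only give idempotency of $f_j$ on the projection $D_j$, not on all of $D$, and the lemma as stated would fail to be a biconditional.
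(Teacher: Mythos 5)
Your proof is correct and is exactly the direct definitional argument the paper has in mind (the paper states this lemma without proof, treating it as immediate from the IIA decomposition and the convention $D_j = \mathrm{pr}_j X$). Your remark that restricting $f_j$ to $D_j^n$ is what makes the forward direction go through is also the right subtlety to flag.
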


\begin{description}
\item[Non-dictatorship:] Aggregator $f:X^{n}\rightarrow X$ is a dictatorship if there is a $1\le k\le n$ such that for every $(x^{(1)},\ldots,x^{(n)})\in X^n$ we have 
$f(x^{(1)},\ldots,x^{(n)}) = x^{(k)}$. Otherwise the Non-dictatorship condition holds for $f$. \vspace{0.1in}
\end{description}

\begin{lemma}
An IIA aggregator $f=(f_{1},\ldots,f_{m})$ is a dictatorship if and only if there is a $1\le k \le m$ such that each $f_{j}$ is a {\em projection} on 
the $k^{\rm th}$ coordinate in the universal 
algebraic sense:
\[
\forall \; 1\le j\le m \;\;\;\; \;\; \forall \; u_{1},\ldots,u_{n} \in D_{j}  : \;\;\;\;\;\;\; f_{j}(u_{1},\ldots,u_{n}) = u_{k}
\]
\end{lemma}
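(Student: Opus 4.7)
The plan is to prove both directions directly from the definitions, with the main work in extending the dictator identity from profiles in $X^n$ to arbitrary tuples in $D_j^n$.

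For the ``if'' direction, suppose each $f_j$ is the $k$-th projection on $D_j^n$ for a common index $k$. Given any profile $(x^{(1)},\ldots,x^{(n)})\in X^n$, the IIA decomposition yields, coordinate-wise,
\[
f\bigl(x^{(1)},\ldots,x^{(n)}\bigr)_j \;=\; f_j\bigl(x^{(1)}_j,\ldots,x^{(n)}_j\bigr) \;=\; x^{(k)}_j
\]
for every $j\in [m]$, so $f(x^{(1)},\ldots,x^{(n)})=x^{(k)}$ and $f$ is a dictatorship.

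For the ``only if'' direction, suppose $f$ is a dictatorship with dictator $k$. Fix $j\in [m]$ and an arbitrary tuple $(u_1,\ldots,u_n)\in D_j^n$; I must show $f_j(u_1,\ldots,u_n)=u_k$. By the definition $D_j = {\rm pr}_j X$, for each voter $i$ there exists some $x^{(i)}\in X$ with $x^{(i)}_j = u_i$. Since profiles in $X^n$ are unrestricted $n$-tuples of elements of $X$, these choices assemble into a legitimate profile. Applying IIA followed by the dictatorship hypothesis gives
\[
f_j(u_1,\ldots,u_n) \;=\; f_j\bigl(x^{(1)}_j,\ldots,x^{(n)}_j\bigr) \;=\; f\bigl(x^{(1)},\ldots,x^{(n)}\bigr)_j \;=\; x^{(k)}_j \;=\; u_k,
\]
which is exactly the projection identity. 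The index $k$ is the same for every $j$ because it is fixed once and for all by the dictatorship of $f$.

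The only subtle point, and the one I would flag as the ``hard'' step, is precisely this lifting from $X^n$ to $D_j^n$: the hypothesis controls $f$ only on profile vectors, whereas the $f_j$ are defined on the full product $D_j^n$ under the uniqueness convention introduced before the lemma. The argument works because the projections $D_j = {\rm pr}_j X$ are taken coordinate-wise and the voters are independent, so each $u_i$ can be realized by a witness $x^{(i)}\in X$ without any joint compatibility constraint across voters. (Note also that although the lemma is stated with ``$1\le k\le m$'', the index $k$ must range over voters, i.e., $1\le k\le n$, as is clear from the Non-dictatorship definition; the proof above respects this reading.)
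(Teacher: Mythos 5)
Your proof is correct, and it is the straightforward argument the paper implicitly relies on: the lemma is stated there without proof, precisely because the uniqueness convention $D_j={\rm pr}_j X$ makes the lifting from profiles in $X^n$ to arbitrary tuples in $D_j^n$ immediate — each $u_i$ has an independent witness $x^{(i)}\in X$ with $x^{(i)}_j=u_i$, exactly as you argue. Your side remark is also right: the ``$1\le k\le m$'' in the statement is a typo for ``$1\le k\le n$'', since $k$ indexes voters, consistent with the paper's definition of dictatorship.
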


\begin{definition}[{\bf Impossibility/Possibility domains}]
We call an $X\subseteq D^{m}$ a possibility domain (after Arrow) 
with respect to the IIA + Idempotency + Non-dictatorship
conditions if for some $n\ge 2$ an aggregator function $f$ for $X$ with arity $n$ exists that satisfies the three said conditions. 
Otherwise $X$ is an impossibility domain.
\end{definition}

In this article we completely characterize impossibility domains with respect to the 
IIA + Idempotency + Non-dictatorship and also for the case when 
Idempotency is replaced with 

\begin{description}
\item[Supportiveness:] $f:X^{n}\rightarrow X$ is supportive if for every $x^{(1)},\ldots,x^{(n)}\in X^{n}$ and
every $1\le j\le m$ we have that 
$f(x^{(1)},\ldots,x^{(n)})_{j} \in  \{x^{(1)}_{j},\ldots,x^{(n)}_{j}\}$.
\end{description}

\begin{lemma}
An IIA aggregator $f=(f_{1},\ldots,f_{m})$ is a supportive if and only if every $f_{j}$ is {\em conservative} in the universal 
algebraic sense:
\[
\forall \; 1\le j\le m \;\;\;\; \;\; \forall \; u_{1},\ldots,u_{n} \in D_{j}  : \;\;\;\;\;\;\; f_{j}(u_{1},\ldots,u_{n}) \in \{u_{1},\ldots, u_{n}\}
\]
\end{lemma}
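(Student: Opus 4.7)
The plan is to prove both directions, noting that the argument closely mirrors the proof structure of the preceding idempotency and dictatorship lemmas, since the only nontrivial ingredient in each is the definition of $D_j$ as the projection ${\rm pr}_j X$.

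For the easy direction, I would assume each $f_j$ is conservative. Fix any profile $(x^{(1)},\ldots,x^{(n)})\in X^{n}$ and any index $j\in[m]$. Each $x^{(i)}_j$ lies in $D_j$ by definition, so the hypothesis applied to the tuple $(x^{(1)}_j,\ldots,x^{(n)}_j)\in D_j^n$ gives $f_j(x^{(1)}_j,\ldots,x^{(n)}_j)\in\{x^{(1)}_j,\ldots,x^{(n)}_j\}$. By IIA this value is exactly $f(x^{(1)},\ldots,x^{(n)})_j$, so $f$ is supportive.

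For the converse, assume $f$ is supportive and fix $j\in[m]$ together with an arbitrary tuple $(u_1,\ldots,u_n)\in D_j^{n}$. The key step is a lifting: because $D_j={\rm pr}_j X$, for each $i$ there exists some $x^{(i)}\in X$ with $x^{(i)}_j=u_i$. This yields a profile $(x^{(1)},\ldots,x^{(n)})\in X^{n}$ whose $j$-th column is precisely $(u_1,\ldots,u_n)$. Supportiveness applied to this profile then gives
\[
f_j(u_1,\ldots,u_n) \;=\; f(x^{(1)},\ldots,x^{(n)})_j \;\in\; \{x^{(1)}_j,\ldots,x^{(n)}_j\} \;=\; \{u_1,\ldots,u_n\},
\]
establishing conservativity of $f_j$. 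Since $j$ and the tuple were arbitrary, every $f_j$ is conservative.

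The only subtle point worth flagging is the need to restrict each $f_j$ to $D_j^n$ rather than to $D^n$: if one allowed arguments from $D\setminus D_j$, conservativity would not follow because no profile in $X^n$ would witness such arguments. This is precisely the uniqueness convention fixed earlier in the paper, and without it both directions of the lemma (and the analogous earlier lemmas) would be slightly off. There is no real obstacle here—once the projection hypothesis is invoked, the argument is a one-line translation between the two definitions.
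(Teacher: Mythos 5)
Your proof is correct, and it is exactly the straightforward argument the paper has in mind when it states this lemma without proof: the easy direction is immediate from IIA, and the converse uses precisely the convention $D_j=\mathrm{pr}_j X$ to lift an arbitrary tuple in $D_j^n$ to a profile in $X^n$. Your remark about why the restriction to $D_j^n$ is essential is also the right observation and matches the paper's earlier discussion of uniqueness of the IIA decomposition.
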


Supportiveness implies Idempotency, but not vice versa.

\medskip

Prior to our result a full characterization of all impossible binary domains was obtained
under IIA + Idempotency + Non-dictatorship  in \cite{DH10a}.
They have extended their work to the non-binary case, but have obtained 
only a partial characterization, and only under the IIA + 
Supportiveness + Non-dictatorship conditions.

\medskip

In our characterization we exploit a Galois Connection
discovered by D. Geiger \cite{Geiger68}.
Geiger's duality theorem allows us to characterize the nonexistence of aggregators 
with existence of gadgets --- existential logical expressions created from $X$
and a few very basic relations, like assignment giving or unary relations. 
(Gadgets are also sometimes called conjunctive queries in the literature.)
Our main results are stated in 
Theorem \ref{nonbinaryclassthm} and Theorem \ref{gth}.
Our main contribution is not any new technology but rather pointing to a so far unexplored very broad connection.
Simplicity only works to the new connection's favor.

\medskip

Based on the algebraic view, we are also in the position to strengthen the Non-dictatorship condition 
to exclude the possibility of what experts agree are outliers such as linear subspaces.

\medskip

The theory we use here is getting increasingly familiar to computer scientists
because it gives a powerful machinery to tackle the 
well-known dichotomy conjecture of Feder and Vardy \cite{FV98} for 
Constraint Satisfaction Problems (CSP).
The hope is that the connection will allow researchers to exploit the vast material that CSP research has 
created, in proving impossibility theorems.

\section{Background}\label{background_sec}
The elegant
general combinatorial framework described in the introduction
which serves as the basis of the current paper
was laid down by E. Dokow and and R. Holzman in \cite{DH10a,DH10b}. 
We have named it ``Aggregation of Evaluations" after their titles.
In their two breakthrough results they make decisive advances towards classifying 
impossibility domains, i.e. those $X$ from which we cannot 
aggregate opinions. In particular, they completely settle the binary case. To explain 
their results we need some definitions. 

\begin{definition}
We call $X$ non-degenerate if $|{\rm pr}_{j} X| > 1$ for every $1\le j\le m$. Since the issues 
where degeneration occurs can be trivially aggregated, without loss of 
generality we can assume that $X$ is non-degenerate.
\end{definition}

\begin{definition}[Blockedness graph and MIPE]\label{blockgraph} 
The blockedness graph for domain $X\subseteq\{0,1\}^{m}$ 
is the following directed graph on the vertex set $V = [m] \times \{0,1\}$:
There is a directed edge from $(k,\sigma)\in V$ to $(\ell,\rho)\in V$ where $k \neq \ell$ if and only if there are:
(i.) a subset $S\subseteq [m]$ such that $k,\ell \in S$ and (ii.) a (partial-)evaluation $u: S\rightarrow \{0,1\}$ 
with $u_{k} = \sigma$ and $u_{\ell} = \neg \rho$ such that 
there is no extension of $u$ to any full evaluation $x$ in $X$, but if we flip any bit 
of $u$ then the resulting partial evaluation extends to some element of $X$.
The above partial assignment $u$ is called a MIPE (minimally infeasible partial evaluation). 
\end{definition}

\begin{definition}[Total blockedness]\label{blockdef} 
Domain $X\subseteq\{0,1\}^{m}$ has the total blockedness condition
if and only if the blockedness graph is strongly connected.
\end{definition}

A result leading to \cite{DH10a} was that of
Nehring and Puppe \cite{NP02}. They have obtained a complete classification of 
binary impossibility domains, when a monotonicity condition is added to the usual conditions.
An aggregator is said to be monotone if for every issue $j$
in any situation the aggregate position on issue $j$ does not change if a 
voter decides to switch his/her position on the $j^{\rm th}$ issue to the current aggregate position.


\begin{theorem}[Nehring and Puppe \cite{NP02}]
A non-degenerate $X\subseteq\{0,1\}^{m}$ is an impossibility domain with respect to IIA + Idempotency + Monotonicity + Non-dictatorship
if and only if $X$ is totally blocked. 
\end{theorem}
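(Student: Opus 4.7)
I will prove both directions through the notion of \emph{decisive coalitions}. Say $C\subseteq[n]$ is \emph{decisive for} $(k,\sigma)$ if $f_{k}(v_{1},\ldots,v_{n})=\sigma$ whenever $v_{i}=\sigma$ for every $i\in C$. By IIA this property depends only on $f_{k}$; idempotency makes $[n]$ decisive for every $(j,\sigma)$; and monotonicity makes each family of decisive coalitions upward closed under inclusion. The backbone of the proof will be a \emph{contagion lemma} that identifies the edges of the blockedness graph with set-theoretic inclusions between decisive-coalition families.

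\textbf{Contagion lemma.} I would first prove: if the blockedness graph contains the edge $(k,\sigma)\to(\ell,\rho)$ and $C$ is decisive for $(k,\sigma)$, then $C$ is decisive for $(\ell,\rho)$. Fix a MIPE witness $u\colon S\to\{0,1\}$ with $u_{k}=\sigma$ and $u_{\ell}=\neg\rho$. By minimal infeasibility, flipping bit $\ell$ of $u$ yields a partial evaluation extending to some $y\in X$ (so $y_{k}=\sigma$, $y_{\ell}=\rho$, and $y_{j}=u_{j}$ on $S\setminus\{k,\ell\}$), and flipping bit $k$ extends to some $z\in X$ (with $z_{k}=\neg\sigma$, $z_{\ell}=\neg\rho$, and $z_{j}=u_{j}$ elsewhere on $S$). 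Now take the profile in which voters in $C$ cast $y$ and voters outside $C$ cast $z$: decisiveness of $C$ gives $f_{k}=\sigma$, idempotency gives $f_{j}=u_{j}$ for each $j\in S\setminus\{k,\ell\}$, and if $f_{\ell}$ returned $\neg\rho$ then the aggregate restricted to $S$ would coincide with $u$, contradicting infeasibility of $u$. Hence $f_{\ell}=\rho$ on this profile, and monotonicity promotes this to every profile in which $C$ unanimously reports $\rho$ on issue $\ell$.

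\textbf{Hard direction.} Iterating the lemma along directed paths in the blockedness graph, total blockedness implies that any coalition decisive for a single $(k,\sigma)$ is decisive for every $(\ell,\rho)$. Starting from the trivially decisive coalition $[n]$, I would then run the standard Arrovian coalition-contraction: take a minimal decisive coalition $C$; if $|C|\ge 2$, partition $C$ into nonempty $A$ and $B$ and use a MIPE-based separating profile to force one of $A,B$ to inherit decisiveness on some issue-position, contradicting minimality. The remaining singleton decisive coalition is precisely a dictator, contradicting non-dictatorship.

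\textbf{Easy direction and main obstacle.} Conversely, if the blockedness graph is not strongly connected, fix a sink strongly connected component and let $W\subsetneq V$ be its vertex set, so that no edge of the blockedness graph leaves $W$. I would construct a non-dictatorial monotone idempotent IIA aggregator by assigning each coordinate $f_{j}$ a quota rule whose threshold is determined by which of $(j,0),(j,1)$ lie in $W$; read as the contrapositive of contagion, the closedness of $W$ under outgoing edges guarantees that the aggregate can never realise a MIPE, so $f$ lands in $X$. The main technical obstacle on the hard side is the profile engineering in the contagion lemma---simultaneously invoking decisiveness on $k$, idempotency off $\{k,\ell\}$, and the infeasibility of $u$ to pin down $f_{\ell}$---while on the easy side the genuine work is calibrating the quota thresholds to the combinatorial structure of $W$ so that the resulting $f$ is at once well-defined as a map into $X$ and strictly non-dictatorial.
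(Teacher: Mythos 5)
This statement is quoted in the paper from Nehring and Puppe \cite{NP02}; the paper gives no proof of it, so there is nothing internal to compare against. Judged on its own, your architecture (decisive coalitions, a contagion lemma along blockedness edges, group contraction, and a converse construction from a non-connected blockedness graph) is the standard one, and your contagion lemma is essentially right: the profile built from the two one-bit corrections $y,z$ of the MIPE, together with idempotency on $S\setminus\{k,\ell\}$ and feasibility of the aggregate, does pin $f_{\ell}$ to $\rho$, and monotonicity (read column-wise, which IIA licenses --- note that switching a single coordinate of a ballot may leave $X$, so the column reading is the one you must use) upgrades this to decisiveness of $C$ for $(\ell,\rho)$.

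The genuine gaps are in the two remaining steps. First, the coalition-contraction step silently requires a MIPE whose support has at least three elements: splitting a minimal decisive coalition $C$ into $A$, $B$ and ``the rest'' needs three camps to be assigned three pairwise-incompatible one-bit corrections of a single MIPE, and a MIPE of size $2$ only provides two. It is not obvious on its face that a totally blocked domain has such a MIPE; if every MIPE had size at most $2$, then $X$ would be definable by $2$-clauses and closed under the ternary majority operation, and one must argue (as the paper in effect does in its exclusion of ${\rm Maj}_{3}$, via the lemma that a majority polymorphism forces all MIPEs to have length at most two and that this contradicts strong connectivity) that this is incompatible with total blockedness. Without that argument your contraction step can fail to get started. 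Second, the possibility direction is not yet a proof: the assertion that quota rules keyed to a sink strongly connected component $W$ make ``the aggregate never realise a MIPE'' is precisely the statement that needs proof, and no calibration of thresholds is given, nor a verification of non-dictatorship; the known constructions (Nehring--Puppe's committee/winning-coalition characterization, or Dokow--Holzman's two-voter constructions exploiting a pair of vertices with no directed path between them) require a genuinely more careful case analysis than ``thresholds determined by which of $(j,0),(j,1)$ lie in $W$.'' As it stands, both directions rest on unproved combinatorial claims, so the proposal is a correct skeleton of the classical argument rather than a complete proof.
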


The complete characterization of binary evaluations without the monotonicity condition 
was finally given by E. Dokow and R. Holzman:

\begin{theorem}[E. Dokow, R. Holzman \cite{DH10a}] \label{binarycase}
Let $X\subseteq \{0,1\}^m$, non degenerate. Then $X$ is an impossibility domain with respect to 
IIA + Idempotency + Non-dictatorship
if and only if $X$ is totally blocked and is not an affine subspace.
\end{theorem}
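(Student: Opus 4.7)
The plan is to treat aggregators as coordinate-wise polymorphisms of $X$ and use Geiger's Galois connection as the main technical tool. The two directions are treated separately; the substantive one is the ``only if''.

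\emph{Possibility direction.} If $X$ is an affine subspace $v_0+W$ of $\mathbb{F}_2^m$, the ternary minority $g(x,y,z)=x\oplus y\oplus z$ preserves $X$ coordinate-wise and is idempotent but not a projection, so taking every $f_j=g$ yields a non-dictatorial idempotent IIA aggregator. If $X$ is not totally blocked, the SCCs of the blockedness graph partition $[m]\times\{0,1\}$ into at least two communicating classes, and one checks that no MIPE can mix positions from different classes. I would use this to realise $X$ as a subset of a Cartesian-like join of its projections onto the two sides of the partition, which then lets me stitch two different dictators (one per side) into a single non-dictatorial aggregator.

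\emph{Impossibility direction.} Assume $X$ is totally blocked and not affine, and fix a non-dictatorial idempotent IIA aggregator $f=(f_1,\dots,f_m)$; I aim for a contradiction. \emph{Step (i):} for each blockedness edge $(k,\sigma)\to(\ell,\rho)$ witnessed by a MIPE $u\colon S\to\{0,1\}$, I build the $|S|$-voter profile whose $t$-th row extends the single-bit flip $u^{(t)}$ of $u$; the output of $f$ must lie in $X$ and hence cannot coincide with $u$ on $S$, which pins down an algebraic relation between $f_k$ and $f_\ell$. Iterating this along paths of the strongly connected blockedness graph forces all $f_j$ to coincide up to a global $\{0,1\}$-renaming of arguments and output, so $f$ is essentially a single idempotent Boolean function $g$ of arity $n$. \emph{Step (ii):} by Post's lattice, if $g$ is not a projection it generates, via composition and variable identification, at least one of ternary minority, ternary majority, $\min$, or $\max$ (and if $g$ is a projection, total blockedness forces the projection index to agree across coordinates, making $f$ a dictator). \emph{Step (iii):} minority would force $X$ to be affine, excluded by hypothesis; via Geiger's Galois connection a majority polymorphism makes $X$ bijunctive ($2$-SAT definable from itself plus singletons), while a $\min$ or $\max$ polymorphism makes $X$ Horn- or dual-Horn-definable. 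In each of the surviving cases I would derive a proper ``sink'' in the blockedness graph, exploiting the directional character of implications in the corresponding clauses, contradicting total blockedness.

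\textbf{Main obstacle.} The crux is Step (iii): converting the coarse algebraic statement ``$X$ has a majority (respectively $\min$, $\max$) polymorphism'' into a concrete combinatorial obstruction on the blockedness graph. This is where a careful case analysis through the remaining Boolean clones seems unavoidable, and where one must exhibit explicit MIPEs certifying the missing edges. By contrast, the profile construction in Step (i) is technical but routine once the right MIPE is isolated, and Step (ii) is a standard invocation of Post's classification of idempotent Boolean clones.
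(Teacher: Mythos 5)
Your strategy coincides with the paper's (total blockedness collapses the coordinate functions to a single idempotent Boolean polymorphism, then a Post/Schaefer-type classification, then exclusion of the non-affine clones via MIPE combinatorics), but there is a genuine gap exactly where you place your ``main obstacle'': Step (iii) is not a routine verification to be deferred, it is the substantive content of the impossibility direction, and you do not carry it out. The paper's argument there is concrete and short, but it has to be made: if $\vee$ (resp.\ $\wedge$) preserves $X$, then every MIPE has at most one coordinate set to $1$ (resp.\ to $0$) --- otherwise applying $\vee$ to the two feasible one-bit corrections of the MIPE reproduces the forbidden pattern inside $X$ --- hence no vertex $(k,1)$ has an edge to any $(\ell,0)$ and the blockedness graph cannot be strongly connected; if ${\rm Maj}_{3}$ preserves $X$, then every MIPE has size at most two (Lemma \ref{atmost2}), so every edge of the blockedness graph is a forcing implication, and strong connectivity yields a directed path from $(1,x_{1})$ to $(1,1-x_{1})$ forcing $x_{1}=1-x_{1}$, i.e.\ $X=\emptyset$, contradicting non-degeneracy. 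Note in particular that the bijunctive (majority) case does not produce a ``sink'' in the blockedness graph, so the uniform plan you announce (``derive a proper sink in each surviving case'') would not go through as stated for that case; the obstruction there is a forcing cycle, not an asymmetry of edge directions.

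Two further points. In Step (i), what total blockedness actually buys is literal equality of the components, not merely equality ``up to a global $\{0,1\}$-renaming'': chaining the MIPE gadgets $E_{k,\ell,\epsilon,\epsilon'}$ along directed paths from $(k,0)$ to $(\ell,0)$ and from $(k,1)$ to $(\ell,1)$ expresses the equality relation as an $X^{+}$-gadget (Lemma \ref{blockedlemma}), and the equality gadget lemma (Lemma \ref{main}) then gives $f_{k}=f_{\ell}$ outright (Lemma \ref{eql}); with only your weaker ``renaming'' conclusion, the clone analysis of Step (ii) would additionally have to handle dual pairs, which you do not address. Finally, in the possibility direction your claim that ``no MIPE can mix positions from different classes'' when the graph is not strongly connected is false: MIPEs routinely involve coordinates whose vertices lie in different strongly connected components --- non-strong-connectivity only restricts the directions of the induced edges --- so $X$ is not a Cartesian-like join of the two sides, and stitching two dictators requires the genuinely nontrivial construction of Dokow and Holzman (or, in the paper's route to Theorem \ref{binaryclassification}, the multi-sorted Bulatov--Jeavons dichotomy), not the product argument you sketch. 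The affine subcase of the possibility direction (minority $x\oplus y\oplus z$ preserves an affine $X$) is fine and is the same as the paper's.
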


Dokow and Holzman has also made significant progress for general $D$ \cite{DH10b}.
In their near-characterization they use a
generalization of total blockedness for non-binary predicates, which is similar but more intricate than the binary
notion, and we postpone the definition to Section \ref{totblsec} (Definition \ref{totbl}).

\medskip

We also need to define a condition on a relation $X$, which is
called  2-decomposability in the universal algebra literature.
We have adopted this term rather than the ``not multiply constrained''
expression for the same concept in \cite{DH10b}.

\begin{definition} \label{defmultiplyconstrained}
For an $m$-ary relation $X$ on $D$ and for $1\le k < \ell \le m$, let
${\rm pr}_{k,\ell} X = \{(u_{k},u_{\ell})\mid (u_{1},\ldots,u_{m})\in X\}$.
A relation $X$ is called 2-decomposable if,
for any tuple $x = (x_{1},\ldots,x_{m})\in D^{m}$ we have $x\in X$ if and only if $(x_{k},x_{\ell})\in {\rm pr}_{k,\ell} X$
for all $1\le k < \ell \le m$. 
\end{definition}

\begin{remark}
Coincidentally, in \cite{DH10b} the `2-decomposable' expression also occurs, but 
with a very different meaning. 
\end{remark}

\begin{theorem}[E. Dokow, R. Holzman \cite{DH10b}]\label{generalcase}
Let $X\subseteq D^{m}$, non-degenerate and non-binary (there is a $1\le j\le m$ 
such that $|{\rm pr}_{j} X|>2$). If $X$ is totally blocked and not 2-decomposable
(see Definition \ref{defmultiplyconstrained}) then $X$ is an impossibility domain
with respect to IIA + Supportiveness + Non-dictatorship.
\end{theorem}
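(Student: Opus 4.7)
Under IIA, an aggregator $f: X^n \to X$ is the same data as an $m$-tuple $(f_1, \ldots, f_m)$ with $f_j: D_j^n \to D_j$ acting coordinate-wise and preserving the relation $X$ --- that is, a polymorphism of $X$. Supportiveness translates to each $f_j$ being conservative, and non-dictatorship to the tuple not being uniformly the same coordinate-projection. The plan is to prove the contrapositive: assume a non-dictatorial conservative polymorphism $f = (f_1, \ldots, f_m)$ of some arity $n\ge 2$ exists, and deduce that either $X$ fails total blockedness or $X$ is 2-decomposable.

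\textbf{Propagation via total blockedness.} I would first establish a propagation lemma: along any edge of the (non-binary) blockedness graph linking coordinates $k$ and $\ell$ via a minimally infeasible partial evaluation $u: S \to D$, the operations $f_k$ and $f_\ell$ must agree on which voter coordinate they ``follow''. The construction is to build an auxiliary profile whose $n$ rows are extensions of $u$, each obtained by modifying one coordinate of $S$ and extending to an element of $X$ (such extensions exist precisely by the MIPE definition); the polymorphism property applied at positions $k$ and $\ell$, combined with the minimality of $u$ and the conservativeness of $f$, forces the voter-index chosen by $f_k$ on its column to coincide with the one chosen by $f_\ell$. Strong connectivity of the blockedness graph then upgrades this local agreement globally, yielding a dichotomy: either every $f_j$ acts as a common coordinate projection $\pi_i$ (a dictatorship, contradicting the non-dictatorship assumption), or no $f_j$ is any coordinate projection at all.

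\textbf{Using the failure of 2-decomposability.} In the remaining case each $f_j$ is a non-projection conservative operation. Since $X$ is not 2-decomposable, pick $x^* \in D^m \setminus X$ whose every pair-projection $(x^*_k, x^*_\ell)$ already lies in ${\rm pr}_{k,\ell} X$, and choose witnessing tuples $y^{(k,\ell)} \in X$ realizing these pair values. The target is to assemble a profile $(x^{(1)}, \ldots, x^{(n)}) \in X^n$ whose coordinate-$j$ column feeds $f_j$ inputs that evaluate to $x^*_j$, so that $f(x^{(1)}, \ldots, x^{(n)}) = x^*$, contradicting that $f$ is a polymorphism of $X$. The non-projectivity of each conservative $f_j$ supplies, on the 2-element slices of $D_j$, a nontrivial operation (majority, minority, or semilattice-type) that lets us realize any element of ${\rm pr}_j X$ as an output from inputs that can be extracted from the columns of the $y^{(k,\ell)}$'s.

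\textbf{Where I expect the obstacle.} The genuinely hard point is the row-assembly above: each of the $n$ rows must itself be a full element of $X$, not merely a vector whose pair-projections sit in the ${\rm pr}_{k,\ell} X$'s, and there is no 2-decomposability to exploit on the rows either. I would therefore attempt an iterative construction starting from the pair-witnesses $y^{(k,\ell)}$ and applying $f$ to successively larger collections, trading the arity $n$ for extra room, so that after finitely many polymorphism steps the output is pushed to $x^*$ one coordinate at a time without disturbing coordinates already fixed. This is where the structure of conservative clones on each $D_j$ really has to be used, and where the Galois-dual gadget viewpoint signalled in the introduction pays for itself: the gadget one seeks is essentially a proof that $x^*$ is \emph{definable} by conjunctive queries from the relations $X$ and its pair-projections, and total blockedness plus non-2-decomposability together provide the building blocks for that definition.
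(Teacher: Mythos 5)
There is a genuine gap, and in fact two. First, your propagation step misstates what total blockedness can give you. A supportive (conservative) operation $f_j$ restricted to a two-element subset $\{u,v\}\subseteq D_j$ is an \emph{arbitrary idempotent Boolean function} --- majority, parity, a monotone threshold, etc. --- not necessarily a projection, so ``the voter-index chosen by $f_k$ on its column'' is simply not defined, and the asserted dichotomy (all $f_j$ a common projection versus no $f_j$ a projection) does not follow from the MIPE construction. What the MIPE/strong-connectivity argument actually yields is that all the Boolean translations $W_j^{uv}$ of the pair-restrictions coincide as functions (this is Proposition 1 of \cite{DH10b}, quoted as Lemma \ref{2neutrallemma} in Section \ref{suppnonbin}); the entire difficulty of the theorem is then to rule out a \emph{non-dictatorial} common $W$, and, when $W$ is a dictatorship, to upgrade the resulting 2-dictatorial $f$ to a genuine dictatorship (Proposition 5 of \cite{DH10b}, Lemma \ref{2dictatorshiplemma} here). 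Your sketch collapses exactly this part into an unproved claim.

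Second, the step where non-2-decomposability is supposed to bite --- assembling $n$ rows of $X$ whose columns force $f$ to output your chosen $x^*\notin X$ --- is explicitly left as a plan (``I would therefore attempt an iterative construction''), and the plan as written cannot be completed, because your contrapositive never uses the non-binarity hypothesis. The even-parity subspace $X=\{x\in\{0,1\}^3 : x_1+x_2+x_3=0 \bmod 2\}$ is non-degenerate, totally blocked, and not 2-decomposable ($111\notin X$ although every pair projection is all of $\{0,1\}^2$), yet $f_j(u_1,u_2,u_3)=u_1\oplus u_2\oplus u_3$ is a non-dictatorial IIA aggregator that is supportive (on a two-element $D_j$, supportiveness reduces to idempotency). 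So any argument following your outline verbatim would ``prove'' a false statement; a correct proof must invoke the assumption that some $|{\rm pr}_j X|>2$ precisely at the point where the non-dictatorial common $W$ (e.g.\ parity) is excluded --- this is where the structure of conservative operations on a set of size at least three, as in Lemma \ref{bulatovconservative} or the corresponding analysis in \cite{DH10b}, has to enter. Until that step is supplied, the proposal is an outline of the right general shape (pair-restriction bookkeeping plus propagation along the blockedness graph) rather than a proof.
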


\begin{theorem}[E. Dokow, R. Holzman \cite{DH10b}] \label{nottotallythm}
Let $X\subseteq D^{m}$. If $X$ is non-degenerate and not totally 
blocked then $X$ is a possibility domain
with respect to IIA + Supportiveness + Non-dictatorship.
\end{theorem}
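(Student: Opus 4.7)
The plan is to construct, for $n = 2$, an aggregator $f = (f_1,\ldots,f_m) : X^2 \to X$ that is supportive, satisfies IIA, and is not a projection to either voter. Since $X$ is not totally blocked, the (generalized) blockedness graph referenced in Section \ref{totblsec} is not strongly connected, so its condensation is a DAG with more than one node. I would pick a proper nonempty ``sink-closed'' subset $S$ of vertices $(j, \sigma) \in [m] \times D$ (a union of a nontrivial down-set of SCCs in the condensation), so that no edge of the blockedness graph leaves $S$.

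The component $f_j : D_j^2 \to D_j$ is then defined as a binary conservative operation whose winner between the two voters is dictated by $S$. For $u \neq v$, set $f_j(u,v)$ to be whichever of $u, v$ satisfies $(j, \cdot) \in S$ when exactly one of them does, and default to the first argument otherwise; on the diagonal $f_j(u,u) = u$ automatically. Supportiveness and IIA hold by construction. Properness and non-emptiness of $S$, combined with the non-degeneracy hypothesis (each $|D_j| > 1$, so every issue carries at least two vertices in the blockedness graph), let us arrange that on some issues $f_j$ prefers voter $1$ and on others voter $2$, ruling out dictatorship.

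The main step, and the main obstacle, is verifying $f(x, y) \in X$ for every $x, y \in X$. Suppose for contradiction that $z := f(x, y) \notin X$. By the MIPE framework of Definition \ref{blockgraph} and its non-binary analogue Definition \ref{totbl}, some restriction of $z$ to a subset of issues is a MIPE, producing a directed edge $(k, z_k) \to (\ell, z_\ell)$ in the blockedness graph. By construction of $f_k$, either $(k, z_k) \in S$, or $z_k$ was chosen by the fallback rule under one of the two ``both/neither in $S$'' configurations, and similarly for $f_\ell$. A short case analysis on the origins of $z_k, z_\ell$ inside $\{x_k, y_k\}$ and $\{x_\ell, y_\ell\}$ forces the edge to start inside $S$ and end outside it, contradicting the sink-closedness of $S$.

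The delicate aspect lies entirely in the non-binary version: in Definition \ref{totbl} the vertex labels and admissible edge-witnesses are richer than in the binary graph of Definition \ref{blockgraph}, so the fallback rule for $f_j$ on pairs where both or neither position lies in $S$ must be tuned carefully — perhaps by refining the choice of $S$ to respect an extra layer of the generalized MIPE structure — so that the case analysis above rules out every configuration that could contribute an $S$-crossing edge. Once the fallback rule and $S$ are chosen compatibly, the contradiction is immediate and the theorem follows with $n=2$.
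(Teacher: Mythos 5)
The paper itself does not prove Theorem~\ref{nottotallythm}; it is quoted from \cite{DH10b} as background, so your attempt has to stand on its own. Your overall strategy is the right one and is essentially the known Dokow--Holzman route: two voters, a nonempty proper subset of the blockedness graph closed under outgoing edges, a conservative issue-by-issue decision rule governed by that subset, and a contradiction obtained by extracting a MIPE from an allegedly infeasible output and reading it as an edge leaving the closed set. The genuine gap is that your construction is set up over the wrong graph, and precisely at the point where the non-binary difficulty lives. In Definition~\ref{totbl} the vertices are \emph{ordered pairs} $\sigma\sigma'_{j}$ of distinct positions at issue $j$, not single positions $(j,\sigma)\in[m]\times D$ as in the binary Definition~\ref{blockgraph}. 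With a sink-closed set $S$ of position-vertices plus a ``default to voter $1$ when both or neither of $x_j,y_j$ lies in $S$'' rule, the feasibility verification does not go through: the edge you extract from a MIPE connects pair-vertices, and in the fallback configurations nothing forces it to start inside and end outside $S$, so sink-closedness yields no contradiction. You concede this yourself (``must be tuned carefully, perhaps by refining $S$''), but that tuning \emph{is} the theorem in the non-binary case; it cannot be deferred as a detail, and as written the proof is incomplete.

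What a complete argument needs is to make the decision rule a function of the pair, not of positions: take $W$ a nonempty proper out-closed subset of the vertex set of Definition~\ref{totbl} (it exists since the graph is not strongly connected), and for $a\neq b$ in $D_j$ let $f_j(a,b)$ be $a$ or $b$ according to whether the vertex corresponding to the ordered pair $(a,b)$ at issue $j$ lies in $W$; this removes your ``both/neither'' cases entirely. Then, given $x,y\in X$ with $z=f(x,y)\notin X$, set $B_j=\{x_j,y_j\}$ and work inside the binary domain $X_B=X\cap\prod_j B_j$, which contains $x$ and $y$; hence every single coordinate of $z$ extends within $X_B$, so any MIPE of $X_B$ contained in $z$ has support of size at least two, and clause~3 of Definition~\ref{totbl} translates the resulting binary edges into edges of the non-binary graph between the relevant pair-vertices. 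The orientation bookkeeping on those pair-vertices (which of the two orderings of $B_k$, $B_\ell$ lies in $W$, given how $z_k$ and $z_\ell$ were chosen) is what produces an edge exiting $W$ and the contradiction; your position-based $S$ cannot express this bookkeeping, which is why your ``short case analysis'' does not close. Finally, non-dictatorship should also be checked at the pair level: since $W$ is proper and nonempty and $D_j={\rm pr}_j X$ with $|D_j|\ge 2$ by non-degeneracy, both a $W$-pair and a non-$W$-pair are realized by genuine profiles in $X^2$, so $f$ agrees with neither projection.
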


\section{Our Results}\label{our_results}

In spite of the impressive advances due to Dokow and Holzman, 
important questions have remained open:
1. Complete the characterization of the Supportive case, when $|D|>3$.
(In \cite{DH10b} the case $|D|=3$ is resolved.) 2. Settle the $|D|>2$ case with the Idempotency condition.

We take the inspiration from the algebraic theory and combine it with 
ideas from \cite{DH10b} to get a full characterization of the non-binary 
case, with the Supportiveness and the Idempotency conditions.

\begin{theorem} \label{nonbinaryclassthm}
Let $X\subseteq D^{m}$, non-degenerate and non-binary. 
If $X$ is totally blocked then $X$ is an impossibility 
domain with respect to IIA + Supportiveness (Idempotent) + Non-dictatorship
if and only if there is no Supportive (Idempotent) non-dictatorial IIA aggregator 
with at most three ($|D|$) voters.
%
\end{theorem}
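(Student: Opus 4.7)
The forward implication is trivial: if $X$ is an impossibility domain, no aggregator exists of any arity, so none of arity at most three (resp.\ $|D|$). The substantive content is the converse, which I would recast as an \emph{arity reduction} statement: given any supportive (resp.\ idempotent) non-dictatorial IIA aggregator $f = (f_{1},\ldots,f_{m})$ of arity $n$ for $X$, one can produce such an aggregator of arity at most $3$ (resp.\ $|D|$). Note that \emph{total blockedness} of $X$ is assumed throughout, and should play a decisive role in ruling out ``wide'' non-projection aggregators that do not collapse to low arity.

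My plan is to exploit the algebraic picture set up in the introduction via Geiger's Galois connection. The collection of all IIA aggregators for $X$, across all arities, is closed under coordinate-wise composition and contains the coordinate projections; so it forms a (per-issue, multi-sorted) clone $\mathrm{Pol}(X)$. Non-dictatorship says $\mathrm{Pol}(X)$ contains an operation that is not a projection; supportiveness says each component $f_{j}$ is conservative on $D_{j}$. For the supportive case, I would first work pair by pair: for any $\{a,b\} \subseteq D_{j}$, the restriction $f_{j}|_{\{a,b\}^{n}}$ is a conservative Boolean function. By Post's lattice, a conservative Boolean clone that is not the clone of projections is already generated by an operation of arity at most three (majority, minority, $\wedge$, or $\vee$). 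So at every issue $j$ and every pair $\{a,b\}$, a ternary witness of non-projectivity exists. The remaining task is to globalize these witnesses: pick a single ternary $g_{j}$ per issue so that $(g_{1},\ldots,g_{m})$ is simultaneously supportive, non-dictatorial, and an IIA aggregator for $X$.

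The globalization step is where total blockedness does the work: strong connectivity of the MIPE graph propagates algebraic constraints between issues, forcing the per-pair ``types'' to be consistent enough for a single ternary global choice to exist; without this, different issues could demand incompatible local clone structures. For the idempotent case the strategy is analogous but technically more involved, since idempotent component operations need not be conservative and so the relevant Boolean restrictions can live in larger clones. One must argue that an idempotent aggregator's behavior is already determined by its restriction to inputs that take at most $|D|$ distinct values, which is what gives the sharper bound $|D|$ instead of a larger one. The main obstacle, as I foresee it, is precisely this globalization: reconciling per-issue, per-pair algebraic witnesses into a single low-arity aggregator, and in the idempotent case, justifying the tight $|D|$ bound. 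I expect this to be carried out by combining total blockedness with the gadget/conjunctive-query machinery of Geiger's duality, so that a hypothetical ``large-arity only'' aggregator would yield, via explicit gadget substitution, a small-arity one whose compatibility with $X$ is automatic.
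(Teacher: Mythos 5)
Your plan for the supportive case has a genuine gap at its center. The claim that ``at every issue $j$ and every pair $\{a,b\}$, a ternary witness of non-projectivity exists'' does not follow from non-dictatorship: a supportive non-dictatorial $f$ could a priori have \emph{every} restriction $f_j|_{\{u,v\}}$ equal to a projection, just onto different coordinates for different triples $(j,u,v)$, and then Post's lattice gives you nothing. Ruling this out is exactly where total blockedness must enter, and it enters through two specific facts of Dokow--Holzman that your sketch never isolates: total blockedness forces all the Boolean restrictions $W_j^{uv}$ to be one and the same function (Lemma~\ref{2neutrallemma}), and if that common function is a projection then $f$ is a dictatorship (Lemma~\ref{2dictatorshiplemma}). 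Moreover, your ``globalization'' step is mis-framed: you do not need to choose a ternary $g_j$ per issue and then reconcile them, because once a single non-projective restriction $W_j^{uv}$ exists, the Post/Schaefer argument gives a term of arity at most $3$ over $W_j^{uv}$ that is $\wedge$, $\vee$, majority or minority, and applying the \emph{same} term to $f$ itself yields an operation in the clone generated by $f$, which is automatically a supportive IIA aggregator of $X$ (term operations of an aggregator are aggregators) and is non-dictatorial because its $j$-th component restricted to $\{u,v\}$ is not a projection. So the per-issue consistency problem you foresee is not the obstacle; the obstacle is producing one non-projective pair restriction, and as written your argument does not do this. The paper instead proves the contrapositive: assuming no non-dictatorial supportive aggregator with at most three voters, a minimal-arity variable-identification argument (Lemma~\ref{bulatovconservative}, a conservative multi-sorted Schaefer/Bulatov-type statement) shows every $f_a|_{\{u,v\}}$ is a dictatorship, and then Lemmas~\ref{2neutrallemma} and~\ref{2dictatorshiplemma} force $f$ to be a dictatorship.

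The idempotent case is also essentially missing. Saying that ``an idempotent aggregator's behavior is determined by its restriction to inputs with at most $|D|$ distinct values'' is not an argument and is not how the bound $|D|$ arises. The paper's proof (Lemma~\ref{generalidempotent}) rests on a minimality trick you would need: take an idempotent non-dictatorial aggregator of \emph{minimal} arity $n\ge |D|+1$; for any fixed input, pigeonhole gives two voters with equal positions on issue $j$, and identifying them produces an idempotent aggregator of arity $n-1$, which by minimality is a dictatorship, hence supportive, and this pins the value of $f_j$ on that input to one of the voters' positions. Hence the minimal counterexample is in fact supportive, and the supportive theorem (three voters, Theorem~\ref{nonbinaryclassthm2}) plus the fact that supportive implies idempotent and $|D|\ge 3$ yields the contradiction. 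Without this reduction to the supportive case, your proposal gives no route to the stated arity bound $|D|$.
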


The characterization (to our knowledge for the first time)
allows for an algorithmic determination if $X$ is an impossibility domain with respect to the Supportive (Idempotent) conditions.

\medskip

{\bf Gadgets.} 
It turns out that we can characterize impossibility domains in a dual way,
in terms of a set of {\em gadgets}. 
Our new characterization, that provides {\em witnesses} to impossibility, has not been known earlier in the voting theory context. Gadgets (or 
{\em conjunctive queries}) are expressions 
used in reductions between constraint 
satisfaction problems when we translate instances locally (term by term). They are existentially quantified conjuncts of clauses, where each clause 
is a relation. 
The relations in this interpretation are viewed as Boolean-valued functions on not necessarily Boolean {\em variables}. The syntax of a gadget is:
\[
R(\vec{x})= \exists\vec{y}:\; S_{1}(\vec{x},\vec{y})\wedge \ldots \wedge S_{k}(\vec{x},\vec{y})\;\;\;\;\;\mbox{(each $S_{i}$ in effect depends only on subsets of $\vec{x},\vec{y}$)}
\]
Their purpose is to express new relations from a given set of relations.

\medskip

A theorem of D. Geiger \cite{Geiger68}
establishes a connection between the nonexistence of aggregators 
for a set $\Gamma$ of relations
and the existence of $\Gamma$-gadgets (i.e. in which all relations
are from $\Gamma$ or the '=' relation). 
This theorem serves as the backbone of the algebraic theory of constraint
satisfaction problems
developed by P. Jeavons, A. A. Bulatov, A. A. Krokhin, D. A. Cohen, M. Cooper, 
M. Gyssens  \cite{JeavonsCG97,Jeavons98,JeavonsCC98,BulatovJK05} and several other researchers.

\medskip

\begin{figure}[htb] 
\begin{center}
\includegraphics[scale=0.5]{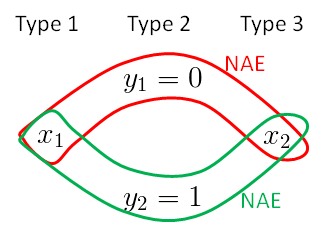}
\caption{The multi-sorted Not All Equal (NAE) gadget to express the 
multi-sorted inequality relation between $x_{1}$ (type 1) and $x_{2}$ (type 3). 
Variables $y_{1}$ and $y_{2}$ are existentially bound (and they are also
set to constant).
This gadget is central in our proof of Arrow's theorem. \label{arrowinequalitygadget}}
\end{center}
\end{figure}

This is our starting point too. 
In universal algebra (IIA) aggregators are called {\em polymorphisms}. 
They differ from the aggregators in our introduction 
in that they are single-sorted:
$f_{1}=f_{2}=\ldots= f_{m}$. {\em Multi-sorted} polymorphisms, i.e.
when the $f_{j}$s can be different, have also been studied in the
algebraic literature \cite{BulatovJ03, Bulatov11}.
When dealing with multi-sorted polymorphisms, all relations and  
gadgets must be multi-sorted as well.
In the multi-sorted world first we must declare a type
for every variable. 
The typing of the variables serves 
the same purpose as in programming languages: when a function is called, the types
of the called variables must match
with the types in the function declaration. 
In the same spirit, every multi-sorted $l$-ary relation $R$ we construct (or given to us) must 
come with a sequence of $l$ (not necessarily different) types. We may call this type declaration.
The typing must be consistent: the the types of the variables involved in any occurrence of a relation in a gadget must
match with the type declaration.

\medskip

In this article we prove a multi-sorted version of Geiger's theorem (see Section \ref{algsec}). This allows us to prove impossibility results 
simply by producing sets of gadgets.
If we want to prove that $X\subseteq \prod_{j}^{m} D_{j}$ is an impossibility domain, we need to construct $X^{+}$-gadgets 
for a certain ``complete'' set of (multi-sorted) relations. Here the `$+$' in the upper index refers to 
the permission to use assignment-giving relations (i.e. `$x=a$,' where $a\in D_{j}$, for type $j$ variables) in our gadgets in addition to $X$.
In all cases $X$ must be viewed as a multi-sorted 
relation, all components (arguments) having different types.
The Idempotency constraint is encoded in the `$+$'  of $X^{+}$.
Alternatively, the Supportiveness condition in the gadget-reformulation translates to the permission of
using arbitrary unary relations in the gadgets. When we allow the latter, the gadget is an
 $X^{\mho}$-gadget.

\begin{theorem}\label{gadgettheorem}
For every $D$ and $m$ there is a fixed finite set ${\cal P} = {\cal P}(D,m)$ 
of multi-sorted relations such that $X\subseteq D^{m}$ is an impossibility domain with respect 
to IIA + Idempotency (Supportiveness) + Non-dictatorship if and only if
we can express every member of ${\cal P}$ 
with a gadget whose conjunct has only (appropriately multi-sorted) $X^{+}$- ( $X^{\mho}$)-clauses. 
In addition, the number of auxiliary variables is upper bounded by some explicit function $\phi(m,|D|)$ (single exponential in $m$). 
\end{theorem}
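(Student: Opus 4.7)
The proof plan rests on the multi-sorted version of Geiger's duality established in Section \ref{algsec}, which sets up a Galois correspondence between multi-sorted clones of operations and multi-sorted co-clones of relations (the latter being sets of relations closed under gadget, i.e., primitive positive, definitions). Under this correspondence, IIA aggregators of $X$ are precisely the multi-sorted polymorphisms of the relation $X$ with sorts $D_1,\ldots,D_m$; Idempotency is equivalent to preserving each singleton $\{a\}\subseteq D_j$, which is encoded by enlarging $X$ to $X^{+}$; Supportiveness is equivalent to preserving all unary subsets of each $D_j$, encoded as $X^{\mho}$; and dictatorships correspond exactly to multi-sorted projections. Hence $X$ is an impossibility domain under IIA + Idempotency (resp.\ Supportiveness) + Non-dictatorship if and only if the polymorphism clone of $X^{+}$ (resp.\ $X^{\mho}$) consists only of projections, and by Galois duality this happens if and only if $X^{+}$ (resp.\ $X^{\mho}$) gadget-defines \emph{every} multi-sorted relation on the sorts $D_1,\ldots,D_m$.

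The second stage replaces the infinite family of all such relations by a finite witness set $\mathcal{P}=\mathcal{P}(D,m)$. By a standard universal-algebraic argument, the full co-clone of all multi-sorted relations over finitely many finite sorts is finitely generated: there is an arity bound $\rho=\rho(|D|,m)$ such that every multi-sorted relation is expressible as a gadget whose clauses use only relations of arity at most $\rho$. I would take $\mathcal{P}$ to be a complete list of multi-sorted relations of arity at most $\rho$ over the relevant sort signatures. Gadget-expressibility of every $R\in\mathcal{P}$ from $X^{+}$ (resp.\ $X^{\mho}$) is then equivalent, by composition of gadgets, to pp-definability of every multi-sorted relation from $X^{+}$ (resp.\ $X^{\mho}$), which combined with the first stage yields the stated equivalence in both directions (the forward direction uses Galois to get $\mathcal{P}$ expressible, the backward direction uses generation of the full co-clone from $\mathcal{P}$).

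For the bound $\phi(m,|D|)$ on the number of auxiliary variables inside each gadget expressing a member of $\mathcal{P}$, I would invoke the indicator-problem (free-subpower) construction from the algebraic theory of CSPs: whenever a relation $R$ of arity $r\le\rho$ is pp-definable from $X^{+}$, there is a canonical gadget whose existentially quantified variables correspond to the elements of a subpower living inside $\prod_j D_j^{|D|^{r}}$, yielding at most $|D|^{O(rm)}$ auxiliary variables, which is single-exponential in $m$ once $\rho$ is fixed as a function of $|D|$ and $m$. The main obstacle of the plan is producing the multi-sorted Geiger duality itself, because the single-sorted statement must be extended to accommodate type declarations on variables and the matching compositional closure on the operation side; once that machinery is in place, the identification of $\mathcal{P}$ and the bound on gadget size are fairly direct adaptations of classical results from single-sorted clone-co-clone theory.
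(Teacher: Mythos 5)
Your proposal is correct in substance and would prove the theorem as stated, but its second half takes a different route from the paper. The first stage --- multi-sorted Geiger duality together with the translation of Idempotency/Supportiveness into preservation of singletons/unary relations (i.e.\ passing to $X^{+}$, $X^{\mho}$, as in Lemma \ref{idsup}) and of dictatorship into multi-sorted projections (Lemma \ref{translation}) --- is exactly what the paper does via Theorem \ref{Mth} and the unnumbered lemma of Section \ref{building}, so the ``main obstacle'' you flag is precisely the machinery the paper supplies. Where you diverge is the choice of ${\cal P}$: you take \emph{all} multi-sorted relations up to a generic arity bound $\rho$ and appeal to finite generation of the top co-clone, whereas the paper (Theorem \ref{gth}) exhibits an explicit, very small set --- the binary relations $R^{u,v}_{k,\ell}$ for all pairs of types and values, plus the multi-sorted NAE relation on each two-element sort --- and proves sufficiency by hand: Lemma \ref{allrel} yields all binary relations, Lemmas \ref{ne} and \ref{nae} force every component $f_{j}$ to be a dictatorship, and a short combinatorial argument synchronizes the dictator across issues. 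Note that your ``standard'' finite-generation claim, once unpacked, rests on the same projectivity facts (inequality on sorts of size at least three, NAE on two-element sorts, binary cross-sort relations to align the dictator), so it essentially reconstructs the paper's ${\cal P}$; the paper's explicit choice buys concrete, checkable gadget targets (exploited in the Arrow and pairwise-distinctness examples) and a cleaner algorithm, while your packaging is shorter and emphasizes that any bounded-arity generating family works. Two small corrections: the canonical-gadget (indicator) construction gives at most $\sum_{b}|D_{b}|^{|R|}$ auxiliary variables, roughly $m\,|D|^{|D|^{\rho}}$, not $|D|^{O(rm)}$ as you wrote --- this still meets the stated single-exponential-in-$m$ requirement, and is the same bound the paper extracts from Geiger, Jeavons and Trevisan et al.; and since ${\cal P}$ must depend only on $(D,m)$, your bounded-arity relations should be declared over sub-domains of $D$ rather than over the sorts ${\rm pr}_{j}X$, which depend on $X$ --- the same convention the paper implicitly adopts when it restates Theorem \ref{gadgettheorem} as Theorem \ref{gth}.
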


The theorem in more details is restated in Theorem \ref{gth} and proved in section \ref{specialsec}.

\medskip

Our characterizations
allow us to compute in finite 
time if $X$ is an impossibility domain (both in the Idempotent and in Supportive cases) in two different ways:
either by checking aggregators up to a certain
number of arguments (Theorem \ref{nonbinaryclassthm}), or by checking gadgets up to a certain size (Theorem \ref{gadgettheorem}).

\medskip

One of the main applications of our gadget characterization is that we can prove the impossibility of a domain $X$ by merely
presenting a few gadgets (rather than trying to exclude a large set of aggregators). 
In some cases we can tailor the gadgets to the specific problem, exploiting symmetries. A combined approach where we exclude most aggregators by gadgets
while the rest we treat directly is also possible.
Using gadgets we can show that 
the Pairwise Distinctness relation defined 
by 
\[
\{(u_{1},\ldots,u_{m})\in D^{m}\mid\, u_{k}\neq u_{\ell}\,
(1\le k<\ell \le m)\}
\] 
is an impossibility domain when $|D|>m\ge 2$ and when $|D| = m\ge 3$ 
with respect to the IIA + Idempotency + Non-dictatorship conditions.
In \cite{DH10b} this is proven only under the IIA + Supportiveness + Non-dictatorship conditions, and
\cite{FalikF11} proves the above only when $|D|=m > 2$.

\medskip

The possibility notion we have discussed is considered too generous by some authors.
Several further restrictions were studied e.g. \cite{Kalai02}.
Dokow and Holzman, for instance, question if linear subspaces of $\{0,1\}^m$ that emerge in Theorem \ref{binarycase}
should really be considered possibility domains \cite{DH10a}.
In Section \ref{degreedemocracy} we propose a new aggregator class
that strengthens the notion of Non-dictatorship. 
Our new definition directly comes from the algebraic theory and has many desirable properties.


\medskip

Although the case of binary evaluations was completely settled by Dokow and Holzman,
universal algebra gives a tad more refined form of their theorem:

\begin{theorem} \label{binaryclassification}
Let $X\subseteq \{0,1\}^m$ non-degenerate. Then $X$ is an impossibility domain with respect to 
IIA + Idempotency + Non-dictatorship
if and only if $X$ is totally blocked and it is not an affine subspace.
If $X$ is not totally blocked then for all $1\le j\le m$ one of the following holds:
\begin{enumerate}
\item there is an $f$ such that $f_{j}$ is the semi-lattice operation $u\vee v$ or $u\wedge v$, 
\item there is an $f$ such that $f_{j}$ is the majority operation $(u\vee v)\wedge (v\vee w)\wedge (w\vee u)$,
\item there is an $f$ such that $f_{j}$ is the Mal'tsev operation $u-v+w \mod 2$,
\item $f_{j}$ is a dictatorship for every $f$.
\end{enumerate}
\end{theorem}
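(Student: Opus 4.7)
}
The first biconditional---that $X$ is an impossibility domain with respect to IIA + Idempotency + Non-dictatorship iff $X$ is totally blocked and not an affine subspace---is exactly Theorem \ref{binarycase} of Dokow and Holzman, which I would invoke directly.

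The four-case refinement I would derive from Post's classification of idempotent Boolean clones. Fix an issue $j \in [m]$ and define
\[
F_j \;:=\; \{\,g : g = f_j \text{ for some idempotent IIA aggregator } (f_1,\ldots,f_m) \text{ of } X\,\}.
\]
By non-degeneracy $D_j=\{0,1\}$, so $F_j$ lives in the universe of operations on $\{0,1\}$. The main step is to verify that $F_j$ is an \emph{idempotent clone} in the universal-algebraic sense. Containment of projections is immediate: every coordinate-dictatorship is an idempotent IIA aggregator and contributes its $k$-th projection to $F_j$. For closure under composition, take an $n$-ary $g\in F_j$ witnessed by an aggregator $f=(f_1,\ldots,f_m)$, and $k$-ary $h_1,\ldots,h_n\in F_j$ witnessed by aggregators $f^{(1)},\ldots,f^{(n)}$; form the coordinate-wise composite
\[
\tilde f_\ell(y_1,\ldots,y_k)\;:=\;f_\ell\bigl(f^{(1)}_\ell(y_1,\ldots,y_k),\,\ldots,\,f^{(n)}_\ell(y_1,\ldots,y_k)\bigr).
\]
A standard two-stage preservation argument (apply the $f^{(i)}$ to any $k$-profile of rows of $X$ to obtain $n$ rows of $X$, then apply $f$) shows $(\tilde f_1,\ldots,\tilde f_m)$ is again an IIA aggregator; idempotency of the composite is coordinate-wise automatic; and its $j$-th component is precisely the clone composition $g(h_1,\ldots,h_n)$.

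Once $F_j$ is recognised as an idempotent clone on $\{0,1\}$, the classification is immediate from Post's lattice restricted to the idempotent part: every such clone either consists exclusively of projections (case 4) or contains at least one of $\wedge$, $\vee$, the ternary majority, or the Mal'tsev operation $u-v+w\pmod 2$ (cases 1--3). Strictly speaking, this per-issue dichotomy holds for \emph{every} non-degenerate binary $X$; the hypothesis ``if $X$ is not totally blocked'' is only invoked to guarantee, via the first biconditional (or Theorem \ref{nottotallythm} combined with Dokow--Holzman), that at least one $F_j$ escapes case 4, so that the refinement has nontrivial content.

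The only real hurdle is conceptual: although IIA aggregators are \emph{multi-sorted} objects with potentially distinct components at different issues, the single-coordinate trace $F_j$ still inherits enough structure to be an ordinary clone on $\{0,1\}$. After that observation is carefully justified by the composition calculation above, Post's theorem carries the classification for free.
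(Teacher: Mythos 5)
Your argument is correct, but it reaches the theorem by a genuinely different route than the paper does, on both halves of the statement. For the biconditional you simply cite Dokow--Holzman (Theorem \ref{binarycase}), whereas the paper's point in Section \ref{secbinaryevaluations} is to \emph{reprove} that direction algebraically: total blockedness yields equality gadgets $(x=y,(a,b))$ from $X^{+}$ (Lemma \ref{blockedlemma}), hence by the equality-gadget lemma (Lemma \ref{main}) every idempotent IIA aggregator is single-sorted (Lemma \ref{eql}); then the algebraic Schaefer theorem (Theorem \ref{schaeferalg}) applies, and a short MIPE argument excludes $\wedge$, $\vee$ and ${\rm Maj}_3$ for totally blocked $X$, leaving only the Mal'tsev (affine) case or pure dictatorships. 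Logically your citation is legitimate, since the biconditional is a known result stated earlier in the paper, but it bypasses the content the paper wants to showcase. For the four-case refinement, the paper invokes the multi-sorted Boolean dichotomy of Bulatov and Jeavons as a black box, while you re-derive exactly that statement from scratch: you observe that the set $F_j$ of $j$-th components of idempotent IIA aggregators contains all projections and is closed under coordinate-wise superposition (the two-stage preservation argument is the standard composition of multi-sorted polymorphisms and is sound), hence is an idempotent clone on $\{0,1\}$ because non-degeneracy gives $D_j=\{0,1\}$; Post's lattice then says a nontrivial idempotent Boolean clone must contain one of $\wedge$, $\vee$, ${\rm Maj}_3$, or $u-v+w \bmod 2$, which is precisely cases 1--3, with case 4 otherwise. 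Your remark that this per-issue dichotomy needs no total-blockedness hypothesis is accurate; the hypothesis only ensures (via Theorem \ref{nottotallythm} or the biconditional) that the classification is not vacuous. What each approach buys: yours is more self-contained and elementary (it makes the ``deep theorem'' unnecessary in the Boolean case), while the paper's route exercises the gadget/Galois machinery that is its main theme and simultaneously yields an independent proof of Dokow--Holzman. I see no gap in your argument; the only point worth making explicit is that every Boolean clone properly containing the projections contains one of Post's seven minimal clones, and idempotency rules out the three non-idempotent ones ($\neg$ and the two constants), which is the step where Post's classification is actually used.
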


\section{The algebraic theory}\label{algsec}

A key observation is that the algebraic theory can be made work in the assignment aggregation context if we turn to 
{\em multi-sorted} relations, gadgets, polymorphisms, etc. instead of the more usual single-sorted ones. 

\medskip

{\bf Multi-sorted relations} differ from usual relations in that each component of the relation we consider has a type. Each type $b$
has a designated range set $D_{b}$. Without loss of generality we may assume that the set of permissible 
types is $[t]=\{1,\ldots,t\}$ where $t$ is a fixed positive integer. The corresponding ranges are
$D_{1},\ldots,D_{t}$.
Let $X\subseteq \prod_{j=1}^{m} D_{\tau_{j}}$ where $\tau_{j}\in[t]$ for $1\le j\le m$.
We denote $X$ with $(X,\tau)$, where $\tau= (\tau_{1},\ldots,\tau_{m})$,
to indicate both that it is multi-sorted and the types of its components.
(In our assignment aggregation setting $t=m$, $D_{j}= {\rm pr}_{j} X$ and the typing of $X$ will be 
$(X,(1,\ldots,m))$.)

\begin{definition}[gadgets, multi-sorted] We fix a type set $[t]$, $\{D_{b}\}_{b\in [t]}$.
A multi-sorted relation $(R,\tau)$
multi-sorted gadget-reduces to a set
$\Gamma$ of multi-sorted relations if there is a multi-sorted gadget expression
\[
R(x_{1},\ldots,x_{k})  = \exists\,  y_{1},\ldots,y_{k'}:  \; R_{1}(z_{1,1},\ldots,z_{1,k_{1}})\wedge\ldots\wedge
R_{p}(z_{p,1},\ldots,z_{p,k_{p}})
\]
where each $R_{i}$ is either from $\Gamma$ or the multi-sorted equality relation 
($x=y$,\, (b,b)) (so ${\rm type}(x)={\rm type}(y)=b$) for some $b\in [t]$. Variables $z_{1,1},\ldots,z_{p,k_{p}}$ are
from the set $\{x_{1},\ldots,x_{k}\}\cup\{y_{1},\ldots,y_{k'}\}$.
\end{definition}

In the algebraic theory of CSPs voting functions are called polymorphisms.
They are very extensively studied and classified according to their algebraic 
properties. In the more usual single-sorted case we are forced to 
aggregate each issue with the same function. In the
multi-sorted case different types are aggregated independently: we have as many 
aggregator functions as types. (Even when for two types $a\neq b$ we have $D_{a}=D_{b}$,
aggregators $f_{a}$ and $f_{b}$ may differ.)
These are exactly the polymorphisms we need in 
the evaluation aggregation setting.

\begin{definition}[multi-sorted polymorphism]\label{mpoldef}
Fix $t$,  $\{D_{b}\}_{b\in [t]}$. Let $(X,\tau)$ be a multi-sorted relation where
$\tau = (\tau_{1},\ldots,\tau_{m})\in [t]^{m}$. Fix $n\ge 1$. A collection 
$f_{b}:D_{b}^{n}\rightarrow D_{b}$ ($1\le b\le t$) of 
functions is said to be a multi-sorted polymorphism of $(X,\tau)$ if 
the tuple $(f_{\tau_{1}},\ldots,f_{\tau_{m}})$ is an IIA 
aggregator for relation $X$,
i.e. it takes $X^{n}$ into $X$.
More generally, a collection 
$f_{b}:D_{b}^{n}\rightarrow D_{b}$ ($1\le b\le t$) of functions is a multi-sorted polymorphism
with respect to a set
$\Gamma$ of multi-sorted relations (each relation is over the same fixed set of types)
if $\{f_{b}\}_{b\in [t]}$ is a multi-sorted polymorphism for each relation in $\Gamma$. 
\end{definition}

{\bf Note.}
The above generalizes the notion 
of single sorted polymorphism, when $t=1$ and $D_{1}=D$, and also the IIA aggregators of the introduction 
with the choice $t=m$, $D_{j}={\rm Pr}_{j}(X)$, and with typing $X$ as $(X,(1,\ldots,m))$.

\begin{definition}[{\rm MPol}] 
Fix $[t]$ and $\{D_{b}\}_{b\in [t]}$.
The set of {\em all} multi-sorted polymorphisms for a set $\Gamma$ of multi-sorted relations is denoted 
by ${\rm MPol}(\Gamma)$.
\end{definition}

\begin{definition}[{\rm MInv}] 
Fix $[t]$ and $\{D_{b}\}_{b\in [t]}$.
The set of {\em all} multi-sorted relations that are kept by a set ${\cal F}$ of multi-sorted aggregator functions is denoted  by 
${\rm MInv}({\cal F})$.
\end{definition}

\begin{definition}[$\langle \cdot\rangle$] 
Fix $[t]$ and $\{D_{b}\}_{b\in [t]}$.
For a set ${\Gamma}$ of multi-sorted relations we define:
\begin{eqnarray*}
\langle {\Gamma} \rangle  = \{ (R,\tau) \mid \mbox{ $(R,\tau)$ multi-sorted gadget- 
reduces to ${\Gamma}$}\}   
\end{eqnarray*}
\end{definition}

Now we can state our multi-sorted Geiger theorem:

\begin{theorem}\label{Mth} 
Fix $[t]$ and $\{D_{b}\}_{b\in [t]}$, and let ${\Gamma}, {\Gamma}'$ be (possibly infinite) sets of multi-sorted relations
and let ${\cal F}, {\cal F}'$ be (possibly infinite) sets of multi-sorted aggregators. Then
\begin{enumerate}
\item ${\rm MInv}({\rm MPol}({\Gamma})) = \langle {\Gamma} \rangle$
\item ${\Gamma}\subseteq {\Gamma}' \; \Longrightarrow \; {\rm MPol}({\Gamma}')\subseteq {\rm MPol}({\Gamma})$
\item ${\cal F}\subseteq {\cal F}' \; \Longrightarrow \;  {\rm MInv}( {\cal F}')\subseteq {\rm MInv}({\cal F})$
\end{enumerate}
\end{theorem}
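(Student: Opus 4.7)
The plan is to dispatch parts (2) and (3) immediately from the definitions: if $\Gamma\subseteq\Gamma'$ and $\{f_b\}_{b\in[t]}$ preserves every relation in $\Gamma'$, then it preserves every relation in $\Gamma$ as well, giving $\mathrm{MPol}(\Gamma')\subseteq\mathrm{MPol}(\Gamma)$; the statement about $\mathrm{MInv}$ is symmetric. The substantive content is (1), a multi-sorted Galois closure theorem. The easy inclusion $\langle\Gamma\rangle\subseteq\mathrm{MInv}(\mathrm{MPol}(\Gamma))$ I would prove by structural induction on the gadget expression: applying an $n$-ary multi-sorted polymorphism coordinate-wise commutes with conjunction of clauses, with equality, and with existential projection (take $n$ witnesses and combine them), so preserving $\Gamma$ forces preservation of anything pp-definable over $\Gamma$.

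For the nontrivial inclusion $\mathrm{MInv}(\mathrm{MPol}(\Gamma))\subseteq\langle\Gamma\rangle$, I would adapt Geiger's indicator construction to the multi-sorted setting. Suppose $(R,\tau)$ with $R\subseteq\prod_{j=1}^{k}D_{\tau_j}$ is preserved by every multi-sorted polymorphism of $\Gamma$. Let $n=|R|$ and enumerate $R=\{r^{(1)},\ldots,r^{(n)}\}$. For each $i\in[k]$ set $\rho_i=(r^{(1)}_i,\ldots,r^{(n)}_i)\in D_{\tau_i}^{\,n}$. Now build a gadget: introduce one existentially bound variable $y_\alpha$ of type $b$ for every $b\in[t]$ and every $\alpha\in D_b^{\,n}$; for each relation $S\in\Gamma$ of arity $s$ with type declaration $(b_1,\ldots,b_s)$ and for every tuple $(\alpha_1,\ldots,\alpha_s)$ with $\alpha_l\in D_{b_l}^{\,n}$ such that $(\alpha_{1,j},\ldots,\alpha_{s,j})\in S$ for all $j\in[n]$, include the clause $S(y_{\alpha_1},\ldots,y_{\alpha_s})$; and glue each outer variable $x_i$ of type $\tau_i$ to $y_{\rho_i}$ via the multi-sorted equality relation.

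The correctness argument is the same as in the single-sorted case but must be checked type-by-type. A satisfying assignment to this gadget is exactly the graph of a family $f_b(\alpha):=y_\alpha$ for $b\in[t]$ and $\alpha\in D_b^{\,n}$; the clause scheme is engineered so that $\{f_b\}$ preserves $\Gamma$, i.e.\ it is a multi-sorted polymorphism. By the hypothesis $(R,\tau)\in\mathrm{MInv}(\mathrm{MPol}(\Gamma))$, such $\{f_b\}$ preserves $R$, so $(x_1,\ldots,x_k)=(f_{\tau_1}(\rho_1),\ldots,f_{\tau_k}(\rho_k))\in R$. Conversely, every $r^{(i)}\in R$ arises by choosing $f_b$ to be the $i$-th coordinate projection $\pi_i\colon D_b^{\,n}\to D_b$, which is trivially a multi-sorted polymorphism of any $\Gamma$, and which yields $y_{\rho_j}=\rho_{j,i}=r^{(i)}_j$. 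Hence the gadget defines exactly $R$.

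The main obstacle I expect is not conceptual but organizational: maintaining consistent typing throughout. One has to verify that each constructed clause $S(y_{\alpha_1},\ldots,y_{\alpha_s})$ is well-typed (the type of $y_{\alpha_l}$ matches the $l$-th slot of $S$), that multi-sorted equality is only invoked between variables of a common type, and that the total number of auxiliary variables $\sum_{b\in[t]}|D_b|^{n}$ is finite so the result is a legitimate finite gadget. A secondary care-point is to confirm that coordinate projections $\pi_i$ really are multi-sorted polymorphisms of every $\Gamma$ under the present definition (Definition \ref{mpoldef}), which follows because applying $\pi_i$ componentwise to any $n$ tuples in $S$ returns the $i$-th tuple itself. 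Once these bookkeeping items are in place, the Galois closure identity falls out exactly as in the classical proof.
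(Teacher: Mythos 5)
Your parts (2) and (3) are immediate from the definitions, as you say, and your proof of (1) is correct, but it takes a genuinely different route from the paper. You prove the multi-sorted statement directly by redoing Geiger's indicator construction in the typed setting: auxiliary variables $y_\alpha$ for every $b\in[t]$ and every column $\alpha\in D_b^{\,n}$ with $n=|R|$, clauses $S(y_{\alpha_1},\ldots,y_{\alpha_s})$ for every well-typed choice of columns whose rows lie in $S$, and equalities gluing $x_i$ to $y_{\rho_i}$; satisfying assignments are then exactly the arity-$n$ multi-sorted polymorphisms, and the hypothesis $R\in{\rm MInv}({\rm MPol}(\Gamma))$ together with the fact that coordinate projections are polymorphisms pins the projection of the gadget down to exactly $R$. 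The paper instead treats the single-sorted Geiger theorem as a black box: it forms the disjoint union $D=D_1\dot\cup\cdots\dot\cup D_t$, adds the unary type predicates $T_b$, shows (Lemma \ref{mgreq}) that multi-sorted polymorphisms of $\Gamma$ are precisely restrictions of single-sorted polymorphisms of $\Gamma_D\cup\Theta$, applies Theorem \ref{geig}, and then analyzes which single-sorted $\Gamma_D\cup\Theta$-gadgets are legitimately typed, decomposing an arbitrary such gadget into a typed part (readable as a multi-sorted $\Gamma$-gadget) times an untyped product of equality classes. Your approach buys self-containedness, cleaner typing (every clause you write is well-typed by construction, with equality only between variables of the same type), and an explicit $\sum_{b}|D_b|^{|R|}$ bound on the number of auxiliary variables, which is exactly the kind of quantitative information the paper needs elsewhere (Theorem \ref{gadgettheorem}); the paper's approach buys brevity by reusing the classical theorem, at the cost of the typed/untyped bookkeeping. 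One shared caveat: like the paper, your argument implicitly assumes $R\neq\emptyset$ (you need $n\ge 1$ to speak of arity-$n$ polymorphisms), so the empty relation should be handled by convention or a separate remark.
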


\begin{proof}
Let us first recall the single-sorted Geiger theorem:

\begin{theorem}[D. Geiger \cite{Geiger68}]\label{geig}
Fix $D$, and let ${\Gamma}, {\Gamma}'$ be (possibly infinite) sets of relations on $D$
and let ${\cal F}, {\cal F}'$ be (possibly infinite) sets of functions, such that each function is from some $D$ power to $D$ (i.e. aggregator functions for $D$). Then
\begin{enumerate}
\item ${\rm Inv}({\rm Pol}({\Gamma})) = \langle {\Gamma} \rangle$
\item ${\rm Pol}({\rm Inv}({\cal F})) = [{\cal F}]$
\item ${\Gamma}\subseteq {\Gamma}' \; \Longrightarrow \; {\rm Pol}({\Gamma}')\subseteq {\rm Pol}({\Gamma})$
\item ${\cal F}\subseteq {\cal F}' \; \Longrightarrow \;  {\rm Inv}( {\cal F}')\subseteq {\rm Inv}({\cal F})$
\end{enumerate}
\end{theorem}

In order to make the translation of our multi-sorted version to the above single-sorted one we first define
\[
D = D_{1} \dot{\cup} D_{2} \dot{\cup} \cdots \dot{\cup} D_{t}
\]
where
$D_{b}$ is the range for type $b$. 
If originally the $D_{b}$s are not disjoint, we make them disjoint without the loss of generality.
A non-empty multi-sorted relation 
$X\subseteq D_{\tau_{1}}\times \cdots \times D_{\tau_{m}}$  can be now interpreted as the single-sorted relation
$X_{D}\subseteq D^{m}$. We remark that viewing them as sets, $X$ and $X_{D}$ are exactly the same. The index $D$ in $X_{D}$ is only a reminder that we view $X_{D}$ 
as a single sorted relation over domain $D$,
while we view $X$ as $(X,\tau)$.
Since the $D_{b}$s are disjoint, from any such $X_{D}\neq\emptyset$
we can recover the types of every coordinate (the components of a single element of $X_{D}$ already give this information).
If ${\Gamma}$ is a set of multi-sorted relations over a fixed type-set $[t]$, let ${\Gamma}_{D}$ be the set of those $X_{D}$s 
that $X\in \Gamma$.

\medskip

For $b\in[t]$ we introduce the unary relation $T_{b}$ on $D$ (in the single sorted world):
\[
T_{b}(u) \; \longleftrightarrow \; u\in D_{b}
\]
In other words, $T_{b}(u)$ expresses that ``$u$ has type $b$ in the multi-sorted world.''
For the set $\{T_{1},\ldots, T_{t}\}$ of relations we introduce the notation $\Theta$.

\begin{definition}\label{compdef}
Let $\Delta$ be any set of relations on $D$ such that $\Theta\subseteq \Delta$. 
Then for any polymorphism $f: D^n\rightarrow D$ of $\Delta$ and any  $b\in [t]$
we can define $f_{b}: D_{b}^{n}\rightarrow D_{b}$
as $f_{b}(x) = f(x)$ on $D_{b}^{n}$.
\end{definition}

{\bf Note.} We know that $f$ on $D_{b}^{n}$ takes value from $D_{b}$ since it is an aggregator of $T_{b}\in \Delta$.

\medskip

We have now: 

\begin{lemma}\label{mgreq}
Let ${\Gamma}$ be any set of multi-sorted relations over a fixed type-set $[t]$ and with the notations as before. 
Then the following are equivalent:
\begin{enumerate}
\item $(f_{1},\ldots,f_{t})$ is a multi-sorted polymorphism for $\Gamma$;
\item the sequence $f_{1},\ldots,f_{t}$ arises, as in Definition \ref{compdef}, from some polymorphism $f$ of $\Delta = {\Gamma}_{D}\cup \Theta$.
\end{enumerate}
\end{lemma}

We do not prove this easy lemma. Returning to the proof of Theorem \ref{Mth}, the only challenge is to prove
1. since 2. and 3. are obvious. It is also follows from known composition lemmas
(polymorphisms compose, as do gadgets) that 
\begin{eqnarray*}
{\rm MInv}({\rm MPol}({\Gamma})) \supseteq \langle {\Gamma} \rangle \\
\end{eqnarray*}
We have to show the containment in the other direction.
Theorem \ref{geig} gives that  
\[
{\rm Inv}({\rm Pol}(\Gamma_{D}\cup\Theta)) = \langle \Gamma_{D}\cup \Theta \rangle_{D}
\]
Here the subscript in $\langle\cdot\rangle_{D}$ refers to that gadget generation is taken in the single sorted world.
Our proof scheme is to relate $\langle {\Gamma} \rangle$ to $\langle \Gamma_{D}\cup \Theta \rangle_{D}$ and ${\rm MInv}({\rm MPol}({\Gamma}))$  to ${\rm Inv}({\rm Pol}(\Gamma_{D}\cup\Theta))$.

\medskip

Which non-empty relations can we generate from $\Gamma_{D}\cup \Theta$, i.e. what are the elements of $\langle \Gamma_{D}\cup \Theta \rangle$?
A gadget with this set of generators is of the form
\begin{equation}\label{fff}
R(x_{1},\ldots,x_{k})  = \exists\,  y_{1},\ldots,y_{k'}:  \; R_{1}(z_{1,1},\ldots,z_{1,k_{1}})\wedge\ldots\wedge
R_{p}(z_{p,1},\ldots,z_{p,k_{p}})
\end{equation}
where each $R_{i}$ is either from $\Gamma_{D}$ or from $\Theta$ or the $=$ relation. 

\begin{definition}
We say that (an $x$- or $y$-) variable $z$ in (\ref{fff}) has type $b$ if 
whenever the conjunct holds for an assignment (without the existential quantifier), $z$ has a value in $D_{b}$. In different 
words, adding $T_{b}(z)$ to the conjunction would not eliminate any of the satisfying assignments of the conjunct.
\end{definition}
Since a $\Gamma_{D}\cup\Theta$ gadget (recall, this is a single-sorted gadget) has variables only over $D$, there is no a priori type restriction (other than the entire $D$) on any variable. 
Nevertheless, if a variable $z$ is involved in a relation $S_{D}$, where $S$ is a relation from $\Gamma$, then $S$ gives a type $b$ to that variable: $S_{D}$ never holds if $z\not\in D_{b}$, so we might 
as well restrict $z$ to $D_{b}$. Assume now that there is a variable $z'$ such that the right hand side of (\ref{fff}) contains the relation $z=z'$ with an already restricted $z$.
Then $z'$ must also be from $D_{b}$. Of course, a chain of such equations also enforces a type on the variable in the end of the chain. 
Finally, any relation $T_{b}(z)$ enforces a type $b$ on $z$.
In summary, we can assign type $b$ to variable $z$ if 
\begin{enumerate}
\item $T_{b}(z)$ occurs in the gadget;
\item $z$ occurs in a constraint from $\Gamma_{D}$ with type $b$;
\item There is a chain of equality relations that starts form any variable restricted to type $b$ (by 1. or 2.) that leads to $z$.
\end{enumerate}
We note that a variable cannot have two different (i.e. contradicting) types defined this way. Any contradiction in 
types would make $R$ unsatisfiable (i.e. the empty relation). 
Therefore $R$ is equivalent to a direct product of a ``typed part'' and an ``untyped part'':
\[
R = R_{\rm typed} \times \underbrace{(w_{1,1}=\cdots = w_{1,s_{1}}) \times \cdots \times (w_{l,1}=\cdots = w_{l,s_{l}})}_{\rm untyped \; part}
\]
The $w_{i,j}$s are variables and the untyped (or typed) part might not be there. Point 1.-3. also give that $R_{\rm typed}$ is a syntactically recognizable part 
of $R$ that arises by deleting some variables and terms from the right hand side. Although $R_{\rm typed}$ is a $\Gamma_{D}$-gadget (recall that $\Gamma_{D}$ is an {\em untyped} set of relations
constructed from the {\em typed} set of relations, $\Gamma$), because of its syntax we can also read it as a
(typed) $\Gamma$-gadget, and in addition with the exact same semantics (meaning that $R_{\rm typed}$ as a set is the exact same set as when we read the formula as a $\Gamma$-gadget).
So $R_{\rm typed} \;\mbox{[viewed as a multi-sorted relation]}\; \in \langle {\Gamma} \rangle$.

\medskip

Next we claim that
\[
R\in {\rm MInv}({\rm MPol}({\Gamma}))  \;\;\;\;\longrightarrow \;\;\;\; R_{D}\in {\rm Inv}({\rm Pol}(\Gamma_{D}\cup\Theta)).
\]
The left hand side reads that $R$ is kept by all
 multi-sorted polymorphism $(f_{1},...,f_{t})\in {\rm MPol}({\Gamma}))$.
By Lemma  \ref{mgreq} we have that $f\in {\rm Pol}(\Gamma_{D}\cup\Theta)$
if and only if its component-sequence (as in Definition \ref{compdef})
$(f_{1},...,f_{t})\in {\rm MPol}({\Gamma})$.
 So $R_{D}$ is kept by all
 $f\in {\rm Pol}(\Gamma_{D}\cup\Theta)$, proving our claim.
 

\medskip

By the single sorted Geiger theorem (applied to  $\Gamma_{D}\cup\Theta$)  we have then that since $R_{D} \in {\rm Inv}({\rm Pol}(\Gamma_{D}\cup\Theta))$, we also have that
$R_{D}\in \langle {\Gamma}_{D}\cup\Theta \rangle$.
Two paragraphs earlier we have seen, that then $R_{D}$ must be of the form $R_{\rm typed} \times (\alpha_{1,1}=\cdots = \alpha_{1,s_{1}}) \times \cdots \times (\alpha_{l,1}=\cdots = \alpha_{l,s_{l}}) $.
But since all component of $R_{D}$ are typed, only the typed part is there ($R_{D}=R_{\rm typed}=R$ as sets). So $R \in \langle {\Gamma} \rangle$.
\end{proof}

\section{Gadgets characterize Impossibility}\label{building}

The Galois connection of Geiger (and its multi-sorted version) sends the message that the more gadgets we can create from a set of relations 
(or from a single relation), the smaller set of aggregators this set of relations has.
An impossibility domain $X$ does not have any non-trivial aggregator, therefore $X$
is expected to generate all relations.
Of corse, what excites us more is the converse. If we can write all relations as gadgets made from $X$ and some simple relations
then $X$ must be an impossibility domain. There is however some work ahead:
\begin{enumerate}
\item We need to understand the role of the Idempotency (Supportiveness) conditions.
\item We want to find a minimal (for algorithmic reasons, but also for convenience) gadget set that already implies the impossibility of $X$.
\end{enumerate}

\subsection{The Idempotency and Supportiveness conditions}

The Idempotency and Supportiveness conditions correspond to adding extra relations
to our base set of relations from which we build the gadgets that prove the impossibility of $X$.
Our base set is originally $\{X\}$.
For the Idempotency condition we add
all {\em Assignment-giving relations}
and for the Supportiveness condition we add all {\em Unary relations} (notions defined below).

\begin{definition}[Unary relations, $X^{\mho}$]
A multi-sorted unary relation is $(x\in A,(a))$, where $a\in [t]$ is a type and $A\subseteq D_{a}$.
For a multi-sorted relation $X$ we define
$X^{\mho}$ as the set of relations that consists of $X$ and all unary relations for the type set $[t]$, $\{D_{b}\}_{b\in [t]}$
on which $X$ is defined.
\end{definition}

\begin{definition}[Assignment-giving relations, $X^{+}$] Multi-sorted assignment-giving relations are
unary relations of the form $(x=v,(a))$, where $v\in D_{a}$  (i.e. $|A|=1$).  $X^{+}$ is
the set of relations that includes $X$ and all assignment giving relations for the type set $[t]$, $\{D_{b}\}_{b\in [t]}$
on which $X$ is defined.
\end{definition}

\begin{lemma}\label{idsup}
A multi-sorted function $g=(g_{1},\ldots,g_{t})$, 
where $g_{a}:D_{a}^{n}\rightarrow D_{a}$, is idempotent, (meaning each $g_{a}$ is idempotent)
if and only if it aggregates all assignment giving relations for all types.
It is supportive, 
if and only if it aggregates all unary relations for all types.
\end{lemma}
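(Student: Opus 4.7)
The plan is to handle the two equivalences separately and independently, since each one reduces (after unraveling definitions) to a very short observation about what it means for $g_a$ to preserve a unary relation on $D_a$.

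For the assignment-giving case, I would first note that a relation $(x=v,(a))$ is a unary multi-sorted relation consisting of the single tuple $(v)$, and that $g$ preserving it simply means $g_a$ applied coordinatewise to $n$ copies of this tuple lands back in the relation. So the forward direction is immediate: if $g_a$ is idempotent in the universal-algebraic sense, then $g_a(v,\dots,v)=v \in \{v\}$, so $g_a$ preserves $(x=v,(a))$ for every $v \in D_a$ and every $a \in [t]$. For the converse, I would fix $a \in [t]$ and $v \in D_a$, observe that the only $n$-tuple of elements of $\{v\}$ is $(v,\dots,v)$, and conclude from preservation that $g_a(v,\dots,v) \in \{v\}$, i.e. $g_a(v,\dots,v)=v$. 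Since this holds for all $a$ and $v$, each $g_a$ is idempotent, i.e. $g$ is idempotent.

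For the supportiveness case, I would use the reformulation from the earlier lemma: $g$ is supportive iff each $g_a$ is conservative, meaning $g_a(u_1,\dots,u_n) \in \{u_1,\dots,u_n\}$ for all $u_1,\dots,u_n \in D_a$. The forward direction is then routine: given any unary relation $(x\in A,(a))$ and tuples $(u_1),\dots,(u_n)$ with each $u_i \in A$, conservativity gives $g_a(u_1,\dots,u_n) \in \{u_1,\dots,u_n\} \subseteq A$, so the result is again in the relation. For the converse, I would take an arbitrary input $(u_1,\dots,u_n) \in D_a^n$ and cleverly choose the \emph{witness} unary relation $A := \{u_1,\dots,u_n\} \subseteq D_a$; by assumption $g$ preserves $(x \in A,(a))$, so applying $g_a$ to the tuples $(u_1),\dots,(u_n)$ — each of which lies in $A$ — must land in $A$, giving exactly the conservativity condition.

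There is no real obstacle here: the lemma is essentially a definitional unpacking, and the only nontrivial move is the choice of $A=\{u_1,\dots,u_n\}$ in the converse of the second equivalence, which is the standard trick for showing that conservativity is equivalent to preservation of all unary relations. Because everything is indexed by the type $a \in [t]$ and the two types' aggregators act independently, I would simply argue fiberwise on each $D_a$ and then quantify over $a$ at the end to get the statement about the multi-sorted $g=(g_1,\dots,g_t)$ as a whole.
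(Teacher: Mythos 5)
Your proposal is correct; the paper in fact omits the proof of this lemma entirely (calling it a "simple proof of this known statement"), and your argument is precisely the standard one it has in mind: preservation of $(x=v,(a))$ forces $g_a(v,\dots,v)=v$ and conversely, while for supportiveness the forward direction is immediate from conservativity and the converse follows from the witness relation $A=\{u_1,\dots,u_n\}$, argued type by type. Nothing further is needed.
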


We omit the simple proof of this known statement. 

\subsection{The Translation of the original problem}

Before characterizing it with gadgets, we devote a short section to the exact notion of `impossibility' in the algebraic language.

\begin{lemma}\label{translation}
The following voting theory and algebraic conditions are equivalent:

\medskip

\begin{center}
\begin{tabular}{p{7.5cm} p{1cm} p{7cm}}
{\bf Algebra} & & {\bf Voting Theory} \\
$X\subseteq D^m$ is a relation & & $X\subseteq D^m$ is a Domain \\
 $f = (f_{1},\ldots,f_{m})$, where $f_{j}: D_{j}^{n}\rightarrow D_{j}$. Here $D_{j} = {\rm pr}_{j} X$ by definition (resticting $D$
 to $D_{j}$ is important). & & IIA aggregator function $f:X^n\rightarrow X$ \\
 $\exists \; 1\le k\le n:\;$ all $f_{j}$ are projections on their $k^{\rm th}$ coordinate.  & &  $f$ is a Dictatorship \\
 All $f_{j}$s are idempotent (supportive)  & & $f$ is idempotent (supportive) \\
\end{tabular}
\end{center}

\medskip

The following voting theory and algebraic problems are equivalent:

\medskip

\begin{center}
\begin{tabular}{p{7.5cm} p{1cm} p{7cm}}
{\bf Algebra} & & {\bf Voting Theory} \\
Determine if $X$ has & & Determine if $X$ has \\
idempotent (supportive)  & & idempotent (supportive) \\
non-dictatorial & & non-dictatorial  \\
multi-sorted polymorphism. & & IIA aggregator. \\
\end{tabular}
\end{center}

\medskip

$X$ is an {\em impossibility domain} if the answer to the above question is ``No.''
\end{lemma}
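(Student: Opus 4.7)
The proof is essentially a dictionary lookup between the voting-theoretic formulation from Section 1 and the multi-sorted algebraic formulation introduced in Section \ref{algsec}. The plan is to verify each row of the first table in turn, and then the equivalence of decision problems in the second table will be immediate.

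First I would fix the multi-sorted setup: view $X\subseteq D^m$ as the multi-sorted relation $(X,(1,2,\ldots,m))$ in the sense of Section \ref{algsec}, with type set $[m]$, ranges $D_j={\rm pr}_j X$, and typing $\tau=(1,\ldots,m)$. Under this typing the elements of $X$ are precisely the tuples whose $j$-th coordinate lies in $D_j$, which matches the convention from the ``Uniqueness of the IIA decomposition'' paragraph in the Introduction (each $f_j$ is restricted to $D_j^n$). This identification gives the first row of the table for free.

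Next I would establish the core correspondence, i.e.\ the second row of the table. Given any IIA aggregator $f:X^n\to X$, the IIA condition directly yields functions $f_j:D_j^n\to D_j$ such that $f$ acts coordinatewise. The condition $f(X^n)\subseteq X$ then says exactly that applying $(f_1,\ldots,f_m)$ coordinatewise to any $n$ tuples in $X$ yields another tuple in $X$, which by Definition \ref{mpoldef} is the definition of a multi-sorted polymorphism of $(X,(1,\ldots,m))$. Conversely, any multi-sorted polymorphism $(f_1,\ldots,f_m)$ reassembles into an IIA aggregator by coordinatewise application, and the uniqueness convention above makes the assignment $f\leftrightarrow(f_1,\ldots,f_m)$ a bijection. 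The remaining rows now follow directly from the three characterization lemmas already proved in the Introduction: the idempotency row is the idempotency-characterization lemma, the dictatorship row is the dictatorship lemma (a dictator on coordinate $k$ corresponds to each $f_j$ being the $k$-th projection on $D_j^n$), and the supportiveness row is the supportiveness-characterization lemma.

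Finally, the equivalence of the two decision problems in the second table is immediate: under the bijection established in the second row, the question ``does there exist a non-dictatorial idempotent (respectively supportive) aggregator?'' is literally the same boolean question as ``does there exist a non-dictatorial idempotent (respectively supportive) multi-sorted polymorphism?'', and the notion of \emph{impossibility domain} is simply the negative answer. There is no real obstacle beyond bookkeeping; the one subtlety worth flagging is the domain restriction $f_j:D_j^n\to D_j$ (rather than $D^n\to D$), which is essential because a multi-sorted polymorphism for the typing $(1,\ldots,m)$ is by definition only defined on the per-type products $D_j^n$. Without this restriction the bijection would fail to be well-defined, which is exactly why the convention is adopted from the outset.
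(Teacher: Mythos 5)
Your proposal is correct, and it follows essentially the same route the paper intends: the lemma is a definitional dictionary, and the paper leaves it unproved precisely because, as you observe, Definition \ref{mpoldef} with $t=m$, $\tau=(1,\ldots,m)$, $D_j={\rm pr}_j X$ makes the aggregator/polymorphism correspondence literal, while the idempotency, dictatorship, and supportiveness rows are exactly the three characterization lemmas from the Introduction. Your flagging of the restriction $f_j:D_j^n\to D_j$ as the one point where care is needed matches the paper's own emphasis (``restricting $D$ to $D_j$ is important'').
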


\subsection{Characterization of Impossibility Domains in terms of gadgets}

Applying Lemmas \ref{idsup}, \ref{translation} 
and our multi-sorted Geiger's theorem (Theorem \ref{Mth}) it is not hard to show: 

\begin{lemma}
Domain $X\subseteq D^m$ is impossible with respect to IIA + Idempotency (Supportiveness) + Non-dictatorship
if and only if {\em all} multi-sorted relations can be generated as multi-sorted
$X^{+}$ ($X^{\mho}$) -gadgets. Here $t=m$, $D_{j}= {\rm pr}_{j} X$ for $1\le j\le m$ and the typing of $X$ is 
$(X,(1,\ldots,m))$. 

\end{lemma}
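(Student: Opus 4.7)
The plan is to combine our multi-sorted Geiger theorem (Theorem \ref{Mth}) with Lemmas \ref{idsup} and \ref{translation} in order to translate the voting-theoretic statement into an equivalent purely algebraic one. By Lemma \ref{translation}, $X$ is an impossibility domain with respect to IIA + Idempotency + Non-dictatorship iff every idempotent multi-sorted polymorphism of $X$ (with typing $(X,(1,\ldots,m))$) is a dictatorship; by Lemma \ref{idsup}, the idempotent multi-sorted polymorphisms of $X$ are precisely the members of ${\rm MPol}(X^{+})$. So it suffices to prove
\[
{\rm MPol}(X^{+})\subseteq\{\mbox{dictatorships}\}\;\Longleftrightarrow\; \langle X^{+}\rangle=\mbox{all multi-sorted relations},
\]
with the entirely parallel statement for the supportive case via $X^{\mho}$.

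For the forward direction, observe that a dictatorship $(f_1,\ldots,f_m)$, i.e.\ with each $f_j$ the $k$-th projection for a \emph{common} $k$, preserves every multi-sorted relation: applied componentwise to any $n$ input tuples it simply returns the $k$-th of them, which lies in the relation by hypothesis. Hence ${\rm MInv}(\{\mbox{dictatorships}\})$ is the full class of multi-sorted relations over the given type set, and Theorem \ref{Mth}(1) gives $\langle X^{+}\rangle={\rm MInv}({\rm MPol}(X^{+}))=$ all multi-sorted relations.

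For the reverse direction, if $\langle X^{+}\rangle$ contains every multi-sorted relation, then Theorem \ref{Mth}(1) forces every $f\in{\rm MPol}(X^{+})$ to preserve every such relation, and it remains to deduce that $f$ is a dictatorship. I would argue in two steps: (i) preservation of the unary multi-sorted relation $\{u_1,\ldots,u_n\}\subseteq D_j$ for arbitrary $u_1,\ldots,u_n\in D_j$ forces each $f_j$ to be conservative; (ii) if two components $f_{j_1}=\pi_{k_1}$, $f_{j_2}=\pi_{k_2}$ were projections with \emph{different} indices $k_1\neq k_2$, one exhibits the $2$-element multi-sorted relation $R=\{(u_1,v_1),(u_2,v_2)\}\subseteq D_{j_1}\times D_{j_2}$ with $u_1\neq u_2$ and $v_1\neq v_2$ (possible by non-degeneracy), on which feeding the $k_1$-th input as $(u_1,v_1)$ and the $k_2$-th input as $(u_2,v_2)$ yields $f$-image $(u_1,v_2)\notin R$. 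The remaining case --- that a conservative but non-projection $f_j$ preserves every relation on $D_j$ --- is excluded by the classical clone-theoretic fact that on a finite set the polymorphisms preserving all relations form exactly the clone of projections.

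The principal obstacle is this last clone-theoretic appeal: showing that a non-projection component $f_j$ cannot preserve every relation on $D_j$. In the supportive version conservativity is free because $X^{\mho}$ already contains all unary relations, but in the idempotent version conservativity itself has to be extracted from the ``preserves all relations'' hypothesis before the projection characterization can be invoked. Once this step is in hand, the multi-sorted Geiger correspondence does all the remaining work.
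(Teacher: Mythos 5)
Your proposal is correct and follows essentially the same route the paper intends: it derives the lemma from Lemma \ref{idsup}, Lemma \ref{translation} and the multi-sorted Geiger theorem (Theorem \ref{Mth}), which is exactly the combination the paper cites, with the only extra ingredient being the classical fact that operations preserving all relations on a finite set are projections (itself an instance of the Geiger correspondence). One minor remark: the conservativity step (i) and the worry in your last paragraph are superfluous, since every single-sorted relation on $D_j$ is already a multi-sorted relation of constant type $(j,\ldots,j)$, so the projection characterization applies to each $f_j$ directly, and your two-element relation argument then forces a common dictator index.
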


The set of all (multi-sorted) relations is however infinitely large!
Luckily, we can select a finite subset (actually, in many different ways) that 
generate all relations, and it is sufficient to consider only those. 


\begin{definition}[The Non-Binary OR relation]
We define the multi-sorted relation $R^{u,v}_{k,\ell}$ 
which is unsatisfied only when $x=u$ and $y=v$ simultaneously hold ($x$ has type $k$ and
$v$ has type $\ell$). In formula:
\[
R^{u,v}_{k,\ell}(x,y) = (\neg (x=u \wedge y=v), (k,\ell)).
\]
\end{definition}

\begin{definition}[The multi-sorted Not-All-Equal relation] \label{multiNAE}
The multi-sorted NAE relation on types $a, b, c$ is defined 
only when $D_{a}=D_{b}= D_{c} = \{0,1\}$. In this case
\[
{\rm NAE}_{a,b,c}(x,y,z) = (|\{x,y,z\}|>1, (a,b,c)).
\]
\end{definition}

We are now ready to state our gadget-characterization of impossibility domains $X$:

\begin{theorem}\label{gth}
Let $X\subseteq D^m$ non-degenerate. 
Let $t = m$ and $\tau = (1,\ldots,m)$, and the range for type $j$ be $D_{j}={\rm pr}_{j} X$.
Then $X$ is an impossibility domain with respect to 
IIA + Idempotency + Non-dictatorship
if and only if there are $(X,\tau)^{+}$-gadgets expressing $R^{u,v}_{k,\ell}$
for every $1\le k, \ell\le m;\; u\in {\rm pr}_{k} X,\; v\in {\rm pr}_{\ell} X$.
Furthermore, if $|D_{j}|= 2$ for some $1\le j\le m$, we also need to add the multi-sorted {\rm NAE} gadget
on types $(j,j,j)$. The analogous statement, when we replace ``Idempotency" with ``Supportiveness"
requires to replace $(X,\tau)^{+}$ with $(X,\tau)^{\mho}$.

\end{theorem}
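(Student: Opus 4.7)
The plan is to invoke the preceding lemma, which identifies impossibility of $X$ with ``every multi-sorted relation lies in $\langle X^+\rangle$'' (respectively $\langle X^\mho\rangle$), and combine it with the multi-sorted Geiger Theorem~\ref{Mth}. The forward direction is immediate: if $X$ is impossible, then $\langle X^+\rangle$ already contains every multi-sorted relation, in particular each $R^{u,v}_{k,\ell}$ and (when applicable) the NAE on $(j,j,j)$. For the reverse direction, assume the claimed gadgets exist. By Theorem~\ref{Mth}(1), every $f=(f_1,\ldots,f_m)\in\mathrm{MPol}(X^+)$ then preserves each $R^{u,v}_{k,\ell}$ and each required NAE. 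Since Lemma~\ref{idsup} forces such $f$ to be idempotent, it suffices to prove: every idempotent multi-sorted polymorphism of $X$ that also preserves all $R^{u,v}_{k,\ell}$ and, for $|D_j|=2$, the NAE on $(j,j,j)$, is a dictator. Then $\mathrm{MPol}(X^+)$ consists only of dictators, $\mathrm{MInv}(\mathrm{MPol}(X^+))=\langle X^+\rangle$ exhausts all multi-sorted relations, and the preceding lemma delivers impossibility. The Supportive direction is identical after replacing $X^+$ by $X^\mho$.

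To execute the key step, one first upgrades idempotency to conservativity. The gadget $\exists y\,\bigwedge_{v\in D_\ell} R^{u,v}_{k,\ell}(x,y)$ has semantics $\{x\in D_k:x\ne u\}$, so by intersection every subset of $D_k$ is realized as a unary $X^+$-gadget; in the Supportive case $X^\mho$ already supplies these. Thus each $f_j$ preserves every unary relation of type $j$ and is conservative. Next, the condition ``$f_j$ preserves every $R^{u,v}_{j,j}$'' is cleanly equivalent to the pairwise-intersection property $I(\vec x)\cap I(\vec y)\ne\emptyset$ for all $\vec x,\vec y\in D_j^n$, where $I(\vec x)=\{i:x_i=f_j(\vec x)\}$; if any single $I(\vec x)$ is a singleton $\{k\}$, then $k$ lies in every $I(\vec y)$ and $f_j=p_k$. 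For $|D_j|\ge 3$ such a tuple is produced by a short case split: when $n\le|D_j|$, take $\vec x$ with all coordinates distinct; when $n>|D_j|$, use tuples whose first $|D_j|$ positions realize every element of $D_j$, paired with a structurally ``shifted'' companion, to force an empty-intersection contradiction (e.g.\ $\{1,2\}\cap\{3,4\}=\emptyset$) unless $f_j$ is a projection. For $|D_j|=2$ the pair argument is vacuous; a walk through Post's lattice shows the only idempotent conservative operations on $\{0,1\}$ preserving every $R^{u,v}$ are projections and odd-arity majority-type operations, and the NAE hypothesis on $(j,j,j)$ kills the majorities (witness: the three input rows $(1,0,0),(0,1,0),(0,0,1)$).

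It remains to align the per-type projections: with $f_j=p_{a_j}$ for each $j$, we claim $a_j$ is independent of $j$. Suppose $a_j\ne a_{j'}$ for some $j\ne j'$. Non-degeneracy provides $u\ne u'\in D_j$ and $v\ne v'\in D_{j'}$. Define $\vec x$ by $x_{a_j}=u$ and $x_i=u'$ otherwise, and $\vec y$ by $y_{a_{j'}}=v$ and $y_i=v'$ otherwise. A three-case check ($i=a_j$, $i=a_{j'}$, or $i\notin\{a_j,a_{j'}\}$) confirms $(x_i,y_i)\ne(u,v)$ for every $i$, so preservation of $R^{u,v}_{j,j'}$ would guarantee $(f_j(\vec x),f_{j'}(\vec y))\ne(u,v)$; yet by construction $(f_j(\vec x),f_{j'}(\vec y))=(x_{a_j},y_{a_{j'}})=(u,v)$, contradiction. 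Hence all $a_j$ coincide and $f$ is the dictator on their common coordinate.

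The main obstacle is the combinatorial core of the second step: showing that conservative operations on a type $j$ with $|D_j|\ge 3$ preserving every $R^{u,v}_{j,j}$ are forced to be projections, and that NAE is exactly the additional lever required when $|D_j|=2$. The pairwise-intersection reformulation isolates the right invariant, but producing a singleton $I(\vec x)$ for large arity when $|D_j|$ is small requires a careful tuple construction; dually, showing that NAE destroys all the binary non-projection survivors relies on a short clone-theoretic enumeration. A secondary subtlety is keeping the multi-sorted typing straight in the final alignment argument, since it is precisely the cross-type relations $R^{u,v}_{j,j'}$ (with $j\ne j'$) that couple the otherwise independent per-type projections.
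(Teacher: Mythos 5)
Your architecture matches the paper's: the ``only if'' direction is immediate from the Galois connection, and for the ``if'' direction you show each component $f_{j}$ is a dictatorship on its own type (via the binary relations generated by the $R^{u,v}_{j,j}$, with NAE handling $|D_{j}|=2$) and then force all the per-type dictators to coincide using a cross-type relation $R^{u,v}_{k,\ell}$ --- this is exactly the paper's two-step plan (Lemmas \ref{allrel}, \ref{ne}, \ref{nae}, followed by the alignment argument). Two of your touches are correct and even slightly cleaner than the paper's: the reformulation of ``$f_{j}$ preserves every $R^{u,v}_{j,j}$'' as the pairwise-intersection property of the sets $I(\vec x)=\{i: x_{i}=f_{j}(\vec x)\}$ (which already yields conservativity by taking $\vec y=\vec x$), and the alignment step applied to abstract column pairs, which avoids the paper's appeal to K\"onig's theorem to produce two elements $U,V\in X$ differing in both coordinates.

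The gap is in the per-type step, i.e.\ precisely the content of Lemmas \ref{ne} and \ref{nae}, which the paper invokes as known facts (projectivity, with respect to idempotent polymorphisms, of the disequality structure on a domain of size at least $3$ and of Boolean NAE) and does not reprove; your sketch does not establish them. For $|D_{j}|=d\ge 3$ and arity $n$, the ``shifted companion'' pair can only work when $n\le 2(d-1)$: if a tuple has at least one non-singleton class, its non-singleton classes cover at least $n-(d-1)$ positions, so for two tuples these unions can be disjoint only if $2(n-d+1)\le n$. Hence for $n\ge 2d-1$ no single pair of tuples rules out the scenario in which $f_{j}$ always selects a class of size at least two; one needs a genuinely global argument --- e.g.\ an ultrafilter-style analysis of the chosen classes of two-valued tuples, using a third value to obtain closure under intersection, or an induction on identification minors in the spirit of Lemma \ref{bulatovconservative} --- or simply the citation the paper uses. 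Likewise, for $|D_{j}|=2$ the operations preserving all $R^{u,v}_{j,j}$ and idempotency are the monotone self-dual idempotent functions, a class strictly larger than ``odd-arity majority-type,'' and your NAE witness $(1,0,0),(0,1,0),(0,0,1)$ refutes only the ternary majority; excluding every non-projection of arbitrary arity again needs a real argument or the standard fact that all idempotent polymorphisms of NAE are projections. With those two facts supplied (by proof or by citation), the rest of your proof goes through.
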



\section{Gadget-power}\label{specialsec}

First we give a proof of Theorem \ref{gth}.
We show the non-trivial direction, i.e. that if we can create all gadgets required by Theorem \ref{gth}, the only polymorphisms that remain are dictatorships.

\begin{remark}
Even a single gadget present in $\langle \Gamma \rangle$ can have strong consequences. 
For instance, if we can construct the equality gadget $(x=y,(a,b))$ between two different types $a$ and $b$ (with a common alphabet),
any multi-sorted polymorphism $(f_{1},\ldots, f_{t})\in {\rm MPol}(\Gamma)$ must have the same 
$a$ and $b$ components. We prove this in the end of the section.
\end{remark}

Our way to prove Theorem \ref{gth} is to start to create gadgets and use them as ``subroutines'' to create even more gadgets.
 The following lemma says that for any fixed $1\le k, \ell <t$ from $R^{u,v}_{k,\ell}\in  \langle \Gamma \rangle$
 one can construct  {\em all} multi-sorted binary relation with type $(k,\ell)$ in $\langle\Gamma\rangle$.

\begin{lemma}\label{allrel}
Let $1\le k,\ell\le t$ and $(S,(k,\ell))$ be any multi-sorted relation contained
in ${\rm pr}_{k} X \times {\rm pr}_{\ell} X$. 
Then $(S,(k,\ell)) = \wedge_{(u,v)\not\in S} \; R^{u,v}_{k,\ell}$.
\end{lemma}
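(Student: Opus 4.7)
The statement is a straightforward set-theoretic identity, not a deep algebraic fact; the role it plays in the paper is as a convenient ``subroutine'' showing that once one has all the basic forbidden-pair gadgets $R^{u,v}_{k,\ell}$ in $\langle \Gamma\rangle$, every binary typed relation on the pair of types $(k,\ell)$ follows by conjunction. So the plan is simply to verify the two inclusions by unpacking definitions.

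First I would fix the type convention: both sides are to be read as multi-sorted relations of type $(k,\ell)$, so their underlying universe of candidate tuples is $\mathrm{pr}_{k} X \times \mathrm{pr}_{\ell} X = D_{k}\times D_{\ell}$. Under this convention, ``$(u,v)\notin S$'' in the index of the conjunction means $(u,v)\in (D_{k}\times D_{\ell})\setminus S$, which is a finite set, so the conjunction is well-defined.

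Next I would verify the inclusion LHS $\subseteq$ RHS. Take $(x,y)\in S$. For any $(u,v)\in D_{k}\times D_{\ell}$ with $(u,v)\notin S$ we have $(x,y)\neq(u,v)$, hence $\neg(x=u\wedge y=v)$, hence $R^{u,v}_{k,\ell}(x,y)$ holds by definition. Thus $(x,y)$ satisfies every conjunct on the right.

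For the reverse inclusion RHS $\subseteq$ LHS, I would argue by contrapositive. Suppose $(x,y)\in D_{k}\times D_{\ell}$ but $(x,y)\notin S$. Then the particular pair $(u,v)=(x,y)$ appears as an index in the big conjunction, and the conjunct $R^{x,y}_{k,\ell}(x,y)$ is false because $x=x$ and $y=y$. Hence $(x,y)$ does not satisfy the RHS. Combining the two inclusions gives the claimed equality.

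There is no real obstacle: the only things to watch are (i) the conjunction is indexed over the complement of $S$ in the full typed universe $D_{k}\times D_{\ell}$, not over the complement in some larger product $D\times D$, and (ii) when $S = D_{k}\times D_{\ell}$ the conjunction is empty and should be interpreted as the always-true relation, which is indeed equal to $S$. Both points are handled by the multi-sorted typing convention already in force, so the proof is essentially a one-line definition chase.
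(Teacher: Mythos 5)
Your proof is correct and is exactly the straightforward definition chase the paper has in mind (the authors omit the proof as ``straightforward''): one inclusion because any $(x,y)\in S$ differs from every excluded pair, the other because a missing pair $(x,y)\notin S$ falsifies its own conjunct $R^{x,y}_{k,\ell}$. Your remarks about indexing over the complement within $D_{k}\times D_{\ell}$ and the empty-conjunction case are sensible housekeeping and consistent with the paper's typing conventions.
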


We omit the straightforward proof. Note that nothing stops us setting $k=\ell$ in the lemma,
in which case we get all binary relations ($R \subseteq D_{k}^2$).
What does the presence of these relations in $\langle \Gamma \rangle$ say about the (multi-sorted) polymorphisms for $\Gamma$?
They say a lot.
In fact, if $D_{k}$ has size at least three, already the 
not-equal relation, $(x\neq y, (k,k))$, alone excludes all idempotent (and so all supportive) polymorphisms
other than the dictatorships {\em for that type}:

\begin{lemma}[Not-equal gadget lemma]\label{ne}
Assume that $((x\neq y),\,(k,k)) \in \langle {\Gamma} \rangle$. 
Then for every $f=(f_1,\cdots,f_t)\in {\rm MPol}(\Gamma)$
it must hold that $f_{k}$ is a dictatorship on ${\rm pr}_{k} X$.
\end{lemma}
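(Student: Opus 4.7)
The plan is to combine the easy half of the multi-sorted Galois correspondence with a classical universal-algebraic fact about $\neq$-preserving operations. Write $n$ for the arity of $f$ and $D_k = \mathrm{pr}_k X$. First, since $((x \neq y),(k,k)) \in \langle \Gamma \rangle$, Theorem~\ref{Mth}(1) gives $((x \neq y),(k,k)) \in \mathrm{MInv}(\mathrm{MPol}(\Gamma))$, so $f_k$ preserves $\neq$ on $D_k$: for any $a,b \in D_k^n$ with $a_i \neq b_i$ for all $i$, one has $f_k(a) \neq f_k(b)$. This step is immediate once the Galois connection is invoked.

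The heart of the argument is the classical observation that when $|D_k| \geq 3$, every operation preserving $\neq$ is \emph{essentially unary}, i.e., $f_k(x_1,\ldots,x_n) = \sigma(x_j)$ for some coordinate $j \in [n]$ and some unary map $\sigma : D_k \to D_k$. I would prove this by contradiction: suppose $f_k$ depended on two coordinates $i \neq j$. By plugging in suitable constants in the remaining $n-2$ positions I reduce to a binary $\neq$-preserving operation $g : D_k^2 \to D_k$ that still depends on both arguments. Picking three distinct elements $0,1,2 \in D_k$ and setting $c := g(0,0)$, the $\neq$-constraint forces $g(1,1), g(1,2), g(2,1), g(2,2)$ to lie in $D_k \setminus \{c\}$ with $g(1,1) \neq g(2,2)$ and $g(1,2) \neq g(2,1)$; propagating these to the edge values $g(0,1), g(0,2), g(1,0), g(2,0)$ via the remaining inequalities leaves only two configurations, in each of which $g$ is a projection (possibly post-composed with a permutation of $D_k$), contradicting the assumed dependence on both coordinates. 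Hence $f_k = \sigma \circ \pi_j$ for some $j$ and some unary $\sigma$.

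Finally, in the settings of Theorem~\ref{gth} the polymorphism $f$ is idempotent (this is enforced by the assignment-giving relations in $X^+$, and the Supportive case is strictly stronger, cf.\ Lemma~\ref{idsup}), so $\sigma(u) = f_k(u,\ldots,u) = u$ for every $u \in D_k$; thus $\sigma = \mathrm{id}_{D_k}$ and $f_k$ is the genuine $j$-th coordinate projection, i.e., a dictatorship on $\mathrm{pr}_k X$ in the sense of the paper. The main obstacle is the essentially-unary step: it is combinatorially elementary but genuinely requires $|D_k| \geq 3$. At $|D_k| = 2$ the 3-ary XOR on $\{0,1\}$ is idempotent and preserves $\neq$ yet is no dictatorship, which is precisely why Theorem~\ref{gth} must supplement the $R^{u,v}_{k,\ell}$ gadgets with an NAE gadget whenever some $|D_j|=2$.
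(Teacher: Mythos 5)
Your overall skeleton is the right one, and it supplies what the paper itself leaves implicit: the paper gives no proof of this lemma, relying on the ``well known'' fact (stated again in its Pairwise-distinctness section) that for $|D_k|\ge 3$ the only idempotent polymorphisms of the disequality relation are projections. You correctly read off the implicit hypotheses from the surrounding text ($|{\rm pr}_k X|\ge 3$ and idempotency/supportiveness, the latter coming from the $X^{+}$ or $X^{\mho}$ setting via Lemma~\ref{idsup}), your use of Theorem~\ref{Mth} to conclude that $f_k$ preserves $\neq$ on $D_k$ is correct (only the easy containment $\langle\Gamma\rangle\subseteq{\rm MInv}({\rm MPol}(\Gamma))$ is needed), your final step ($\sigma={\rm id}$ by idempotency, and $\sigma$ is forced to be a permutation anyway) is fine, and your remark that the $3$-ary XOR at $|D_k|=2$ is exactly why Theorem~\ref{gth} also demands the NAE gadget is on point.

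The gap is in the essentially-unary step, which is the heart of the matter. Your reduction ``plug constants into the other $n-2$ positions to get a binary $\neq$-preserving $g$ depending on both arguments'' is not valid. First, substituting constants does not preserve the property of being a polymorphism of $\neq$: that property only constrains pairs of input tuples that differ in \emph{every} coordinate, and the tuples $(x,y,c_3,\dots,c_n)$ and $(x',y',c_3,\dots,c_n)$ agree on the fixed coordinates, so nothing forces $g(x,y)\neq g(x',y')$. Put differently, substituting a constant is composition with a constant operation, and constants are not polymorphisms of $\neq$, so this is not a legal operation inside ${\rm Pol}(\neq)$. Second, even as a pure statement about essential variables the reduction fails for a \emph{prescribed} pair $i\neq j$: the function equal to $x_1$ when $x_3=0$ and to $x_2$ otherwise depends on $x_1$ and $x_2$, yet no substitution of a constant for $x_3$ leaves both essential. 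Third, your two-by-two table on $\{0,1,2\}$ does not control $g$ off that subset when $|D_k|>3$ (you have not shown the values stay in $\{0,1,2\}$, nor determined $g$ elsewhere). The statement you actually need --- every operation preserving $\neq$ on a set of size at least $3$ is of the form $\sigma\circ\pi_j$ with $\sigma$ a permutation --- is true, but it is a Greenwell--Lov\'asz-type theorem (colorings of tensor powers of $K_d$, $d\ge 3$, depend on one coordinate), and it must either be cited as such or proved by an argument carried out at full arity $n$ with the genuine preservation condition, not via constant-substitution minors. With that step repaired (or properly referenced), the rest of your proof goes through.
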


The above lemma together with Lemma \ref{allrel} imply 
\begin{lemma}\label{threeproj}
Assume that $|{\rm pr}_{k} X|\ge 3$ and the conditions of Theorem \ref{gth} hold, so we can create 
all $R^{u,v}_{k,\ell}$ from $(X,\tau)^{+}$ (resp. from $X,\tau)^{\mho}$). Then the
 $k^{\rm th}$ component of any idempotent (supportive) aggregator $f = (f_1,\cdots,f_t)$ of $X$
must be a dictatorship. 
\end{lemma}

What if $|{\rm pr}_{k} X| = 2$?
Then the NAE gadget can be used to take care of the same thing. 

\begin{lemma}[Not-all-equal gadget lemma]\label{nae}
Assume that the multi-sorted not-all-equal relation $(|x,y,z|>1 \; \wedge \; x,y,z\in {\rm pr}_{k} X,\,(k,k,k))$ gadget-reduces 
to a set $\Gamma$ of multi-sorted relations. Then for every 
$f=(f_1,\cdots,f_t)\in {\rm MPol}(\Gamma)$
it must hold that $f_{k}$ is a dictatorship on ${\rm pr}_{k} X$.
\end{lemma}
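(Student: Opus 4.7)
The plan is to translate the gadget-reducibility of NAE into a polymorphism-preservation statement via the multi-sorted Geiger theorem, and then show that on a two-element domain, the only NAE-preserving operations compatible with the idempotency or supportiveness forced by the ambient setting are dictatorships.

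First, since the multi-sorted NAE of type $(k,k,k)$ lies in $\langle\Gamma\rangle = {\rm MInv}({\rm MPol}(\Gamma))$ by Theorem \ref{Mth}, every $f=(f_1,\ldots,f_t)\in{\rm MPol}(\Gamma)$ preserves it. Concretely, $f_k$ is a polymorphism of the ordinary Boolean NAE relation on ${\rm pr}_k X=\{0,1\}$. In the intended applications of this lemma (cf.\ Theorem \ref{gth}) the set $\Gamma$ always contains either the assignment-giving or the unary relations on ${\rm pr}_k X$, so by Lemma \ref{idsup} the component $f_k$ is automatically idempotent---and on a two-element set idempotency and supportiveness coincide.

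Next I would establish two structural properties of $f_k$. \emph{Self-duality:} given any $u\in\{0,1\}^n$, consider the three ``columns'' $u,\bar u,u$; every row $(u_i,\bar u_i,u_i)$ belongs to NAE because $u_i\neq \bar u_i$, so NAE-preservation yields $(f_k(u),f_k(\bar u),f_k(u))\in{\rm NAE}$, which forces $f_k(\bar u)=\overline{f_k(u)}$. \emph{Essential unarity:} assuming for contradiction that $f_k$ depends nontrivially on two distinct coordinates $p,q$, fix witness tuples $u^0,u^1$ differing only in coordinate $p$ with $f_k(u^0)\neq f_k(u^1)$, and analogously $v^0,v^1$ for $q$; using these witnesses together with their bitwise complements (whose $f_k$-values are determined by self-duality), one can assemble three columns $c_1,c_2,c_3\in\{0,1\}^n$ whose rows are all NAE but on which $f_k$ takes a common value, contradicting the preservation obtained in step one. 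This is the clone-theoretic fact that the polymorphism clone of Boolean NAE consists exactly of the essentially unary self-dual operations (Schaefer/Post).

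Combining these two properties, $f_k(u_1,\ldots,u_n)\in\{u_i,\bar u_i\}$ for some index $i$; idempotency (step one) evaluates at the all-$0$ input to rule out the $\bar u_i$ alternative, leaving $f_k$ the projection on coordinate $i$, i.e.\ a dictatorship on ${\rm pr}_k X$, as required.

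The main obstacle is the essential-unarity step: the explicit construction of three offending columns from dependence in two separate coordinates requires careful combinatorial bookkeeping of the rows. If an elementary by-hand argument becomes unwieldy, one may legitimately invoke Schaefer's/Post's classification of ${\rm Pol}({\rm NAE})$ on $\{0,1\}$, since this fact is standard and self-contained within the two-element Boolean clone lattice.
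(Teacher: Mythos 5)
Your proof is correct, and it is in fact more detailed than what the paper provides: the paper states Lemma \ref{nae} with no proof at all, and later (in the Arrow example) simply appeals to the folklore fact that all idempotent polymorphisms of Boolean NAE are dictatorships, which is exactly the fact you reduce to. Two points are worth recording. First, you were right to import idempotency of $f_{k}$ from the ambient context (Lemma \ref{idsup} applied to the assignment-giving or unary relations in $X^{+}$, resp.\ $X^{\mho}$): as literally stated, the claim for \emph{every} $f\in{\rm MPol}(\Gamma)$ is false, since negated projections preserve NAE; the paper's own later restatement of the lemma indeed inserts the word ``idempotent,'' so your reading matches the intended statement. Second, your sketched essential-unarity step does go through, and somewhat more simply than you anticipate: if $f_{k}$ depends on coordinate $p$ via $u,u'$ differing only at $p$ with $a=f_{k}(u)\neq f_{k}(u')$, and $f_{k}$ is not determined by coordinate $p$ alone, then (using self-duality) there exists $w$ with $w_{p}=\neg u_{p}$ and $f_{k}(w)=a$ --- otherwise $f_{k}(z)$ would be a function of $z_{p}$ only; the three columns $u$, $\bar{u}'$, $w$ have every row in NAE (at coordinate $p$ the first two entries equal $u_{p}$ and the third is $\neg u_{p}$; at every other coordinate the first two entries differ), yet $f_{k}$ sends all three columns to $a$, contradicting preservation. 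Hence $f_{k}$ is essentially unary, and idempotency excludes the negated projection, as you conclude; only the easy containment $\langle\Gamma\rangle\subseteq{\rm MInv}({\rm MPol}(\Gamma))$ of Theorem \ref{Mth} is needed for the translation step. Citing the Post/Schaefer classification of ${\rm Pol}({\rm NAE})$ instead is also perfectly legitimate --- that is precisely the level of detail at which the paper leaves this lemma.
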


Among the conditions of Theorem \ref{gth} one explicitly states that in the case when $|{\rm pr}_{k} X| = 2$
the $(|x,y,z|>1 \; \wedge \; x,y,z\in {\rm pr}_{k} X,\,(k,k,k))$ gadget can be constructed from $(X,\tau)^{+}$ (resp. from $X,\tau)^{\mho}$)
to make Lemma \ref{nae} applicable.

\medskip

Putting Lemmas \ref{threeproj} and \ref{nae} together, we conclude that if the gadgets
promised by Theorem \ref{gth} are all present, then all components of every
multi-sorted polymorphism $(f_{1},\ldots,f_{m})$ are dictatorships on the respective 
${\rm pr}_{j} X$s. 

\medskip

To finish the ``if'' part of Theorem \ref{gth} all we need to show is that 
these dictatorships are controlled by the same index $K$ (the dictator). Assume this is not the case,
and let $k$ and $\ell$ be components such that $f_{k}$ is dictated by the $K^{\rm th}$ voter
and $f_{\ell}$ is dictated by voter $L\neq K$. Because $X$ is non-degenerate 
(and so ${\rm pr}_{k} X$ and ${\rm pr}_{\ell} X$ have size at least two)
and because of K\"onig's theorem
(or simply by basic combinatorics)
there ought to be $u,v\in {\rm pr}_{k} X$ and $u',v'\in {\rm pr}_{\ell} X$ such that 
\begin{enumerate}
\item $u\neq v$ and $u'\neq v'$;
\item There is an element $U$ of $X$ that takes $u$ on $k$ and $u'$ on $\ell$;
\item There is an element $V$ of $X$ that takes $v$ on $k$ and $v'$ on $\ell$.
\end{enumerate}
Let us now aggregate a set of votes, all from $\{U,V\}$, but the $K^{\rm th}$
vote is $U$ and the $L^{\rm th}$
vote is $V$. Then $(f_{1},\ldots,f_{m})$ aggregates this input to $Z$ such that
the $k^{\rm th}$ issue aggregates to $u$ and the $\ell^{\rm th}$ issue aggregates to $v'$.
Notice now that $R^{u,v'}_{k,\ell}$ holds for $U$ and $V$, but not for $Z$, which is a contradiction,
since every aggregator must keep all gadgets constructible from $X^{+}$ ($X^{\mho}$), in particular $R^{u,v'}_{k,\ell}$.
This concludes the ``if'' (non-trivial) part of the proof of Theorem \ref{gth}.

\medskip

In the rest of the section we assume that
for two types $a,b\in [t]$ we have
$D_{a}=D_{b}$ (equivalently, a 1-1 correspondence `$=$' between $D_{a}$, $D_{b}$).
We show that the ability to construct the `$=$' gadget between types $a$ and $b$ from a set  $\Gamma$ of (multi-sorted) relations implies that for
every polymorphism $(f_{1},\ldots,f_{t})$ of $\Gamma$ we have $f_{a}=f_{b}$.
This can be useful, because the algebraic theory is developed mainly for single-sorted polymorphisms.

\begin{lemma}[Equality gadget lemma]\label{main}
Let $\Gamma$ be a set of multi-sorted relations with type set $[t]$.
Assume that for types $a$ and $b$ we have $D_{a}=D_{b}$, and that the multi-sorted 
equality relation $(x=y,\,(a,b))$ gadget-reduces to $\Gamma$.
Then for every $f=(f_1,\cdots,f_t)\in {\rm MPol}(\Gamma)$ it must hold that $f_{a}$ is identical to $f_{b}$.
\end{lemma}
\begin{proof}
We need to prove that for every $u = (u^{1},\ldots,u^{n})\in D_{a}^{n}$ $(= D_{b}^{n})$ it holds that $f_{a}(u) = f_{b}(u)$.
Consider an arbitrary $u\in D_{a}^{n}$. $f$ is a multi-sorted polymorphism of $(x=y,\, (a,b))$. This follows from the facts that
$f=(f_1,\cdots,f_t)$ is a polymorphism of $\Gamma$ and that $(x=y,\,(a,b))$ gadget-reduces to $\Gamma$.
Therefore, since each line of the table

\begin{center}
\begin{tabular}{ccc}
type $a$ &   &  type $b$ \\
              &  &     \\
$u_{1}$ & = & $u_{1}$  \\
$u_{2}$ & = & $u_{2}$ \\
  &\vdots & \\
$u_{n}$ & = & $u_{n}$  \\\hline
\rule{0pt}{3ex} $f_{a}(u)$  & = & $f_{b}(u)$
\end{tabular}
\end{center}

above the solid horizontal line satisfies the $(x=y,\, (a,b))$ relation, we can apply polymorphism $f$ for the two columns of the table. Now, as discussed, 
$f$ must keep the relation $(x=y,\, (a,b))$, so $f_{a}(u) = f_{b}(u)$.
\end{proof}

\section{Example: Arrow's Theorem}\label{arrowex}

Let ${\cal A}=\{A_{1},\ldots,A_{k}\}$ denote a set of $k$ items and let $S_{k}$ be the domain that corresponds to the set 
of $k!$ different linear orders on ${\cal A}$ in a way we describe below.
Each voter must vote for some linear order and the votes have to be aggregated into a single linear order, the ``choice of the society.''

\medskip

Instead of thinking of a linear order as is (like $A_{2}<A_{3}<A_{4}<A_{1}$) we rather represent it as
a sequence of ${k\choose 2}$ binary positions corresponding to questions of the form
\[
A_{1}<A_{2}? \; , \; A_{1}<A_{3}? \; , \;  \ldots \; , \; A_{n-1}<A_{n}?
\]
(see also Section \ref{background_sec}). When `$<$' is a linear order on ${\cal A}$, the answers 
to these questions (0 = no; 1 = yes) uniquely (and even redundantly) encodes `$<$' in the form of a valid {\em evaluation}.
The set of all valid evaluations (these are binary vectors of length ${n \choose 2}$)
is exactly the relation $S_k \subseteq \{0,1\}^{n \choose 2}$. Arrow famously shows:

\begin{theorem}[Arrow \cite{Arrow50}] \label{arrowthm} When $k\ge 3$,
there is no aggregator $f:S_{k}^{n}\rightarrow S_{k}$ for any $n\ge 2$ that satisfies IIA $+$ Idempotency $+$ Non-dictatorship.
\end{theorem}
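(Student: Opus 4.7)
The plan is to invoke Theorem \ref{gth} for the domain $X = S_k \subseteq \{0,1\}^{\binom{k}{2}}$. Since every issue of $S_k$ has binary projection, the hypotheses reduce to producing $S_k^+$-gadgets for (i) every multi-sorted ``non-binary OR'' $R^{u,v}_{p,q}$ between each ordered pair of issues $p,q$, and (ii) the relation ${\rm NAE}_{j,j,j}$ on three copies of each issue $j$. Once both families are in hand, Theorem \ref{gth} immediately concludes that $S_k$ admits no idempotent non-dictatorial IIA aggregator when $k\ge 3$.

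The atomic building block is the restriction of $S_k$ to any triangle of issues $\{(i,j),(j,l),(i,l)\}$ with $i<j<l\le k$. By transitivity of linear orders this restriction is the $6$-element ternary relation $T_{ijl}$ excluding exactly the patterns $(1,1,0)$ and $(0,0,1)$. Pinning any one coordinate of $T_{ijl}$ to an explicit constant via an assignment-giving clause in $S_k^+$ collapses $T_{ijl}$ to a binary OR, NAND, or directed implication on the remaining two issues; enumerating the six pinnings produces every $R^{u,v}_{p,q}$ whenever $p$ and $q$ share a vertex. For disjoint pairs $p,q$ (possible only when $k\ge 4$), I would chain such triangle-implications through a bridging issue using existentially quantified auxiliary variables of the intermediate type; a short path argument then realizes each remaining $R^{u,v}_{p,q}$.

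The main obstacle is the NAE gadget ${\rm NAE}_{j,j,j}$, which relates three independent copies of a single issue and is not directly available from any single $T_{ijl}$. The plan is to exploit the Condorcet cycle, the very obstruction at the heart of Arrow's argument: a triple of linear orders on any three-element subset of ${\cal A}$ that cycles (in either direction) forces both values $0$ and $1$ to occur on every pairwise issue. Concretely, I would take three copies of $T_{ijl}$ on fresh variables and use assignment-giving relations on the ``non-$j$'' coordinates to eliminate all non-cyclic common profiles, so that the three $j$-coordinates are forced to satisfy NAE. Same-type inequality $(x\ne y,(j,j))$ is then obtained from ${\rm NAE}_{j,j,j}(x,x,y)$, and combined with the OR/NAND/implication clauses of the previous paragraph this recovers every $R^{u,v}_{j,j}$ as well. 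Assembling all these gadgets and invoking Theorem \ref{gth} finishes the proof.
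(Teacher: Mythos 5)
Your overall strategy (feed $S_k$ into Theorem~\ref{gth}) is viable in principle, but the step that carries the real content --- the construction of ${\rm NAE}_{j,j,j}$ --- does not work as described. A gadget constrains variables within a \emph{single} satisfying assignment, i.e.\ a single linear order; the Condorcet-cycle picture you invoke is about three \emph{voters} in a profile, and this conflation hides the problem. If you take three copies of the triangle relation $T_{ijl}$ on fresh auxiliary variables and only tie them down with assignment-giving (constant) clauses, the three copies share no variables, so the resulting relation on $(x,y,z)$ factors as a product of three unary relations --- it can never be NAE. Nor can you repair this by letting the copies share their non-$j$ coordinates: since each occurrence of $T_{ijl}$ has one coordinate of each of the three types, the shared auxiliaries either leave the relation trivial (all triples allowed once they are existentially quantified) or force $x,y,z$ to constants. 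The same-type constraints you need (including the $(x\neq y,(j,j))$ you plan to extract from NAE) cannot be conjured this directly; the natural route is to first build \emph{cross-type} inequality gadgets by pinning constants inside $S_k$ (as in Figure~\ref{arrowinequalitygadget}), chain two of them to get cross-type equality, and only then transport a triangle relation onto three copies of one type to obtain ${\rm NAE}_{j,j,j}$. As written, your proof is circular at exactly this point.

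A secondary inaccuracy: the six pinnings of a triangle do \emph{not} give every $R^{u,v}_{p,q}$ for issues $p,q$ sharing a vertex --- for each such pair only two of the four value-combinations arise (e.g.\ for $((i,j),(j,l))$ you get the NAND and the OR but neither implication), and the missing ones again require composing with a same-type or cross-type inequality, which you have not yet established at that stage. For comparison, the paper's proof avoids Theorem~\ref{gth} and the NAE gadget altogether: it reduces to $k=3$, chooses the basis in which $S_3$ is literally the multi-sorted NAE relation, builds the cross-type inequality gadgets with pinned constants, chains them into cross-type equalities, and then applies the Equality gadget lemma (Lemma~\ref{main}) to force every polymorphism to be single-sorted, at which point the known fact that all idempotent polymorphisms of single-sorted NAE are projections finishes the argument. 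If you want to keep your Theorem~\ref{gth}-based route, you should import that inequality/equality machinery first and use it both to complete the $R^{u,v}_{p,q}$ family and to build ${\rm NAE}_{j,j,j}$.
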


It is easy to see that the impossibility of $S_{3}$ implies the impossibility of $S_{k}$ for $k\ge3$.
Below we are going to prove Theorem \ref{arrowthm} when $k=3$, using our method of gadgets. In the 
$A_{1}<A_{2}? \; A_{2}<A_{3}? \; A_{3}<A_{1}?$ basis (replacing `$A_{1}<A_{3}?$' with `$A_{3}<A_{1}?$' does not change the problem) we have that:
\[
S_{3} =\{001,010,011,100,101,110\} = {\rm NAE} \;\;\;\;\; (\mbox{Not All Equal})
\]
Since Arrow allows to aggregate different coordinates with different aggregators, we view $S_{3}$ as a multi-sorted relation
with type-set $[3]$.
We need to show 
that the only polymorphisms of the multi-sorted relation set $S_{3}^{+}$
are projections (dictatorships). 
In section \ref{our_results} we have created a gadget for the relation $(\neg (x=y), (1,3))$. Let us denote it by $R_{13}$. 
By symmetry, $S_{3}^{+}$-gadgets also exist for $R_{12}=(\neg (x=y), (1,2))$ and $R_{23}=(\neg (x=y), (2,3))$. Then the gadget
\begin{eqnarray*}
\exists x_{2}: \; R_{12}(x_{1},x_{2})\wedge R_{23}(x_{2},x_{3}) 
\end{eqnarray*}
expresses $(x_{1}=x_{3}, (1,3))$ (see Fig. \ref{arrowequalitygadget}). We can similarly express any
$(x_{a}=x_{b}, (a,b))$ for $1\le a< b\le 3$.
Once we have generated these relations, 
by Lemma \ref{main} we conclude that every multi-sorted polymorphism 
of $S_{3}^{+}$ must be a single sorted polymorphism i.e. of the form $(f_{1},f_{1},f_{1})$,
and it is well known that all single-sorted idempotent polymorphisms of NAE are dictatorships.

\begin{figure}[htb] 
\begin{center}
\includegraphics[scale=0.5]{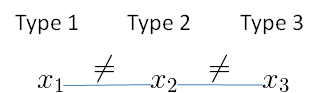}
\caption{The gadget expressing $x_1 = x_3$.}
\label{arrowequalitygadget}
\end{center}
\end{figure}

\section{Example: Pairwise distinctness}\label{distinctnessex}

We continue to illustrate how one can use hand-made gadgets to prove 
impossibility theorems for specific domains:

\begin{theorem}
Let $D$ and $m$ such that $|D|>m =2$ or $|D| = m \ge 3$. Define
\[
X = \{ (x_{1},\ldots,x_{m})\in D^{m} \mid x_{1},\ldots,x_{m} \; \; \mbox{\rm are pairwise distinct} \}
\]
Then $X$ is an impossibility domain with respect to IIA + Idempotency + Non-dictatorship conditions.
\end{theorem}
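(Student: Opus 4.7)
The plan is to invoke Theorem \ref{gth}. Since $|\mathrm{pr}_j X|=|D|\geq 3$ for every $j$ in both regimes ($|D|>m=2$ and $|D|=m\geq 3$), the binary-type alternative of Theorem \ref{gth} (the multi-sorted NAE condition) is vacuous, so it suffices to produce $(X,\tau)^{+}$-gadgets expressing $R^{u,v}_{k,\ell}$ for every $k,\ell\in[m]$ and every $u\in\mathrm{pr}_k X$, $v\in\mathrm{pr}_\ell X$. I would handle the two cases separately, since the base relation $X$ has very different combinatorial structure in each.

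For $|D|=m\geq 3$, $X$ is the set of all bijections $[m]\to D$, and its invariance under relabeling of $D$ and permutations of $[m]$ is the main lever. A single $X$-clause with $x$ placed at position $k$, $y$ at position $\ell\neq k$, and fresh type-correct variables elsewhere gives the multi-sorted inequality $(x\neq y,(k,\ell))$ directly. Fixing $m-2$ of the remaining coordinates of a copy of $X$ to specific constants via assignment-giving relations cuts $X$ down to a ``swap'' relation on the two unfixed positions; from this I obtain, for each unordered pair $\{u',v'\}\subseteq D$, the swap relation $\{(u',v'),(v',u')\}$ as an $(X,\tau)^{+}$-gadget. I would then assemble each required $R^{u,v}_{k,\ell}$ by combining several such swap- and inequality-gadgets, with additional existentially quantified variables playing the role of case-selectors, exploiting the symmetry of the $S_m\times S_D$ action on $X$.

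For $|D|>m=2$, $X$ is just the multi-sorted inequality $(x\neq y,(1,2))$, with considerably less combinatorial structure. One-quantifier gadgets of the form $\exists y:X(x,y)\wedge(y=u)$ and $\exists x:X(x,y)\wedge(x=v)$ produce the unary relations $(x\neq u,(1))$ and $(y\neq v,(2))$ respectively, so the construction of $R^{u,v}_{k,\ell}$ reduces to expressing the disjunction $(x\neq u)\vee(y\neq v)$ as a gadget. The plan is to introduce a single additional variable whose possible values, controlled by suitably chosen assignment and $X$-clauses, force at least one of the two inequalities to be witnessed.

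Once every $R^{u,v}_{k,\ell}$ lies in $\langle (X,\tau)^{+}\rangle$, Lemma \ref{threeproj} (applicable because $|\mathrm{pr}_k X|\geq 3$) yields that every coordinate $f_k$ of any idempotent multi-sorted polymorphism is a dictatorship on $\mathrm{pr}_k X$; the closing coupling argument in the proof of Theorem \ref{gth}, which uses non-degeneracy of $X$ to exhibit two elements of $X$ separating any two putative dictators, then forces all these dictatorships to share the same voter index. I anticipate the main obstacle to be the construction of $R^{u,v}_{k,\ell}$ in the sub-cases where the target relation must contain diagonal pairs $(d,d)$ absent from every projection $\mathrm{pr}_{k,\ell}X$ (notably whenever $k\neq\ell$, since $X$ forbids equal coordinates at distinct positions): conjunctions of $X$-clauses naturally produce subsets, not supersets, of the projections of $X$, so recovering the missing diagonal elements demands a careful use of auxiliary variables as case-split witnesses. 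This difficulty is sharper in the $|D|>m=2$ case, where $X$ has essentially no structure beyond a single inequality, than in the permutation case, which can draw on the rich symmetry of $S_m$.
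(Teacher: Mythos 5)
Your high-level frame (produce $(X,\tau)^{+}$-gadgets for the relations $R^{u,v}_{k,\ell}$ and close with Lemma \ref{threeproj} plus the coupling argument from Theorem \ref{gth}) is legitimate, but the proposal has a genuine gap: the gadgets that carry the entire content of the argument are never constructed. In both regimes you stop exactly at the hard step --- ``assemble each required $R^{u,v}_{k,\ell}$ by combining swap- and inequality-gadgets with case-selectors'' and ``introduce a single additional variable whose possible values \ldots force at least one of the two inequalities'' are statements of intent, not constructions --- and by Theorem \ref{gth} the existence of precisely these gadgets is \emph{equivalent} to the impossibility you are trying to prove, so nothing short of explicit constructions closes the loop (appealing to their existence would be circular). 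Your structural remark is also slightly off: a gadget relation is not forced to be a subset of ${\rm pr}_{k,\ell}X$ unless $x$ and $y$ occur together in a single $X$-clause; with auxiliary existential variables one can generate relations containing diagonal pairs, and the paper does exactly that.

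The paper's own proof avoids your bottleneck altogether and, moreover, only treats $m=2$, $|D|=3$, which it argues is the hardest setting. It never builds the family $R^{u,v}_{k,\ell}$; instead it (i) constructs, within a single type $b$, a gadget for $(\neg(x=y=a),(b,b))$ from seven $X$-clauses and three assignment-giving clauses, (ii) conjoins these over $a\in\{1,2,3\}$ to obtain the single-type disequality $(x\neq y,(b,b))$, (iii) uses those to build the cross-type equality gadget $T$ expressing $(x_{1}=x_{2},(1,2))$, and (iv) invokes the Equality gadget lemma (Lemma \ref{main}) to force every multi-sorted polymorphism of $X^{+}$ to be single-sorted of the form $(f_{1},f_{1})$, after which the known fact that the only idempotent polymorphisms of the single-sorted disequality relation on three elements are projections finishes the argument. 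To salvage your outline you must either supply the explicit $R^{u,v}_{k,\ell}$ constructions in each parameter regime, or switch to this shorter route: cross-type equality plus within-type disequality suffices and sidesteps the diagonal-pair difficulty you flagged.
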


In \cite{DH10b}, this theorem is proven under the IIA + Supportiveness + Non-dictatorship conditions.
A special case of this theorem, i.e. when $|D| = m \ge 3$, can be derived from results in \cite{FalikF11}. 
We give a gadget proof for it when $m=2$ and $|D|=3$, which is in a sense the hardest setting 
of the parameters. In this case
the relation $X\subseteq [3]^2$ is a binary relation of type $(1,2)$, and we have to show 
that ${\rm MPol}(X^{+})$ contains only projections (dictatorships). 

\medskip

We remark that the problem in this case is essentially equivalent to showing 
that three coloring of bipartite graphs is NP-hard as long as assignment giving constraints are 
also allowed, i.e. we are allowed to declare that ``vertex $v$ has color $c$''.
If we drop the bipartite condition, then the problem is well-known to be NP hard.
The bipartite condition comes from the multi-sorted nature of the problem:
relation $X$ can connect only type 1 variables with type 2 variables.

\medskip

First we create a gadget for the relation
$(\neg (x_{1}=x_{2}=2), (1,1))$. Clearly, by symmetry, then gadgets also exist for 
any $(\neg (x=y=a), (b,b))$, where $1\le a\le 3$ and $1\le b\le 2$.

\begin{figure}[htb] 
\begin{center}
\includegraphics[scale=0.5]{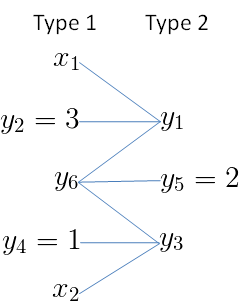}
\caption{The inequality gadget}
\label{inequalitygadget}
\end{center}
\end{figure}

The gadget that achieves this (see Fig. \ref{inequalitygadget}) corresponds to the following formula:
\begin{eqnarray*}
R(x_{1},x_{2}) & = & \exists y_{1},\ldots,y_{6}: \; X(x_{1},y_{1})\wedge X(x_{2},y_{3}) \wedge
X(y_{2},y_{1}) \wedge X(y_{4},y_{3}) \wedge X(y_{6},y_{1})\wedge \\
 & & \wedge X(y_{6},y_{3}) \wedge  X(y_{6},y_{5}) \wedge   (y_{2}=3) \wedge (y_{4}=1)   \wedge (y_{5}=2) 
\end{eqnarray*}

It is not hard to check that $R$ implements the $(\neg (x_{1}=x_{2}=2), (1,1))$ relation.
Let us now denote the relation $(\neg (x=y=a), (b,b))$ by $R_{a}^{b}$.
Then $R_{1}^{b}(x,y) \wedge R_{2}^{b}(x,y)  \wedge R_{3}^{b}(x,y) $ expresses the $(x\neq y, (b,b))$
relation. Finally, the gadget 
\[
T(x_{1},x_{2}) = \exists y_{1},y_{2}: \; (x_{1}\neq y_{1}) \wedge (x_{1}\neq y_{2}) \wedge (y_{1}\neq y_{2}) \wedge X(y_{1},x_{2}) \wedge X(y_{2},x_{2})
\]
expresses $(x_{1}=x_{2}, (1,2))$ (see Fig. \ref{equalitygadget}). Once we have generated this relation, 
by lemma \ref{main} we conclude that every multi-sorted polymorphism 
of $X^{+}$ must be a single sorted polymorphism i.e. of the form $(f_{1},f_{1})$.
It is well known that the only idempotent polymorphisms of the $x\neq y$ (single-sorted) relation are 
the dictatorships.

\begin{figure}[htb] 
\begin{center}
\includegraphics[scale=0.5]{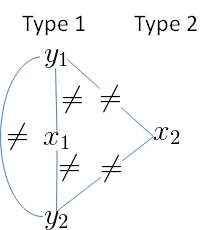}
\caption{The equality gadget}
\label{equalitygadget}
\end{center}
\end{figure}

\section{Binary Evaluations} \label{secbinaryevaluations}

In this section we first translate the total blockades condition to the algebraic language, then using it revisit the case of binary evaluation
giving an alternative proof to the classification Theorem of E. Dokow, R. Holzman, and also indicate the proof of Theorem \ref{binaryclassification}.

\subsection{Total blockedness and its consequences}

\begin{lemma}\label{blockedlemma}
Let $X \subseteq \{0,1\}^{m}$ be totally blocked 
(see Definition \ref{blockdef}) and non-degenerate. Let $(X,\tau)$ be any typing of the variables of $X$ from 
a type-set $[t]$, $D_{a}=\{0,1\}$ for $1\le a \le t$. We also assume that all types are used. Then
for all $1\le a,b \le t$: $(x=y,\, (a,b))$  (multi-sorted) gadget-reduces to $X^{+}$.
\end{lemma}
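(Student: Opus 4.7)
The plan is to show that each directed edge of the blockedness graph translates into a small $X^{+}$-gadget implementing a bit-implication, and then to use total blockedness (strong connectivity) to chain these into the desired equality gadget $(x=y,(a,b))$. If $a=b$ the multi-sorted equality relation is always available, so I will assume $a\neq b$; since all types are used, pick distinct positions $k,\ell\in[m]$ with $\tau_k=a$ and $\tau_\ell=b$, and aim to build $(x_k=x_\ell,(a,b))$.

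The atomic building block is one edge. Given an edge $(i,\sigma)\to(j,\rho)$ witnessed by a MIPE $u:S\to\{0,1\}$ with $u_i=\sigma$ and $u_j=\neg\rho$, introduce a fresh copy of the constraint $X(z_1,\ldots,z_m)$ (each $z_r$ of type $\tau_r$), use assignment-giving relations from $X^{+}$ to pin $z_r=u_r$ for every $r\in S\setminus\{i,j\}$, and existentially quantify everything except $z_i$ and $z_j$. The MIPE property immediately gives that the resulting binary relation $G^{\sigma\rho}_{ij}$ of type $(\tau_i,\tau_j)$ excludes $(\sigma,\neg\rho)$ (since $u$ itself has no extension in $X$) while containing both $(\neg\sigma,\neg\rho)$ and $(\sigma,\rho)$ (obtained by flipping the $i$th, resp.\ $j$th, bit of $u$ and extending to some element of $X$). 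So $G^{\sigma\rho}_{ij}$ is an $X^{+}$-gadget encoding the implication $z_i=\sigma\Rightarrow z_j=\rho$ together with the two ``matched'' witnesses.

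Total blockedness now supplies a directed path $(k,0)=(k_1,\sigma_1)\to(k_2,\sigma_2)\to\cdots\to(k_p,\sigma_p)=(\ell,0)$ and, symmetrically, a path back from $(\ell,0)$ to $(k,0)$. I chain the forward edge-gadgets as $\bigwedge_{s=1}^{p-1} G^{\sigma_s\sigma_{s+1}}_{k_s k_{s+1}}(w_s,w_{s+1})$, identifying $w_1=x$ and $w_p=y$, and existentially quantifying the intermediates. A short induction on $s$ shows that the conjunction admits the ``all-$\sigma$'' witness $w_s=\sigma_s$ and the ``all-$\neg\sigma$'' witness $w_s=\neg\sigma_s$, and that any satisfying tuple with $w_1=0$ forces $w_p=0$. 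Projecting to $(x,y)$ thus yields an $X^{+}$-gadget of type $(a,b)$ that contains $(0,0)$ and $(1,1)$ but excludes $(0,1)$. The reverse path analogously produces a gadget excluding $(1,0)$, and conjoining the two gives exactly the multi-sorted equality relation $(x=y,(a,b))$.

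The one potentially delicate step is the chaining: a priori, existential quantification could prune away the tuples $(0,0)$ and $(1,1)$, thereby collapsing the composite relation. The invariant that saves the argument is that every edge-gadget simultaneously contains both matched pairs $(\sigma_s,\sigma_{s+1})$ and $(\neg\sigma_s,\neg\sigma_{s+1})$; plugging the all-$\sigma$ or all-$\neg\sigma$ values into the intermediate $w_s$ shows that these endpoints survive the composition. Typing is automatic, since each intermediate variable inherits its type from the coordinate of $X$ it occupies, so no type mismatch can occur.
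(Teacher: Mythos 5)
Your proposal is correct and follows essentially the same route as the paper: an edge of the blockedness graph yields, via the MIPE and assignment-giving relations pinning the remaining coordinates of a fresh copy of $X$, a binary gadget excluding exactly the "forbidden" pair while containing the two matched pairs, and strong connectivity lets you chain these into implication gadgets whose conjunction is the equality relation. The only cosmetic difference is that you obtain the second exclusion from the reverse path $(\ell,0)\to(k,0)$, whereas the paper uses a second forward chain on the complementary bits; both are immediate consequences of total blockedness.
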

\begin{proof}
Recall that the total blockedness condition means that on the vertex set $V = [m] \times \{0,1\}$ 
we have a strongly connected graph defined as follows:
There is a directed edge from $(k,\epsilon)\in V$ to $(\ell,\epsilon')\in V$ where $k \neq \ell$ if and only if there are:
(i.) a subset $S\subseteq [m]$ such that $k,\ell \in S$ and (ii.) a (partial-)evaluation $u: S\rightarrow \{0,1\}$ 
with $u_{k} = \epsilon$ and $u_{\ell} = 1 - \epsilon'$ such that 
there is no extension of $u$ to any full evaluation $x$ in $X$, but if we flip any bit 
of $u$ then the resulting partial evaluation extends to some element of $X$.
Let us focus on a directed edge $((k,\epsilon), (\ell,\epsilon'))$ 
as above, with $S=\{k,\ell,s_{1},\ldots,s_{q}\}$ (we fix this $S$ for $k,\ell,\epsilon,\epsilon'$), and create the gadget
\[
E_{k,\ell,\epsilon,\epsilon'}(x_{k},x_{\ell}) = \exists y_{s_{1}},\ldots, y_{s_{q}}, \vec{y} :\;
X(x_{k},x_{\ell},y_{s_{1}},\ldots,y_{s_{q}}, \vec{y}) \wedge (y_{s_{1}}  = u_{s_{1}}) \wedge 
\ldots \wedge (y_{s_{q}}  = u_{s_{q}}) 
\]
Here with some abuse of the notation we tried to indicate, via the indices, the 
variables' positions in $X$. In particular, $\vec{y}$ collects the $m-2-q$ variables of $X$ that 
are not in $S=\{k,\ell,s_{1},\ldots,s_{q}\}$. We remark that the type of a variable is uniquely determined by its position
in $X$. Then we have
\[
E_{k,\ell,\epsilon,\epsilon'}(\epsilon,\epsilon')=1,\;\;\;\; E_{k,\ell,\epsilon,\epsilon'}(\epsilon,1- \epsilon')=0,\;\;\;\; E_{k,\ell,\epsilon,\epsilon'}(1-\epsilon,1 - \epsilon')=1
\]
All three equations follow from the fact that $u: S\rightarrow \{0,1\}$ 
with $u_{k} = \epsilon$ and $u_{\ell} = 1 - \epsilon'$ was a minimally unsatisfying partial assignment,
thus if we change the value of exactly one of the $u_{k},u_{\ell}$, the assignment becomes satisfying.
Consider now a chain 
\[
((k^{0},\epsilon^{0}), (k^{1},\epsilon^{1})),\; ((k^{1},\epsilon^{1}), (k^{2},\epsilon^{2})),\;\ldots, 
((k^{t-1},\epsilon^{t-1}), (k^{t},\epsilon^{t}))
\]
of edges in the blockedness graph,  for which we have generated relations $E_{k^{0},k^{1},\epsilon^{0},\epsilon^{1}},\ldots,E_{k^{t-1},k^{t},\epsilon^{t-1},\epsilon^{t}}$ as above. Create the gadget
\[
R(x_{k^{0}},x_{k^{t}}) = \exists y_{k^{1}},\ldots,y_{k^{t-1}} :\; E_{k^{0},k^{1},\epsilon^{0},\epsilon^{1}}(x_{k^{0}},y_{k^{1}})
\wedge \ldots \wedge E_{k^{t-1},k^{t},\epsilon^{t-1},\epsilon^{t}}(y_{k^{t-1}},x_{k^{t}})
\]
It is easy to see that the typing is consistent.
If we set $x_{k^{0}}=\epsilon^{0}$ then inductively all $y_{k_{i}}$ variables are 
forced to take $\epsilon^{i}$, eventually forcing
 $x_{k^{t}}=\epsilon^{t}$. On the other hand 
 $R(1- \epsilon^{0},1-\epsilon^{t})=1$. To show this it is sufficient to set all $y_{k_{i}}$ variables to $1-\epsilon^{i}$ in the right hand side of the above formula.
 Since the blockedness graph is strongly connected, for any $a,b\in [t]$ with ${\rm type}(x_{k})=a,\; {\rm type}(x_{\ell})=b$ 
 for some $1\le k,\ell\le m$ and for any $\epsilon_{k},\epsilon_{\ell}\in\{0,1\}$ we can build now
 gadget $R_{a,b,\epsilon,\epsilon'}$ that forces $x_{\ell}=\epsilon_{\ell}$ as long as $x_{k}=\epsilon_{k}$ and also 
 permits the $x_{k}=1-\epsilon_{k}$, $x_{\ell}=1-\epsilon_{\ell}$ assignment,
 and a gadget $R'_{a,b,1-\epsilon,1-\epsilon'}$ that forces $x_{\ell}=1-\epsilon_{\ell}$ as long as $x_{k}=1-\epsilon_{k}$ and also permits the 
 $x_{k}=\epsilon_{k}$, $x_{\ell}=\epsilon_{\ell}$
 assignment. Then $R_{a,b,\epsilon,\epsilon'}\wedge R'_{a,b,1-\epsilon,1-\epsilon'}$ implements 
 $x_{k}+\epsilon_{k} = x_{\ell}+\epsilon_{\ell}\;\mod 2$.
 In particular, by choosing $\epsilon{k} = \epsilon{\ell} = 0$ we have implemented the $(x=y,(a,b))$ relation.
\end{proof}

\begin{example}
Let $X=\{000,011,101,110\}= \{xyz\in\{0,1\}^{3}\mid x+y+z=0 \,\mod 2\}$. We also assume that $[t]=3$ and the $i^{\rm th}$
coordinate has type $i$. Then the minimally infeasible partial evaluations (MIPEs) are all those $xyz\in\{0,1\}^3$
for which $x+y+z=1 \; \mod 2$. Then the blockedness graph is the directed complete graph (i.e. directed edges are drawn both ways
for every edge). For $k=1,\; \ell=2,\; \epsilon_{k}=0, \epsilon_{\ell}=1$ we can create the gadget (based on $S=\{1,2,3\}$,  $u=001$):
\[
E_{1,2,0,1}(x_{1},x_{2}) = \exists y:\; X(x_{1},x_{2},y)\wedge (y=1)
\]
This together with $E_{1,2,1,0}$ created from $S=\{1,2,3\}$,  $u=111$
(accidentally, the two gadgets turn out to be the same, so the conjunction remains $E_{1,2,0,1}(x_{1},x_{2})$)
gives the $(x\neq y,(1,2))$ relation, as one can check it directly.
\end{example}

\begin{lemma}\label{eql}
Let $X\subseteq \{0,1\}^{m}$, totally blocked. Then every aggregator $f=(f_{1},\ldots,f_{m})$
which is IIA + Idempotent satisfies that all $f_{j}$'s are identical.
\end{lemma}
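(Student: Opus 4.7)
The plan is to chain together the two main tools already developed in the section: Lemma \ref{blockedlemma}, which manufactures the multi-sorted equality gadgets from total blockedness, and Lemma \ref{main}, which says that such gadgets force the corresponding components of any multi-sorted polymorphism to coincide. Almost all of the work has therefore already been done; what remains is to set up the multi-sorted viewpoint correctly so that these two lemmas can be applied.

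First I would impose the natural typing on $X$: take $t=m$, assign type $j$ to the $j^{\rm th}$ coordinate, and set $D_{j}=\{0,1\}$ for every $j\in[m]$. Observe that total blockedness forces $X$ to be non-degenerate (if some ${\rm pr}_{j}X$ were a singleton then one of the vertices $(j,0),(j,1)$ would have no incident edges, contradicting strong connectivity of the blockedness graph), so ${\rm pr}_{j}X=D_{j}=\{0,1\}$ for all $j$ and all types are used. Under this typing the aggregator $f=(f_{1},\ldots,f_{m})$ is, by definition, a multi-sorted polymorphism of $(X,(1,\ldots,m))$. Moreover, by Lemma \ref{idsup}, idempotency of $f$ is equivalent to $f$ being a multi-sorted polymorphism of every assignment-giving relation, so $f\in{\rm MPol}(X^{+})$.

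Next I would invoke Lemma \ref{blockedlemma} with this typing: for every pair of types $a,b\in[m]$ the multi-sorted equality relation $(x=y,(a,b))$ gadget-reduces to $X^{+}$. In particular $(x=y,(a,b))\in\langle X^{+}\rangle$. By Theorem \ref{Mth}, every member of ${\rm MPol}(X^{+})$ preserves each relation in $\langle X^{+}\rangle$; equivalently, one may directly apply Lemma \ref{main}, whose hypothesis is precisely the existence of such an equality gadget between types $a$ and $b$ with $D_{a}=D_{b}$. The conclusion of that lemma is that $f_{a}=f_{b}$ as functions $\{0,1\}^{n}\to\{0,1\}$.

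Since $a,b$ were arbitrary, this gives $f_{1}=f_{2}=\cdots=f_{m}$, which is the statement of the lemma. There is essentially no obstacle in this argument: the real content is Lemma \ref{blockedlemma}, and the only point one must be careful about is to verify that the hypothesis of Lemma \ref{blockedlemma} applies under the trivial typing $(1,\ldots,m)$ and that idempotency of $f$ is exactly what is needed to upgrade $f$ from a polymorphism of $X$ to a polymorphism of $X^{+}$.
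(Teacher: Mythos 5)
Your proposal is correct and follows essentially the same route as the paper: type $X$ as $(X,(1,\ldots,m))$, note that idempotency puts $f$ in ${\rm MPol}(X^{+})$, obtain the equality gadgets $(x=y,(a,b))$ from Lemma \ref{blockedlemma}, and conclude $f_a=f_b$ via Lemma \ref{main}. Your extra check that total blockedness forces non-degeneracy (so Lemma \ref{blockedlemma} applies) is a careful touch the paper leaves implicit.
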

\begin{proof}
Let $t=m$ and view $X$ as the multi-sorted relation $(X,(1,2,\ldots,m))$.
Then $(x=y,\, (k,\ell))$ $(1\le k,\ell\le m)$ gadget-reduces to $X^{+}$ by Lemma $\ref{blockedlemma}$. 
Since $f$ is an idempotent IIA aggregator of $X$ we have that $f\in {\rm MPol}(X^{+})$. 
Combining the above two things the statement then follows from
Lemma \ref{main}.
\end{proof}

We remark (although do not use it in the sequel) that total blockedness also generates all non-equal relations:

\begin{lemma}\label{neql}
Let $X\subseteq \{0,1\}^{m}$, totally blocked. and $X$ has type $(X,(1,\ldots,m))$. Then $X$ (as a multi-sorted
relations) generates all relations of the form $(x\neq y,(a,b))$, where $a,b\in[t]$.
\end{lemma}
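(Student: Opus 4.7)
\textbf{Proof plan for Lemma \ref{neql}.}
The strategy is to recycle the gadget-construction machinery developed inside the proof of Lemma \ref{blockedlemma}, simply varying the bit constants. There, for any chosen type pair $(a,b)$ coming from positions $k,\ell$ in $X$ and any choice of $\epsilon_k,\epsilon_\ell\in\{0,1\}$, strong connectivity of the blockedness graph produces directed paths both from $(k,\epsilon_k)$ to $(\ell,\epsilon_\ell)$ and from $(k,1-\epsilon_k)$ to $(\ell,1-\epsilon_\ell)$. Walking these chains through the edge-gadgets $E_{\cdot,\cdot,\cdot,\cdot}$ and taking the conjunction of the two resulting path-gadgets $R_{a,b,\epsilon,\epsilon'}$ and $R'_{a,b,1-\epsilon,1-\epsilon'}$ cuts the admissible pairs down to exactly $\{(\epsilon_k,\epsilon_\ell),(1-\epsilon_k,1-\epsilon_\ell)\}$, i.e.\ it implements the affine relation $x_k+\epsilon_k\equiv x_\ell+\epsilon_\ell\pmod 2$.

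The specialization in Lemma \ref{blockedlemma} was $\epsilon_k=\epsilon_\ell=0$, which yields $(x=y,(a,b))$. For the present lemma I would instead take $\epsilon_k=0$ and $\epsilon_\ell=1$ (or symmetrically $\epsilon_k=1,\epsilon_\ell=0$). The resulting affine relation is then
\[
x_k+0\equiv x_\ell+1\pmod 2,
\]
which is exactly $x_k\neq x_\ell$, i.e.\ the multi-sorted relation $(x\neq y,(a,b))$. Because the blockedness graph is strongly connected, this construction is available for every ordered pair of positions, hence for every ordered pair of types in $[t]=[m]$ (the assumed typing $(1,\ldots,m)$ puts each type on exactly one coordinate, so every $(a,b)\in[t]^2$ corresponds to a unique $(k,\ell)$).

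I do not anticipate any real obstacle, since the analysis of which pairs survive $R_{a,b,\epsilon,\epsilon'}\wedge R'_{a,b,1-\epsilon,1-\epsilon'}$ was already carried out in the proof of Lemma \ref{blockedlemma}: the forward direction of each path-gadget rules out the ``mixed'' pair $(\epsilon_k,1-\epsilon_\ell)$, while the forward direction of the complementary path-gadget rules out $(1-\epsilon_k,\epsilon_\ell)$, leaving precisely two admissible pairs. The one mild point to verify is that the typing inside the chained gadget is consistent, which follows verbatim as in Lemma \ref{blockedlemma} since a variable's type is determined by its coordinate position in $X$. Thus $(x\neq y,(a,b))\in\langle X^{+}\rangle$ for all $a,b\in[t]$, and the lemma follows.
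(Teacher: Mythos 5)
Your proposal is correct and is exactly the argument the paper intends: the proof of Lemma \ref{blockedlemma} already builds the general affine gadget implementing $x_{k}+\epsilon_{k}=x_{\ell}+\epsilon_{\ell} \bmod 2$ for arbitrary $\epsilon_{k},\epsilon_{\ell}$, and the paper obtains equality by taking $\epsilon_{k}=\epsilon_{\ell}=0$, leaving Lemma \ref{neql} as the unproved remark that the choice $\epsilon_{k}=0,\epsilon_{\ell}=1$ gives $(x\neq y,(a,b))$ --- precisely your specialization (cf.\ the worked example with the affine relation, where the same gadgets yield $(x\neq y,(1,2))$).
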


\subsection{A new proof of Theorem \ref{binarycase}}

We give a new proof of the interesting part of Theorem \ref{binarycase} (E. Dokow, R. Holzman, \cite{DH10a}), i.e.
that when $X$ is totally blocked, it is a possibility domain (with respect to IIA+Idempotency+No Dictatorship) if and only if 
it is an affine subspace. The new proof uses results from the algebraic theory 
of constraint satisfaction problems.

\medskip

Lemma \ref{eql} gives that when $X$ is totally blocked, 
all idempotent multi-sorted polymorphisms of $X$ are single-sorted and we can use
Schaefer's theorem, or more precisely algebraic version of it (Hubie Chen \cite{Chen09}) 
to determine the types of functions in ${\rm Pol}(X)$:

\begin{theorem}[Schaefer, algebraic version] \label{schaeferalg}
Let $D=\{0,1\}$ and $\Gamma$ be a set (single-sorted) relations on $D$.
Then $\Gamma^{+}$ either has one of the following four operations as a polymorphism:
\begin{enumerate}
\item The binary AND operation $\wedge$;
\item The binary OR operation $\vee$;
\item The ternary majority operation ${\rm Maj}_{3}(x,y,z)=(x\wedge y)\vee (x\wedge z)\vee (y\wedge z)$;
\item The Mal'tsev operation $u-v+w \mod 2$. 
\end{enumerate}
Otherwise ${\rm Pol}(\Gamma^{+})$ contains only projections (dictatorships).
\end{theorem}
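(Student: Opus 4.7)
The plan is to derive Theorem \ref{schaeferalg} from Post's complete classification of clones on $\{0,1\}$, restricted to the idempotent part. The starting observation is that $\Gamma^{+}$ contains the assignment‑giving unary relations $(x=0)$ and $(x=1)$, so every $f\in{\rm Pol}(\Gamma^{+})$ must preserve them, which forces $f(0,\ldots,0)=0$ and $f(1,\ldots,1)=1$. Therefore ${\rm Pol}(\Gamma^{+})$ is an \emph{idempotent} clone on $\{0,1\}$, and the task reduces to classifying idempotent clones on $\{0,1\}$.

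First I would recall the structural fact, essentially a finite fragment of Post's lattice, that the lattice of idempotent clones on $\{0,1\}$ has a unique minimum (the clone of projections) and exactly four atoms above it, generated respectively by $\wedge$, $\vee$, ${\rm Maj}_{3}$, and the Mal'tsev operation $u\oplus v\oplus w$. The goal is then to show that if ${\rm Pol}(\Gamma^{+})$ strictly contains the projections, it contains one of these four operations.

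To prove this reduction I would take a non‑projection $f\in{\rm Pol}(\Gamma^{+})$ of minimal arity $n$ and perform a case analysis based on the $n$ unit vectors $e_{1},\ldots,e_{n}\in\{0,1\}^{n}$ (together with their complements), using the shorthand that any identification of arguments of $f$ yields another polymorphism of $\Gamma^{+}$, and any such identification has strictly smaller arity, hence by minimality is a projection. If $f(e_{i})=0$ for all $i$ but $f(1,\ldots,1)=1$, identifying coordinates pairwise produces a binary polymorphism which, after checking the four boolean values, must equal $\wedge$ (the dual case gives $\vee$). Otherwise $f$ is balanced on the unit vectors, and by grouping coordinates into three classes one obtains a ternary polymorphism $g$ with $g(0,0,0)=0$, $g(1,1,1)=1$; examining $g$ on the six mixed triples shows $g$ is either $\mathrm{Maj}_{3}$ or the Mal'tsev operation, depending on whether $g(0,1,1)$ equals $1$ or $0$ together with the analogous values. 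The upshot is that any non‑projection polymorphism can be massaged down to exhibit one of the four canonical operations.

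The main obstacle is the finite but genuinely case‑heavy analysis of the ternary step, i.e.\ showing that a ternary idempotent $g$ on $\{0,1\}$ that is not a projection must be either $\mathrm{Maj}_{3}$ or Mal'tsev. The cleanest way to handle it is to note that such $g$ is determined by the six values $g(e)$ for $e\in\{0,1\}^{3}\setminus\{000,111\}$, enumerate the $2^{6}=64$ idempotent ternaries, and check that the only ones not reducible, by variable identification, to a binary $\wedge$ or $\vee$ or to a projection are exactly $\mathrm{Maj}_{3}$ and $u\oplus v\oplus w$. Once this ternary classification is in hand the theorem follows, because any larger‑arity non‑projection polymorphism produces, by the minimality argument above, a binary or ternary witness of one of the four types.
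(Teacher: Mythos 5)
The paper itself does not prove this statement: it is imported as a known result (Schaefer's dichotomy in its algebraic formulation, cited to Chen), so your proposal has to stand on its own. Your overall route is the standard one and is sound at the top level: the relations $(x=0)$ and $(x=1)$ in $\Gamma^{+}$ force every polymorphism to be idempotent, and then one invokes the classification of minimal clones on $\{0,1\}$ (Post): every nontrivial clone contains a minimal one, the minimal clones are generated by $\neg$, the two constants, $\wedge$, $\vee$, ${\rm Maj}_3$ and $x\oplus y\oplus z$, and idempotency excludes the first three. Had you stopped at citing that fact, the theorem would follow immediately.

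The gap is in the self-contained case analysis you offer in place of that citation. First, the ternary enumeration claim is false: take $g(x,y,z)={\rm Maj}_3(x,y,\neg z)$. It is idempotent, it is not a projection, and all of its identification minors are projections ($g(x,x,y)=x$, $g(x,y,x)=y$, $g(x,y,y)=x$), yet it is neither ${\rm Maj}_3$ nor $x\oplus y\oplus z$; the same holds for ${\rm Maj}_3(x,\neg y,z)$ and ${\rm Maj}_3(\neg x,y,z)$. So your $2^6$-case check cannot come out as stated, and mere variable identification does not suffice — one needs genuine composition (for instance $g(x,y,g(y,x,z))={\rm Maj}_3(x,y,z)$ rescues this particular case), i.e.\ exactly the content of the minimal-clone classification you were trying to avoid. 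Second, the reduction from arity $n\ge 4$ to arity $\le 3$ is circular as written: under your own minimality assumption, grouping the $n$ coordinates into two or three blocks produces a minor of arity $<n$, which is then a projection, so it can never "turn out to be $\wedge$" or "be ${\rm Maj}_3$ or Mal'tsev." What is actually needed there is the semiprojection argument (\'Swierczkowski's lemma): if $n\ge 4$ and all identification minors of $f$ are projections, they are the same projection ${\rm pr}_i$, and since on a two-element domain every tuple of length $\ge 3$ has a repeated coordinate, $f={\rm pr}_i$, contradicting non-projectivity; this bounds the minimal arity of a non-projection by $3$ and only then does the (corrected) ternary analysis apply.
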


\medskip

Theorem \ref{schaeferalg} gives that when $X$ is totally blocked and $X$ is not an impossibility domain then one of 
the cases 1.-4. must hold. But \cite{DH10a} proves more, it shows that 
when $X$ is totally blocked, only Case 4. and the default case (i.e. no non-trivial polymorphisms) may occur.
We provide a brief proof of this. We exclude Cases 1.-3. as follows:

\medskip


\medskip

\noindent{\em Excluding 1. and 2:} We show that if $\vee\in {\rm Pol}(X)$, then
in the blockedness graph, no node of the form $(k,1)$ has a directed edge to any 
node of the form $(\ell,0)$, so the blockedness graph cannot be strongly (or anyhow) connected. 
For this it is sufficient to show that every 
MIPE has at most one variable set to 1. Suppose that 
$x_k=x_{\ell}=1$ is part of a MIPE with $k \neq \ell$ and the rest of MIPE evaluates to 
$\alpha$. By the definition we have assignments:

\medskip

\begin{center}
\begin{tabular}{ccccccccc}
  & $x_k$ & $x_{\ell}$ & rest of MIPE & rest&  &  & & \\ \hline\hline
( & $1$ & $1$ & $\alpha$ & any &) & never $\in$ & $X$ & (since it is MIPE) \\\hline
( & $0$ & $1$ & $\alpha$ & some &) & $\in$ & $X$  & (since it was MIPE) \\
( & $1$ & $0$ & $\alpha$ & some &) & $\in$ & $X$  & (since it was MIPE) \\\hline
( & $0\vee 1$ & $1\vee 0$ & $\alpha$ & some & ) & $\in$ & $X$  & (assgnm 2 $\vee$ assgnm 3)  \\
\end{tabular}
\end{center}

\medskip

Then the first and fourth lines of the table contradict to each other. An analogous proof shows that when 
$\wedge\in {\rm Pol}(X)$ the blockedness graph is not strongly connected.

\medskip

\noindent{\em Excluding 3:} Assume that $X$ is totally blocked and ${\rm Maj}_{3}\in {\rm Pol}(X)$. First we show: 
\begin{lemma}\label{atmost2}
If ${\rm Maj}_{3}\in {\rm Pol}(X)$ then every MIPE for $X$ has length at most two.
\end{lemma}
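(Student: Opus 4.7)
The plan is to argue by contradiction. Suppose $u: S \to \{0,1\}$ is a MIPE of $X$ with $|S| \geq 3$, and pick any three distinct coordinates $k_1, k_2, k_3 \in S$. By minimality of $u$, for each $i \in \{1,2,3\}$ the partial assignment $u^{(i)}$ obtained by flipping $u$ at the single coordinate $k_i$ extends to some full evaluation $x^{(i)} \in X$. In particular, $x^{(i)}$ agrees with $u$ on $S \setminus \{k_i\}$ and carries the flipped value $1 - u_{k_i}$ at position $k_i$.

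Next I would apply ${\rm Maj}_{3}$ coordinate-wise to $x^{(1)}, x^{(2)}, x^{(3)}$. Since ${\rm Maj}_{3} \in {\rm Pol}(X)$, the tuple $z := {\rm Maj}_{3}(x^{(1)}, x^{(2)}, x^{(3)})$ lies in $X$. Now inspect $z$ on each coordinate of $S$: at position $k_i$ ($i \in \{1,2,3\}$), exactly one of the three input values disagrees with $u_{k_i}$ while the other two match it, so the majority is $u_{k_i}$; at every other coordinate $k \in S \setminus \{k_1,k_2,k_3\}$, all three $x^{(i)}$'s equal $u_k$, so the majority is again $u_k$. Thus $z$ agrees with $u$ throughout $S$, extending $u$ to a full evaluation in $X$ and contradicting the infeasibility of $u$.

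The step requiring a little care is the bookkeeping in the middle, i.e.\ confirming that for any three chosen coordinates of $S$, each $x^{(i)}$ is forced to agree with $u$ at exactly the right positions so that coordinate-wise ${\rm Maj}_3$ reconstructs $u$ on all of $S$. Beyond that there is no real obstacle: the polymorphism property of ${\rm Maj}_{3}$ does the heavy lifting, and the case $|S| = 3$ and the case $|S| > 3$ are handled uniformly, since coordinates outside $\{k_1,k_2,k_3\}$ are assigned unanimously by the three witnesses.
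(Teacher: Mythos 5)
Your proposal is correct and follows essentially the same argument as the paper: take the three extensions guaranteed by minimality (one per flipped coordinate), apply ${\rm Maj}_{3}$ coordinate-wise, and observe that the result agrees with the MIPE on its entire domain, contradicting infeasibility. The paper presents this via a table of assignments, but the underlying reasoning is identical.
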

\begin{proof}
Consider a MIPE $S$, which contrary to our assumption has at least 
three elements, $x_{1},x_{2},x_{3}$.
Assume that $x_{i}$ evaluates to $u_{i}$ for $1\le i\le 3$, while the rest of the MIPE evaluates to $\alpha$. 
Then the first and fifth lines of the following table together give a contradiction:

\medskip

\begin{tabular}{cccccccccc}
  & $x_{1}$ & $x_{2}$ & $x_{3}$ & rest of MIPE & rest &  &  & & \\ \hline\hline
( & $u_{1}$ & $u_{2}$ &  $u_{3}$  & $\alpha$ & any &) & never $\in$ & $X$ & (since it is MIPE) \\\hline
( & $1-u_{1} $ & $u_{2}$ &   $u_{3}$ & $\alpha$ & some &) & $\in$ & $X$  & (since it was MIPE) \\
( & $u_{1}$ & $1-u_{2}$ &  $u_{3}$  & $\alpha$ & some &) & $\in$ & $X$  & (since it was MIPE) \\
( & $u_{1}$ & $u_{2}$ & $1 - u_{3}$  & $\alpha$ & some &) & $\in$ & $X$  & (since it was MIPE) \\\hline
( & $u_{1}$ & $u_{2}$ &   $u_{3}$     & $\alpha$ & some & ) & $\in$ & $X$  & ${\rm Maj}_{3}$(assgnms 2,3,4)  \\
\end{tabular}

\end{proof}
Assume now that we have an edge from $(k,\epsilon)$ to $(\ell,\epsilon')$ in the blockedness graph.
Since the MIPE creating this edge, by Lemma \ref{atmost2} has length two (cannot have length one),
we conclude that $x_k=\epsilon$ forces $x_{\ell} =\epsilon'$.
Consider any $x\in X$, by our assumption the total blockedness
graph is strongly connected so there is a path from $(1,x_{1})$ to 
$(1,1-x_{1})$, which in the light of the above argument means that $(1,x_{1})$
forces $(1,1-x_{1})$ through a sequence of edges, which is an obvious
contradiction. Thus $X$ is empty.

\medskip 

Schaefer's theorem now tells us that $X$ must be either an impossibility 
domain or an affine subspace.
So what about the case when $X$ is not totally blocked? Then we use the following deep theorem of A. Bulatov and P. Jeavons:

\begin{theorem}[Bulatov and Jeavons \cite{BulatovJ03}]
Let $D=\{0,1\}$ and $\Gamma$ be a set of multi-sorted relations on $D$ with type set $[t]$. 
Then for every type $j\in[t]$
either for every $\vec{f} = (f_{1},\cdots,f_{t}) \in {\rm MPol}{(\Gamma^{+})}$
the $j^{\rm th}$ component is a dictatorship or
there is an $\vec{f} = (f_{1},\cdots,f_{t}) \in {\rm MPol}{(\Gamma^{+})}$ such
that $f_{j}$ is one of 
\begin{enumerate}
\item the semi-lattice operation $u\vee v$ or $u\wedge v$, 
\item the majority operation $(u\vee v)\wedge (v\vee w)\wedge (w\vee u)$,
\item the Mal'tsev operation $u-v+w \mod 2$. 
\end{enumerate}
\end{theorem}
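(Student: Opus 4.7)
The plan is to deduce this per-type dichotomy from the single-sorted Boolean Schaefer theorem (Theorem~\ref{schaeferalg}) applied to the ``$j$-th component clone'' of the multi-sorted polymorphism set. Fix $j\in [t]$ and set
\[
C_{j} \;=\; \{\, f_{j} \;:\; (f_{1},\ldots,f_{t})\in {\rm MPol}(\Gamma^{+}) \,\}.
\]
The strategy is to verify that $C_{j}$ is an \emph{idempotent clone} of Boolean operations and then invoke Theorem~\ref{schaeferalg} to force exactly one of the stated alternatives.

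First I would check that $C_{j}$ is a clone. It contains every projection because the trivial multi-sorted projection tuple $(\pi^{n}_{k},\ldots,\pi^{n}_{k})$ preserves every relation, hence lies in ${\rm MPol}(\Gamma^{+})$. It is closed under composition because ${\rm MPol}(\Gamma^{+})$ is closed under coordinate-wise composition: given $\vec{f}\in {\rm MPol}(\Gamma^{+})$ of arity $n$ and $\vec{g}^{(1)},\ldots,\vec{g}^{(n)}\in {\rm MPol}(\Gamma^{+})$ of common arity $k$, the tuple $\vec{h}$ with components $h_{i}(\vec{x}) = f_{i}(g_{i}^{(1)}(\vec{x}),\ldots,g_{i}^{(n)}(\vec{x}))$ preserves every relation in $\Gamma^{+}$. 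Its $j$-th component is the usual composite $f_{j}(g_{j}^{(1)},\ldots,g_{j}^{(n)})$, so $C_{j}$ is closed. Idempotency of members of $C_{j}$ is forced by the ``$+$'': the unary assignment-giving relation $(x=a,(j))$ lies in $\Gamma^{+}$ for each $a\in\{0,1\}$, and preservation of this relation on the all-$a$ input forces $f_{j}(a,\ldots,a)=a$.

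Next I invoke the single-sorted Geiger duality (Theorem~\ref{geig}). Let $\Delta = {\rm Inv}(C_{j})$ be the set of Boolean relations preserved by $C_{j}$. Then ${\rm Pol}(\Delta) = [C_{j}] = C_{j}$ because $C_{j}$ is already a clone. Since $C_{j}$ is idempotent, every singleton $\{a\}\subseteq \{0,1\}$ lies in $\Delta$, so $\Delta^{+}=\Delta$ and ${\rm Pol}(\Delta^{+})=C_{j}$. Theorem~\ref{schaeferalg} applied to $\Delta$ now yields the dichotomy: either $C_{j}$ contains one of $\wedge$, $\vee$, ${\rm Maj}_{3}$, or $u-v+w \mod 2$, or $C_{j}$ consists only of projections --- which is exactly the statement of the theorem.

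The main obstacle I expect is the coordinate-wise composition step for multi-sorted polymorphisms: one must verify relation by relation that the composed tuple $\vec{h}$ is again a polymorphism of $\Gamma^{+}$, keeping careful track of which variables of each relation carry which type. The verification itself is routine, but the multi-sorted indexing is easy to fumble. Once this composition lemma and the resulting clone structure on $C_{j}$ are pinned down, the remainder of the argument is a direct application of previously cited machinery (the single-sorted Geiger duality and Schaefer's algebraic classification on $\{0,1\}$), and no new combinatorics is required.
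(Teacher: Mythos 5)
Your proposal is correct, but it takes a different route from the paper: the paper does not prove this statement at all --- it quotes it as an external ``deep theorem'' of Bulatov and Jeavons \cite{BulatovJ03} and uses it as a black box to obtain Theorem \ref{binaryclassification}. Your argument instead derives the per-type statement directly: you observe that the component set $C_{j}=\{f_{j} : \vec{f}\in {\rm MPol}(\Gamma^{+})\}$ is closed under coordinate-wise composition (the composed tuple $\vec{h}$ with $h_{b}=f_{b}(g_{b}^{(1)},\ldots,g_{b}^{(n)})$ is again a multi-sorted polymorphism, checked relation by relation exactly as you describe), contains all projections, and is idempotent because the assignment-giving relations $(x=a,(j))$ sit in $\Gamma^{+}$; hence $C_{j}$ is an idempotent Boolean clone, and the single-sorted Geiger duality gives $C_{j}={\rm Pol}({\rm Inv}(C_{j}))$ with ${\rm Inv}(C_{j})^{+}={\rm Inv}(C_{j})$, so Theorem \ref{schaeferalg} yields precisely the stated dichotomy: either $C_{j}$ contains one of $\wedge$, $\vee$, ${\rm Maj}_{3}$, $u-v+w \bmod 2$, or $C_{j}$ is only projections. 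This is a legitimate and essentially self-contained proof of the statement as it is used in the paper, and it is more elementary than invoking the full multi-sorted machinery of \cite{BulatovJ03}; what the citation buys the paper is brevity and the broader conservative/multi-sorted classification, whereas your argument buys transparency and shows that, for this particular per-coordinate Boolean statement, nothing beyond Post/Schaefer and the Pol--Inv Galois connection is needed. Two minor remarks: the Geiger detour is optional, since Post's classification of idempotent Boolean clones gives the same conclusion directly from the fact that $C_{j}$ is an idempotent clone properly containing the projections; and the case $|D_{j}|=1$ is trivial and worth a one-line disclaimer.
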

 
From this Theorem \ref{binaryclassification} easily follows.

\section{Non-binary Evaluations}

In this section we complete the classification theorem of Dokow and Holzman \cite{DH10b} for non-binary evaluations.
While the two authors have only considered the supportive case, here 
we resolve the idempotent case as well. 

\subsection{General notion of total blockedness}\label{totblsec}

Dokow and Holzman \cite{DH10b} have developed the the notion of total blockedness in the non-binary case, i.e. for a general domain $D$.
Our characterization theorems will use their notion.

\begin{definition}\label{totbl}
Total blockedness for non-binary $X$ is defined in \cite{DH10b}. $X$ is totally blocked if and only if the following directed graph on the vertex set 
\[
V = \{ \sigma\sigma'_{j} | \;\;\;  j \in [m] \; \; \wedge  \; \;
\sigma, \sigma' \in {\rm pr}_{j} X \;\; \wedge \;\; \sigma \neq \sigma' \}
\]
is strongly connected. There is an edge from $\sigma\sigma'_{k}$ to $\rho\rho'_{\ell}$ 
where $k \neq \ell$ if and only if there are
\[
B_{1} \subseteq  {\rm pr}_{1} X, \; B_{2} \subseteq  {\rm pr}_{2} X, \; \ldots\; , \; B_{m} \subseteq  {\rm pr}_{m} X
\]
such that 
\begin{enumerate}
\item Each $|B_j|=2$ ($1\le j\le m$);
\item $B_{k} = \{\sigma,\sigma'\}$ and $B_{\ell} = \{\rho,\rho'\}$;
\item Introduce the notation $X_{B} = X\cap \prod_{j=1}^{m}B_{j}$. Then $X_{B}$ is a binary relation
(think of each $B_{i}$ as $\{0,1\}$). Condition 3. is that under these definitions $(k,\sigma)$ is connected 
with $(\ell,\rho)$ in the blockedness graph of $X_{B}$ (see Definition \ref{blockgraph} for the blockedness graphs for binary domains).
\end{enumerate}
\end{definition}

\subsection{Supportive non-binary evaluations}\label{suppnonbin}

First we restate {\em the supportive} part of Theorem \ref{nonbinaryclassthm}:

\begin{theorem} \label{nonbinaryclassthm2}
Let $X\subseteq D^{m}$, non-degenerate and non-binary. 
If $X$ is totally blocked then $X$ is an impossibility 
domain with respect to IIA + Supportiveness + Non-dictatorship
if and only if $X$ is an impossibility 
domain with respect to IIA + Supportiveness + Non-dictatorship with the $n \leq 3$ condition
(meaning that we find no other aggregators than dictatorships when we restrict ourselves to aggregators with only three arguments).
\end{theorem}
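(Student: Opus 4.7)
The forward direction is immediate, since any aggregator of arity at most three witnessing possibility is in particular an aggregator of some arity. For the reverse direction I argue the contrapositive. Assume $(f_{1},\ldots,f_{m})$ is a supportive, non-dictatorial IIA aggregator of $X$ of some arity $n\ge 4$; the goal is to build one of arity at most three. By Lemma~\ref{idsup}, each $f_{j}$ is a conservative operation, so $(f_{1},\ldots,f_{m})\in {\rm MPol}((X,\tau)^{\mho})$. The first step is to restrict to 2-element subsets: for any choice $B_{1}\subseteq D_{1},\ldots,B_{m}\subseteq D_{m}$ with $|B_{j}|=2$, conservativity forces each $f_{j}$ to restrict to $f_{j}^{B_{j}}:B_{j}^{n}\to B_{j}$, and the restricted tuple is a multi-sorted polymorphism of the binary multi-sorted relation $X_{B}=X\cap\prod_{j}B_{j}$. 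Thus the problem is transported into a family of binary subproblems on which the analysis of Section~\ref{secbinaryevaluations} applies.

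Next I exploit the non-binary total blockedness (Definition~\ref{totbl}). Its definition yields, for every edge $\sigma\sigma'_{k}\to\rho\rho'_{\ell}$ of the non-binary blockedness graph, a specific tuple $(B_{1},\ldots,B_{m})$ for which $X_{B}$ is a binary relation whose binary blockedness graph contains the edge $(k,\sigma)\to(\ell,\rho)$. Assembling such tuples along a strongly connected spanning subgraph, each restriction $X_{B}$ is then totally blocked in the binary sense, and by Lemma~\ref{eql} every component $f_{j}^{B_{j}}$ of the restricted aggregator equals a single common Boolean operation $g_{B}$. Applying the Boolean conservative case of Schaefer's theorem (Theorem~\ref{schaeferalg}), $g_{B}$ is either a projection or lies in a clone generated by an operation of arity at most three, namely one of $\wedge$, $\vee$, the ternary majority, or the Mal'tsev operation $u-v+w\bmod 2$.

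The main obstacle is to patch these local, arity-$\le 3$ witnesses into a single global, multi-sorted, arity-$\le 3$ polymorphism of $(X,\tau)^{\mho}$. For this I combine two observations. First, conservativity implies that a ternary (or binary) operation on a 2-element subset $B_{j}$ extends to a conservative operation on all of $D_{j}$ by acting as a projection on inputs that leave $B_{j}$; this extension is forced by conservativity. Second, because total blockedness links \emph{every} pair of 2-element subsets across \emph{every} pair of coordinates, a gadget argument adapting the proof of Lemma~\ref{blockedlemma} to the non-binary setting shows that the algebraic type of $g_{B}$ cannot depend on $B$: it is a single global type (semilattice, majority, or Mal'tsev). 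This uniformity lets me assemble a coherent arity-$\le 3$ multi-sorted operation $(g_{1},\ldots,g_{m})$ which is verifiably a polymorphism of $(X,\tau)^{\mho}$. Finally, the non-dictatorship of $(f_{1},\ldots,f_{m})$ transfers to $(g_{1},\ldots,g_{m})$: if the latter were a dictator, total blockedness together with Lemma~\ref{eql} would force $f$ to coincide with that same dictator on every 2-element restriction, hence on all of $X$, contradicting our assumption. The hard part is the consistent gadget-based gluing and the algebraic-type uniformity across all witnesses of total blockedness; the rest of the argument is bookkeeping on top of results already proved in the paper.
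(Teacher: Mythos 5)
Your plan runs in the opposite direction from what the theorem actually requires, and the two steps on which it rests are both unsupported. First, the reduction to binary subdomains: Definition~\ref{totbl} only says that the \emph{union} over all choices of $(B_{1},\ldots,B_{m})$ of the induced binary blockedness edges forms a strongly connected graph; the edges witnessing connectivity come from \emph{different} tuples $B$, so no single restriction $X_{B}$ need be totally blocked in the binary sense (it may even be degenerate or empty). Hence Lemma~\ref{eql} cannot be applied to the restrictions $X_{B}$, and the ``common Boolean operation $g_{B}$'' on which everything afterwards depends is not available. Second, and more fundamentally, the patching step is exactly the hard content of the theorem and is asserted rather than proved: a conservative operation on a $2$-element set does \emph{not} extend in a forced way to a conservative operation on $D_{j}$ (``act as a projection off $B_{j}$'' is one arbitrary choice among many), and there is no argument that any such assembled arity-$\le 3$ tuple preserves $X$ itself, since $X$ constrains tuples whose coordinates take values outside the chosen $B_{j}$'s. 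The final transfer of non-dictatorship likewise silently reproves a nontrivial fact (that agreement of all binary restrictions on a single dictator forces $f$ to be a dictatorship), which in the paper is exactly Dokow--Holzman's Proposition 5 (Lemma~\ref{2dictatorshiplemma}) and needs total blockedness.

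The paper avoids constructing a small-arity aggregator altogether. It proves the contrapositive-free form directly: assuming no non-dictatorial supportive aggregator of arity $\le 3$ exists, Lemma~\ref{bulatovconservative} (proved by taking a minimal-arity counterexample and analyzing its identification minors, which have smaller arity and hence dictatorial binary restrictions) shows that for \emph{every} supportive aggregator $f$ and every $j$, $u\neq v$, the restriction $W_{j}^{uv}=f_{j}|_{\{u,v\}}$ is a dictatorship; then Lemma~\ref{2neutrallemma} (total blockedness makes all $W_{j}^{uv}$ equal, hence dictatorial with the same dictator) and Lemma~\ref{2dictatorshiplemma} conclude that $f$ itself is a dictatorship. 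If you want to salvage your route, you would have to supply precisely the missing gluing argument --- showing that some arity-$\le 3$ operation obtained from the local restrictions is a polymorphism of $(X,\tau)^{\mho}$ --- and that is not bookkeeping; it is the theorem.
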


\begin{remark}
The theorem is a classification theorem in the sense that it gives us
an algorithm to determine if a given domain $X$ is an impossibility domain with respect to IIA + Supportiveness + Non-dictatorship or not: just check all 
potential aggregators with at most three arguments. 
\end{remark}

\begin{proof} The `only if' part is straightforward since 
if $X$ is an impossibility domain with respect to IIA + Supportiveness + Non-dictatorship 
then every supportive IIA aggregator $f=(f_{1},\ldots,f_{m})$ of $X$ is a dictatorship, not only for arity $n\le 3$.
For the `if' part we need the following results of E. Dokow and R. Holzman: 

\begin{definition} [E. Dokow and R. Holzman \cite{DH10b}] \label{2dictatorshipdef}
Let $f=(f_{1},\ldots,f_{m})$ be a supportive IIA aggregator of arity $n$ for $X \subseteq D^m$. 
For an issue $j$ and an ordered pair 
of distinct positions $u,v \in D_{j}$ we translate $f_{j}|_{\{u,v\}}: \{u,v\}^{n}\rightarrow \{u,v\}$
to a function $W^{uv}_j:\{0,1\}^n\rightarrow\{0,1\}$ under $0\leftrightarrow v, \; 1\leftrightarrow u$.
\end{definition}

\begin{lemma} [E. Dokow and R. Holzman \cite{DH10b}, Propostion 1] \label{2neutrallemma}
If $X$ is totally blocked, then all $W_{j}^{uv}$s are the same. 
\end{lemma}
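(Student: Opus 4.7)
The strategy is to reduce everything to binary blockedness graphs of two-element restrictions of $X$ and then propagate local equalities along the generalized blockedness graph using its strong connectivity.

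First I would note that for any choice $B = (B_1, \ldots, B_m)$ of two-element subsets $B_j \subseteq {\rm pr}_j X$, supportiveness guarantees $f_j(B_j^n) \subseteq B_j$, so the restriction of $f$ to $X_B = X \cap \prod_j B_j$ is a supportive IIA aggregator for the binary relation $X_B$. Consequently, each $W_j^{uv}$ with $\{u,v\} = B_j$ coincides with the $j$-th component of a supportive binary aggregator of $X_B$, read under the identification $0 \leftrightarrow v,\; 1 \leftrightarrow u$.

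Next I would handle a single edge $\sigma\sigma'_k \to \rho\rho'_\ell$ of the generalized blockedness graph. Let $B$ be its witness (Definition \ref{totbl}), and under the identifications $\sigma \leftrightarrow 1,\; \sigma' \leftrightarrow 0$ at $k$ and $\rho \leftrightarrow 1,\; \rho' \leftrightarrow 0$ at $\ell$, extract the MIPE $u \colon S \to \{0,1\}$ witnessing the corresponding edge in the binary blockedness graph of $X_B$, with $u_k = 1$ and $u_\ell = 0$. MIPE minimality supplies extensions $x^k, x^\ell \in X_B$ that agree with $u$ on $S \setminus \{k, \ell\}$ and flip $u$ exactly at $k$ and at $\ell$ respectively. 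Given any Boolean profile $\vec{b} = (b^{(1)}, \ldots, b^{(n)})$, I would build $n$ votes in $X_B^n$ by taking vote $i$ to be $x^k$ if $b^{(i)} = 0$ and $x^\ell$ if $b^{(i)} = 1$. A direct check shows that columns $k$ and $\ell$ of this profile both equal $\vec{b}$, while on every coordinate in $S \setminus \{k, \ell\}$ every vote agrees with $u$. Supportiveness then forces the aggregate to coincide with $u$ on $S \setminus \{k, \ell\}$, and since $u$ itself is unextendable in $X_B$, the aggregate on $(k, \ell)$ cannot be the pair $(1, 0)$. Translating back, this reads $W_k^{\sigma\sigma'}(\vec{b}) = 1 \Rightarrow W_\ell^{\rho\rho'}(\vec{b}) = 1$, i.e. a pointwise implication $W_k^{\sigma\sigma'} \le W_\ell^{\rho\rho'}$.

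Finally, since the generalized blockedness graph is strongly connected, any two vertices $\sigma\sigma'_k$ and $\tau\tau'_{k'}$ are linked by directed paths in both directions, which composed give $W_k^{\sigma\sigma'} \le W_{k'}^{\tau\tau'} \le W_k^{\sigma\sigma'}$ and therefore equality. The main obstacle I foresee is the orientation bookkeeping in the single-edge step: the edges of the blockedness graph are directed, the vertex notation $\sigma\sigma'_k$ is ordered, and the identification underlying $W_k^{\sigma\sigma'}$ is directional, so one has to line these up so that traversing $\sigma\sigma'_k \to \rho\rho'_\ell$ really compares $W_k^{\sigma\sigma'}$ with $W_\ell^{\rho\rho'}$ (rather than with $W_\ell^{\rho'\rho}$ or some other swap). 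Once this orientation convention is pinned down carefully, the remaining graph chasing is routine.
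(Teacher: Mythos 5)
Your argument is correct, and the key point that makes it work is one you did include: the aggregate of a profile drawn from $X_B=X\cap\prod_j B_j$ lies back in $X_B$ (not merely in $X$) because supportiveness confines each coordinate to $B_j$; this is essential, since the MIPE witnessing an edge is minimally infeasible relative to $X_B$, not relative to $X$, and your orientation bookkeeping ($u_k=1\leftrightarrow\sigma$, $u_\ell=0\leftrightarrow\rho'$, giving the pointwise implication $W_k^{\sigma\sigma'}\le W_\ell^{\rho\rho'}$, then equality by strong connectivity over all ordered pairs) is handled consistently. Note that the paper itself offers no proof of this lemma --- it is imported verbatim as Proposition 1 of Dokow--Holzman \cite{DH10b} --- so there is nothing internal to compare against; your reconstruction via restriction to two-element sub-domains and MIPE propagation along the blockedness graph is in the same spirit as their original argument and is self-contained and valid.
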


\begin{definition} [E. Dokow and R. Holzman \cite{DH10b}] \label{2dictatorshipdef}
We call $f$ {\em 2-dictatorial} if all $W^{uv}_j$s ($j\in [m]$, $u,v\in D_{j}$) are dictatorships with respect to the same $k$.
\end{definition}

\begin{lemma} [E. Dokow and R. Holzman \cite{DH10b}, Proposition 5] \label{2dictatorshiplemma}
If $X$ is totally blocked and $f$ is 2-dictatorial then $f$ is a dictatorship. 
\end{lemma}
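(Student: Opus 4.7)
The plan is a proof by contradiction. Assume $f = (f_{1}, \ldots, f_{m})$ is supportive and 2-dictatorial with dictator $k$, yet fails to be a dictatorship: there is a profile $P = (x^{(1)}, \ldots, x^{(n)}) \in X^{n}$ whose aggregate $y := f(P)$ differs from $x^{(k)}$ at some issue $j^{*}$, say $v := x^{(k)}_{j^{*}} \neq y_{j^{*}} =: w$. I would then extract what 2-dictatoriality forces: if every voter's position $x^{(i)}_{j^{*}}$ lay in $\{v,w\}$, then $f_{j^{*}}$ on this restricted column would be the $k$-th projection on $\{v,w\}^{n}$, yielding $y_{j^{*}} = v$, a contradiction. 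So some voter $i^{*}$ casts a third value $u := x^{(i^{*})}_{j^{*}} \notin \{v,w\}$, and the $j^{*}$-column of $P$ takes at least three distinct values. Hence any strategy must either restrict the profile to a 2-element column on $j^{*}$ without altering the aggregate there, or transport the failure to a binary setting where 2-dictatoriality bites directly.

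Next I would invoke total blockedness (Definition \ref{totbl}) to perform exactly that transport. Total blockedness says the non-binary blockedness graph is strongly connected: any two vertices $\sigma\sigma'_{j}$ and $\rho\rho'_{\ell}$ are joined by a path whose edges are MIPEs living in binary restrictions $X_{B} = X \cap \prod_{j} B_{j}$ with $|B_{j}| = 2$. In particular, the vertex $wv_{j^{*}}$ lies on a directed cycle of such MIPEs. I would use this cycle, together with the defining ``flip any bit of the MIPE extends into $X$'' minimality, to perform a sequence of row-swaps on $P$: each swap replaces one voter's row by an $X$-row that agrees with it on the coordinates prescribed by the MIPE and shifts that voter's $j^{*}$-coordinate into $\{v,w\}$, while leaving $x^{(k)}_{j^{*}}$ unchanged. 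After all voters are processed, the resulting profile $P'$ has its $j^{*}$-column inside $\{v,w\}^{n}$, so 2-dictatoriality directly forces $f_{j^{*}}(P') = v$. The core claim is then that each swap leaves $f_{j^{*}}$ unchanged, so $w = f_{j^{*}}(P) = f_{j^{*}}(P') = v$ --- contradiction.

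The main obstacle is precisely this last invariance claim. The MIPE minimality means that the differences between an old row and its replacement are localized to a small set of coordinates whose values stay within 2-element slices, and on those slices 2-dictatoriality completely controls $f$; so each individual swap should leave $f_{j^{*}}$ fixed. Stitching these localized invariances into a global equality along a strongly connected cycle (rather than an arbitrary path, which would risk ending at a wrong position on $j^{*}$) is where the strong connectivity of the non-binary blockedness graph genuinely enters, and it is the step that I expect to require the most careful bookkeeping. My plan is to model it on Dokow and Holzman's Proposition 5 in \cite{DH10b}, which establishes exactly this claim, refitting the argument to the multi-sorted algebraic language of this paper where convenient.
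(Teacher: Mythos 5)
First, a point of reference: the paper does not prove this lemma at all --- it is imported verbatim as Proposition~5 of \cite{DH10b} and used as a black box inside the proof of Theorem~\ref{nonbinaryclassthm2} --- so there is no internal argument to compare yours against; the comparison can only be with Dokow and Holzman's original proof, which your closing sentence proposes to ``model'' the argument on. That makes the plan circular as a proof of the proposition itself. Setting that aside, the sketch has a genuine gap at its central step, the ``row-swap invariance'' claim, and the gap is not mere bookkeeping: it is the entire content of the lemma. Note that 2-dictatoriality of $f_{j^{*}}$ alone cannot force $f_{j^{*}}$ to be a projection. For example, on $D_{j^{*}}=\{0,1,2\}$ with $n=3$, the conservative function $h(a_{1},a_{2},a_{3})=a_{1}$ when $|\{a_{1},a_{2},a_{3}\}|\le 2$ and $h(a_{1},a_{2},a_{3})=a_{2}$ when all three values are distinct has every restriction to a $2$-element subset equal to the projection onto voter $1$, yet $h$ is not a projection; moreover $h(0,1,2)=1$ while $h(0,1,1)=0$, so replacing one voter's ``third value'' by a value already present in the column \emph{does} change the aggregate. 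Hence your key claim --- that each swap leaves $f_{j^{*}}$ unchanged because ``on those slices 2-dictatoriality completely controls $f$'' --- is false at the level of the single coordinate function: 2-dictatoriality controls $f_{j^{*}}$ only on columns entirely contained in one $2$-element set, and both the pre-swap and post-swap columns may contain three or more distinct values. Any correct proof must exploit the joint constraint that $(f_{1},\ldots,f_{m})$ maps $X^{n}$ into $X$, i.e.\ the interaction between issues, not just the restrictions of one $f_{j}$.

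The mechanism you propose for the swaps is also not supplied by the definitions. An edge of the non-binary blockedness graph (Definition~\ref{totbl}) is witnessed by a MIPE inside a binary restriction $X_{B}=X\cap\prod_{j}B_{j}$ with $|B_{j}|=2$; a MIPE is by definition an \emph{infeasible} partial evaluation, so it does not extend to any row of $X$, and nothing in the definition produces, for a given row $x\in X$ with $x_{j^{*}}=u\notin\{v,w\}$, a replacement row $x'\in X$ with $x'_{j^{*}}\in\{v,w\}$ agreeing with $x$ on a prescribed set of coordinates --- such a row need not exist. What total blockedness actually buys in this part of the argument (and what \cite{DH10b} uses it for) is the propagation of the identity of the dictator and of forced implications between (issue, position-pair) vertices along paths of MIPEs; turning that into the statement that a 2-dictatorial $f$ is dictatorial requires an argument that confronts profiles whose columns take three or more values head-on (e.g.\ by supportiveness locating the offending voter and deriving an infeasible aggregate), rather than reducing to the two-valued case by local surgery. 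As written, the proposal defers exactly this to ``careful bookkeeping'' and to the cited proposition, so it does not constitute a proof.
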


Let us try to put the above two lemmas together! The total blockedness is a condition in the theorem 
whose `if' part we want to prove, so
what is missing is
that under the theorem's conditions
the $W^{uv}_j$s are not simply the same, but they are all dictatorships.
We rely on the following lemma which is essentially a 
simplified version of Theorem 2.16 in \cite{Bulatov11} and 
a multi-sorted version of Schaefer's dichotomy theorem (see Theorem \ref{schaeferalg}
and \cite{Chen09}).

\begin{lemma} \label{bulatovconservative}
Let $\Gamma$ be a set of multi-sorted relations over $D$ with 
type set $[t]$. If $\Gamma$ does not have IIA + Supportiveness + Non-dictator\-ship aggregators with the $n \leq 3$ condition, 
then for any 
$f=(f_1,\cdots,f_t) \in {\rm MPol(\Gamma^{\mho})}$,
$a \in [t]$, $u,v\in D_{a}$ with $u\neq v$ the restriction
$f_{a}|_{\{u,v\}}$ is a dictatorship.
\end{lemma}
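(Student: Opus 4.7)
The plan is to argue by contrapositive: assuming some component restriction $f_a|_{\{u,v\}}$ fails to be a projection, I will manufacture a low-arity supportive non-dictatorial aggregator of $\Gamma$, contradicting the hypothesis. Two observations set this up. First, because $\{u,v\} \subseteq D_a$ is a unary relation belonging to $\Gamma^{\mho}$, the polymorphism $f_a$ preserves $\{u,v\}$, so the restriction $g_0 := f_a|_{\{u,v\}^n}$ is a well-defined $n$-ary conservative Boolean operation after identifying $\{u,v\}$ with $\{0,1\}$. Second, by Lemma~\ref{idsup}, the notions \emph{supportive multi-sorted polymorphism of $\Gamma$} and \emph{element of ${\rm MPol}(\Gamma^{\mho})$} are interchangeable, so producing any $h \in {\rm MPol}(\Gamma^{\mho})$ of arity at most three that fails to be a dictatorship already contradicts the hypothesis.

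The key middle step is an appeal to Post's lattice (or equivalently to the Boolean content of Theorem~\ref{schaeferalg}): the clone generated on $\{0,1\}$ by any conservative non-projection together with the two Boolean projections necessarily contains one of the four canonical operations $\wedge$, $\vee$, the ternary majority, or the ternary minority $u-v+w \bmod 2$. Unpacking this, if $g_0$ is not a projection then there is a term $t(y_1,\ldots,y_s)$ of arity $s \in \{2,3\}$, built by composition and variable identification from $g_0$ alone, whose action on $\{u,v\}^s$ coincides with one of these four canonical operations.

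Next I would lift $t$ from the Boolean restriction back to the full multi-sorted setting: define $g = (g_1,\ldots,g_t)$ componentwise by applying the same term expression $t$ to $f$ on every type. Since ${\rm MPol}(\Gamma^{\mho})$ is a clone (closed under composition and variable identification), $g \in {\rm MPol}(\Gamma^{\mho})$ is a supportive multi-sorted aggregator of arity $s \le 3$; by construction $g_a|_{\{u,v\}}$ equals one of $\wedge, \vee$, majority, or minority. In particular $g_a$ is not a projection of any fixed coordinate on $D_a$, so the tuple $g$ cannot be a dictatorship in the multi-sorted sense. This contradicts the hypothesis that $\Gamma$ has no non-dictatorial supportive aggregators of arity at most three, completing the argument.

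I expect the main obstacle to be the Post-lattice step: one needs the fact that every conservative non-projection Boolean function, together with projections, generates at least one of the four basic non-trivial conservative operations through a term of arity at most three. This is classical, but it is the sole place in the argument where we exploit that the two-element restriction is small enough to fall under the Boolean dichotomy; the rest of the proof is clone-theoretic bookkeeping about the coherence between Boolean composition on the restriction and multi-sorted composition at the level of $f$.
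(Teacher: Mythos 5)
Your argument is correct, but it follows a genuinely different route from the paper's. You argue by contrapositive: a non-dictatorial two-element restriction $f_a|_{\{u,v\}}$ is an idempotent (indeed conservative) Boolean non-projection, so by the Boolean clone classification behind Post's lattice the clone it generates contains $\wedge$, $\vee$, majority, or minority; composing the same term uniformly across all types and using that ${\rm MPol}(\Gamma^{\mho})$ is closed under such term operations, you obtain an explicit supportive non-dictatorial multi-sorted aggregator of arity $2$ or $3$, contradicting the hypothesis. The paper instead takes a minimal-arity counterexample $f$ (necessarily of arity $n\ge 4$), observes that every identification of two of its variables yields, by minimality, a restriction that is a dictatorship, and then runs a short pigeonhole argument on $\{u,v\}$ (with only two values, three of the first four arguments must collide) to force $f_a|_{\{u,v\}}$ itself to be a dictatorship --- a contradiction obtained without any appeal to the Boolean classification. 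What your route buys is brevity modulo a classical fact, plus structural information: the low-arity witness you build restricts on $\{u,v\}$ to one of the four canonical operations; note also that the classical step you flag as the main obstacle is available inside the paper itself, since the clone generated by $f_a|_{\{u,v\}}$ equals ${\rm Pol}({\rm Inv})$ of it by Theorem \ref{geig}(2), and Theorem \ref{schaeferalg} then supplies one of the four operations. What the paper's route buys is self-containment and uniformity: it is elementary, mirrors the conservative-CSP argument it cites from Bulatov, and does not depend on any dichotomy statement. Do make the two bookkeeping points explicit if you write this up: that conservativity of $f_a$ (from the unary relations in $\Gamma^{\mho}$) is what makes the restriction-then-compose and compose-then-restrict operations agree on $\{u,v\}$, and that Lemma \ref{idsup} is what converts membership in ${\rm MPol}(\Gamma^{\mho})$ into the supportiveness required to contradict the hypothesis.
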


\begin{proof} (of Lemma \ref{bulatovconservative}.)
Assume the contrary, namely that $\Gamma$ does not have IIA + Supportiveness + Non-dictator\-ship aggregators with $n \leq 3$, 
but it has some aggregator $f=(f_1,\cdots,f_t) \in {\rm MPol(\Gamma^{\mho})}$, such that 
there exist $a \in [t]$ and $u,v\in D_{a}$, $u\neq v$ with the property
that $f_{a}|_{\{u,v\}}$ is not a dictatorship, where $n$ is the arity of $f$. Then $n$ must be at least $4$.
Let $f$ be such a counter-example with minimal $n$. 
In particular, any $f' = f(x^{(1)},\ldots,\underbrace{x^{(i)}}_{i^{\rm th}\; {\rm argument}},\ldots,\underbrace{x^{(i)}}_{i'^{\rm th}\; {\rm argument}}\ldots,x^{(n)})$,
i.e. when we identify two inputs, must be a two-dictatorship, because $f'$ aggregates $n-1$ inputs.

\medskip

Denote $f_{a}|_{\{u,v\}}$ by $g$, which is presumably not a dictatorship. We will arrive at a 
contradiction by showing that $g$ is a dictatorship, i.e. there exists
one $k \in [n]$ such that for any $x_1,\cdots,x_n \in D'$, $g(x_1,\cdots,x_n)=x_k$. Since any identification of the variables of $g$
arises by first identifying these variables in $f$, and then restricting the resulting type = $a$ component to the binary set $\{u,v\}$, we
have, that any identification of variables of $g$, must result in a dictatorship. If \underline{all} of these identifications
$x^{(i)} = x^{(i')}$ result in a dictatorship that projects to coordinate $i$ (as opposed to some coordinate $i''\not\in
\{ i,i'\}$), we get a 
contradiction by setting $\{i,i'\}$ first to $\{1,2\}$ then to $\{3,4\}$:
\[
u = f(\underbrace{u,u},v,v,\ldots)= f(u,u,\underbrace{v,v},\ldots) =v. \;\;\;\; (\mbox{Used that $n\ge 4$}.)
\]
Thus there exist two coordinates where identifying the corresponding variables will result in a dictatorship 
function that projects to \underline{some other} (i.e. not $i,i'$) coordinate. Wlog assume that 
\[
g(\underbrace{x_1,x_1},x_3,x_4,\cdots,x_n)=x_4
\]
 Then
$g(x_1,x_1,x_1,x_4,\cdots,x_n)=x_4$. We show that this implies $g(x_1,x_2,x_1,x_4,\cdots,x_n)=x_4$.
If $g|_{x_{3}=x_{1}}$ was 
a dictator $x_{i}$ other than $x_{4}$,  then setting $x_2 = x_1$ and letting $x_{4}$ vary
we would get a contradiction. Similar reasoning gives that 
$g(x_1,x_3,x_3,x_4,\cdots,x_n)=x_4$. 
Thus whenever there is a duplication among the values of $x_{1},x_{2},x_{3}$, the output of $g$ is always $x_{4}$. But duplication 
always occurs since $|D'|=2$, thus $g$ is a dictatorship, a contradiction.
\end{proof}

We are now ready to prove the `if' part of Theorem \ref{nonbinaryclassthm2}. 
Assume that when $n\le 3$ there are no other supportive aggregators for $X$ than dictatorships (and the other conditions: 
$X$ is non-degenerate, totally blocked) also hold. Consider an aggregator $f$ for $X$ with $n\ge 4$.
By Lemma \ref{bulatovconservative}, 
for any $j \subseteq [m]$ and for any $u,v\in D_{j}$, $u\neq v$, 
we have $W_{j}^{uv}$ is a dictatorship. By Lemma \ref{2neutrallemma},  since $X$ is totally blocked, all 
$W_{j}^{uv}$  are the same. 
Lemma \ref{2dictatorshiplemma} then implies that $f$ is a dictatorship.
\end{proof}

\subsection{General idempotent non-binary evaluations} \label{secidempnonbinary}

 
\begin{lemma} \label{generalidempotent}
For a given domain $X \subseteq D^m$, non-degenerate and non-binary, 
i.e. $|D| = d \geq 3$, and totally blocked, if $X$ does not have any
IIA + Idempotency + Non-dictatorship 
aggregator with $n \leq d$ condition,  then $X$ is an impossibility 
domain with respect to IIA + Idempotency + Non-dictatorship.
\end{lemma}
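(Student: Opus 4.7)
The plan is to reduce the idempotent case to the already-established supportive case (Theorem \ref{nonbinaryclassthm2}) by induction on the arity $n$. The base cases $n \le d$ are handled directly by the hypothesis. The inductive step will show that for $n > d$ any idempotent aggregator of $X$ is automatically supportive, after which Theorem \ref{nonbinaryclassthm2} finishes the argument.

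For the inductive step, fix $n > d$ and assume every idempotent aggregator of $X$ of arity strictly less than $n$ is a dictatorship. Let $f = (f_1,\ldots,f_m)$ be an idempotent aggregator of $X$ of arity $n$. For each pair $1\le i<i'\le n$, form the identification $f^{(i,i')}$ of arity $n-1$ by equating the $i$-th and $i'$-th arguments; this is clearly an idempotent aggregator of $X$, hence by the inductive hypothesis is a dictatorship $\pi_{k^*}$ for some $k^*$. Unpacking back to $f$, we conclude that whenever a profile $(x^{(1)},\ldots,x^{(n)})\in X^n$ satisfies $x^{(i)}=x^{(i')}$, the output $f(x^{(1)},\ldots,x^{(n)})$ is some fixed $x^{(k)}$ with $k\in[n]$ (depending only on the pair $(i,i')$).

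The pigeonhole step now delivers supportiveness. Since $n > d \ge |D_j|$ for every issue $j$, any tuple $(u_1,\ldots,u_n)\in D_j^n$ must have a repeated coordinate $u_i = u_{i'}$; applying the previous observation componentwise gives $f_j(u_1,\ldots,u_n) = u_k \in \{u_1,\ldots,u_n\}$. Thus $f$ is supportive. The hypothesis excludes all non-dictatorial idempotent aggregators of arity $\le d$; since supportiveness implies idempotency and $d \ge 3$, it excludes in particular all non-dictatorial supportive aggregators of arity $\le 3$. Invoking Theorem \ref{nonbinaryclassthm2}, $X$ is a supportive impossibility domain, so the supportive $f$ must be a dictatorship, completing the induction.

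The one point requiring care is verifying that each identified aggregator $f^{(i,i')}$ is a dictatorship of $X$ \emph{as a single aggregator} (with one dictator index $k^*$ shared by all components $f_j^{(i,i')}$), rather than merely componentwise dictatorial with different indices per issue; this is built into the paper's definition of dictatorship and is the hinge of the whole reduction. The conceptual moral is the slogan ``above arity $d$, idempotency is as strong as supportiveness'', which bootstraps the hypothesis on small arities up to all arities without the need to develop an independent idempotent analogue of Lemma \ref{bulatovconservative}.
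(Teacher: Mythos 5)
Your proof is correct and follows essentially the same route as the paper: identify two arguments (justified by pigeonhole since $n>d\ge|D_j|$), use minimality of arity---which you phrase as induction, the paper as a minimal counterexample---to conclude the identified aggregator is a dictatorship and hence that $f$ is supportive, and then invoke Theorem \ref{nonbinaryclassthm2} to finish. The point you flag about the dictator index being common to all components is indeed the same mechanism the paper relies on (its Lemma on dictatorships and IIA decompositions), so no gap remains.
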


\begin{proof}
Assume by contradiction that $X$ satisfies the condition of the lemma and $X$ is a 
possibility domain with respect to IIA + Idempotency + Non-dictatorship. 
Then there is an idempotent IIA aggregator $f=(f_1,\cdots,f_m)$ 
which is not a dictatorship. By the hypothesis of the lemma, 
$f$ has arity $n \geq d+1$. Assume that $f$ is of minimal arity, 
i.e. no idempotent non-dictatorial aggregator with smaller
arity exists. We first show that $f$ is supportive, i.e.
\[
\forall \;x^{(1)},\cdots,x^{(n)} \in X\;\;\;\;\;\;\; \forall \; 1 \leq j \leq m\; : \;\;\;\;\;\;\;\;\;\;\;\;\;\;\;
f_{j}(x^{(1)}_{j},\cdots,x^{(n)}_{j}) \in \{x^{(1)}_{j},\cdots,x^{(n)}_{j}\}. 
\]
We are done if we can show that for any \underline{fixed} $x^{(1)},\cdots,x^{(n)}$ and $j$
we have $f_{j}(x^{(1)}_{j},\cdots,x^{(n)}_{j}) \in \{x^{(1)}_{j},\cdots,x^{(n)}_{j}\}$.
We use the pigeon hole principle. Since $n \geq d+1$, among $x^{(1)}_{j},\cdots,x^{(n)}_{j}$ 
there must be at least two elements which are the same. 
For notational conveniences they are $x^{(1)}_{j}$ and $x^{(2)}_{j}$. 

\begin{remark} The collision does not happen at the same pair of indices for all inputs and $j$s,
but this does not affect us, since we are setting the input fixed.
\end{remark}

\medskip

Since the collision is at indices 1 and 2, we are going to examine the aggregator of $n-1$ elements $g$ 
that we get from $f$ by identifying the first two inputs of $f$. 
Since $g$ is also an idempotent IIA aggregator of $X$, 
by our minimality assumption $g$ must be a dictatorship. Thus $g_j$ is also a dictatorship and therefore supportive.
In particular, 
\[
u = g_{j}(x^{(2)}_{j},\cdots,x^{(n)}_{j}) \in \{x^{(2)}_{j},\cdots,x^{(n)}_{j}\}. 
\]
But $u$ is also the value that $f_{j}$ takes on $x^{(1)}_{j},x^{(2)}_{j}\cdots,x^{(n)}_{j}$ 
(since by our assumption $x^{(1)}_{j} = x^{(2)}_{j}$).
This concludes the proof of the fact that $f$ if supportive.

\medskip

Then, since we have found a non-dictatorial supportive aggregator for $X$ (on some number of inputs),
by Theorem \ref{nonbinaryclassthm2} there must also be a non-dictatorial supportive aggregator
on three inputs. Since supportive aggregators are also idempotent 
we get into a contradiction with the Lemma's assumption that 
the smallest non-dictatorial idempotent aggregator is on more than $|D| \ge 3$ inputs. 
\end{proof}

This lemma resolves the {\em idempotent case}, which was the remaining part of Theorem \ref{nonbinaryclassthm}.

\section{Algorithms to determine impossibility}

When we try to determine  if $X \subseteq D^m$
is an impossibility domain or not with respect to IIA + Idempotency   (or Supportiveness) + 
Non-dictatorship,
we can rely on two different types of characterization theorems:
By gadgets (Theorem \ref{gth}); by aggregators 
(Theorem \ref{nonbinaryclassthm}). 
Both types lead to algorithmic solutions, and we 
can use both of them as alternatives.  

\subsection{Algorithms from the characterization by gadgets}

As we have seen in the previous sections, to determine if $X$ is an impossibility domain with respect to 
IIA + Idempotency  (or Supportiveness) + Non-dictatorship, we need to check if 
certain relations can be expressed as  $X^{+}$-gadgets ($X^{\mho}$- gadgets).
This is stated in Theorem \ref{gth}.
The task is therefore to solve the following type of problem:

\medskip

{\bf Find Gadget:}  Given a set $\Gamma$ of 
relations over $D$ and a relation $R$ over $D$
determine if there is a $\Gamma$-gadget for $R$. 
In the multi-sorted version the relations are over a type set $[t]$ with associated domains $D_{1},\ldots,D_{t}$.

\bigskip

At first a hurdle seems to be that the gadgets may contain arbitrary (un-specified) number 
of auxiliary variables. Due to 
Geiger \cite{Geiger68}, Jeavons \cite{Jeavons98} 
and Trevisan et. al. \cite{TSSW96}, we however know that the
number of auxiliary variables in the {\em smallest} $\Gamma$-gadget for $R$
is upper bounded 
in terms of $|R|$, $|D|$ and $|\Gamma|$. The above
results easily generalize to the multi-sorted case. 
In our setting  $|\Gamma|$ is either $X^{+}$ or  $X^{\mho}$. The sizes of the latter are
upper bounded in terms of $m$ and $|D|$. Unfortunately, straightforward implementations yield a double-exponential running time
in $|X|$, $D$ and $m$, and whether we can
reduce it to single exponential, we currently do not know.

\medskip

In spite of their large worst case computational times, gadgets often provide very economical witnesses to impossibility
(see our two concrete earlier examples), and in all cases the witness size is at most
exponential in the above parameters.

\subsection{Algorithms from the characterization by aggregators}

Theorem \ref{nonbinaryclassthm} gives us a complete classification of impossibility domains 
for abitrary (binary or non-binary) evaluations. 
Here we turn these into algorithms. We assume without loss of generality that $X$ is non-degenerate.

\bigskip

{\bf Algorithm.} Determines if $X\subseteq D^{m}$ is an impossibility domain under the IIA + Supportiveness + Non-dictatorship condition:

\medskip

\begin{enumerate}
\item Check if $X$ is totally blocked (see Definition \ref{totbl}). 
If $X$ is not totally blocked, then $X$ is a possibility domain.

\medskip 

\hspace{-0.25in} {\em Otherwise:}

\item FOR each supportive $f=(f_{1},\ldots,f_{m})$ where $f_{j}: D_{j}^3 \to D_{j}$, check if $f$ aggregates $X$ and it is a Non-dictatorship.
If our search returns an $f$ satisfying the criteria, then $X$ is a possibility domain, otherwise it is an impossibility domain. 
\end{enumerate}

Note: it is enough to look only at $n=3$ as opposed to $n\le 3$.
The running time is dominated by step 2. The number of $f$s in the search is upper bounded by $|D|^{m \cdot |D|^3}$. 
To check for dictatorship is trivial. To check if $f$ is an aggregator of $X$ takes ${\rm poly(|X|,m, |D|)}$ time. Thus we have: 

\begin{theorem}
The complexity to determine if a given domain $X \subseteq D^m$ is an impossibility domain with respect to IIA + Supportiveness + Non-dictatorship is 
in $O( {\rm poly(|X|,m, |D|)}  |D|^{m \cdot |D|^{3}})$.
\end{theorem}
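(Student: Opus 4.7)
The plan is to establish correctness of the two-step algorithm described above and then to bound each step's running time. Correctness of Step 1 is immediate from Theorem \ref{nottotallythm} (Dokow--Holzman): when $X$ is non-degenerate and not totally blocked, $X$ automatically admits a supportive non-dictatorial IIA aggregator, so we may safely return ``possibility.'' (Non-degeneracy is without loss of generality, since any degenerate coordinate is trivially aggregated.) Correctness of Step 2 reduces to Theorem \ref{nonbinaryclassthm2}: when $X$ is totally blocked, $X$ is impossible iff no supportive non-dictatorial aggregator of arity $\le 3$ exists. A small extra point is to justify the note that it suffices to check arity exactly $3$: for $n=1$ the only supportive aggregator is the identity (a dictatorship); any supportive non-dictatorial $f$ of arity $2$ can be lifted to arity $3$ via $g(x,y,z)=f(x,y)$, which remains supportive and non-dictatorial.

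For the running-time analysis of Step 1, I would note that the vertex set of the blockedness graph (Definition \ref{totbl}) has size $O(m|D|^2)$, and each potential edge can be tested in time $\text{poly}(|X|,m,|D|)$ by enumerating candidate sets $B_1,\ldots,B_m$ and examining the induced binary relation $X_B$. A standard strong-connectivity routine then decides total blockedness in time polynomial in $|X|$, $m$, and $|D|$.

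For Step 2, the count of candidate tuples $f=(f_1,\ldots,f_m)$ is bounded componentwise: each $f_j:D_j^3\to D_j$ has at most $|D_j|^{|D_j|^3}\le |D|^{|D|^3}$ possibilities, so the total number of tuples is at most $|D|^{m\,|D|^3}$. For each candidate we perform two checks: (i) dictatorship, by scanning all $j$ to see if some common index $k\in\{1,2,3\}$ makes every $f_j$ the $k$-th projection on $D_j^3$, done in time $O(m|D|)$; and (ii) aggregation, by verifying $f(x^{(1)},x^{(2)},x^{(3)})\in X$ for all $|X|^3$ triples, which takes $O(|X|^3\,m)$ time using a precomputed hash-table membership test for $X$. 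Combining the $|D|^{m\,|D|^3}$ outer enumeration with the $\text{poly}(|X|,m,|D|)$ per-candidate work yields the stated bound.

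The main conceptual obstacle is already resolved by the preceding sections: the nontrivial fact is Theorem \ref{nonbinaryclassthm2}, which cuts the infinite search over aggregators of all arities down to the single arity $n=3$, resting on Lemmas \ref{2neutrallemma}, \ref{2dictatorshiplemma}, and \ref{bulatovconservative}. What remains in the proof of the present theorem is essentially bookkeeping; no new structural argument is needed. The only minor care is in the arity-$2$-to-$3$ lifting, which ensures the algorithm does not need a separate $n=2$ pass.
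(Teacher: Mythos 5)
Your proposal follows essentially the same route as the paper: the paper's proof is exactly this two-step algorithm (test total blockedness, then exhaustively enumerate all supportive tuples $f=(f_1,\ldots,f_m)$ with $f_j:D_j^3\to D_j$), with correctness delegated to Theorem \ref{nottotallythm} and Theorem \ref{nonbinaryclassthm2} and the count $|D|^{m\cdot|D|^3}$ times a $\mathrm{poly}(|X|,m,|D|)$ per-candidate check; your arity-$2$-to-$3$ lifting makes explicit the paper's remark that $n=3$ alone suffices.

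One claim in your analysis is not right as stated: testing an edge of the generalized blockedness graph (Definition \ref{totbl}) ``by enumerating candidate sets $B_1,\ldots,B_m$'' is not a $\mathrm{poly}(|X|,m,|D|)$ procedure, since there are up to $\binom{|D|}{2}^m$ such tuples, and even for a fixed $B$ the MIPE search inside $X_B$ ranges over up to $3^m$ partial evaluations; so Step 1, done this way, is exponential in $m$. This does not endanger the theorem, because $|D|^{2m}\cdot 3^m\cdot\mathrm{poly}(|X|,m,|D|)$ is still dominated by the $|D|^{m\cdot|D|^3}$ term of Step 2 (which is all the paper itself asserts: ``the running time is dominated by step 2''), but you should either drop the polynomiality claim for Step 1 or give an argument for it that avoids brute-force enumeration.
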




Similarly (we must replace 3 with $|D|$) we get:

\begin{theorem}
The complexity to determine if a given domain $X \subseteq D^m$ is an impossibility domain with respect to IIA + 
\underline{Idempotency} + Non-dictatorship is in 
$O( {\rm poly(|X|,m, |D|)} \cdot |D|^{m \cdot |D|^{|D|}})$.
\end{theorem}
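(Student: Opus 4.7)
The plan is to mirror the structure of the algorithm for the Supportive case, replacing the bound $n=3$ by $n\le |D|$ as justified by Lemma \ref{generalidempotent}. First, I would check whether $X$ is totally blocked. If it is not, then by Theorem \ref{nottotallythm} there exists an IIA supportive non-dictatorial aggregator for $X$; since supportiveness implies idempotency, $X$ is a possibility domain in the Idempotent sense as well, and we are done. Testing total blockedness is a reachability question in a directed graph of size polynomial in $m$ and $|D|$, so this step is absorbed into the $\mathrm{poly}(|X|,m,|D|)$ factor.

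If $X$ is totally blocked, by Lemma \ref{generalidempotent}, $X$ is an impossibility domain with respect to IIA + Idempotency + Non-dictatorship if and only if there is no idempotent non-dictatorial IIA aggregator $f=(f_{1},\ldots,f_{m})$ of arity $n\le |D|$. The algorithm enumerates each candidate $f$ of this form and checks (a) that each $f_{j}$ is idempotent, (b) that $(f_{1},\ldots,f_{m})$ maps $X^{n}$ into $X$, and (c) that $f$ is not a projection in the universal-algebraic sense. If any such $f$ passes all three tests, $X$ is a possibility domain; otherwise it is an impossibility domain.

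For the running time, fix an arity $n\le |D|$. Each component $f_{j}\colon D_{j}^{n}\to D_{j}$ is specified by $|D_{j}|^{n}\le |D|^{n}$ output values, so the number of candidate tuples $(f_{1},\ldots,f_{m})$ of arity $n$ is at most $|D|^{m\cdot |D|^{n}}$. Summing over $n\le |D|$ keeps this bound at $|D|^{m\cdot |D|^{|D|}}$ up to the polynomial factor. Idempotency is a coordinate-wise check taking $\mathrm{poly}(m,|D|)$ time; verifying that $f$ maps $X^{n}$ to $X$ requires evaluating $f$ on all $n$-tuples of rows of $X$, taking $\mathrm{poly}(|X|,m,|D|)$ time (with $n$ bounded by $|D|$ the number of such tuples is $|X|^{|D|}$, also polynomial in $|X|$ for fixed $|D|$, but even $|X|^{n}$ is absorbed as $\mathrm{poly}(|X|,m,|D|)$ in the stated bound since the exponent in $|D|^{m\cdot|D|^{|D|}}$ dominates); and dictatorship-testing requires only comparing $f$ to the $n\le|D|$ projections, which is trivial.

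The main (and essentially only) potential obstacle is making sure the arity bound $n\le |D|$ from Lemma \ref{generalidempotent} really does apply in the non-totally-blocked branch as well, which is why I first dispose of that case via Theorem \ref{nottotallychmm}; once we are in the totally blocked situation, Lemma \ref{generalidempotent} reduces impossibility checking to a finite search with arity bounded by $|D|$, and the counting argument above directly yields the claimed complexity $O(\mathrm{poly}(|X|,m,|D|)\cdot |D|^{m\cdot |D|^{|D|}})$.
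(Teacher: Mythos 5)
Your algorithm is essentially the paper's own: test total blockedness first (disposing of the negative case via Theorem \ref{nottotallythm} and the fact that supportiveness implies idempotency), then brute-force idempotent aggregators of arity at most $|D|$ as licensed by Lemma \ref{generalidempotent}, with the same candidate count $|D|^{m\cdot|D|^{|D|}}$ — the paper obtains this bound by literally ``replacing $3$ with $|D|$'' in its supportive-case algorithm. The only caveats, which the paper shares (it views $|D|$ as a constant and states the arity-$|D|$ criterion only for non-binary $X$), are that Lemma \ref{generalidempotent} assumes $|D|\ge 3$, so for binary domains (e.g.\ a totally blocked affine subspace of $\{0,1\}^m$ with $|D|=2$) one must still check arity $3$ rather than arity $\le |D|$, and the per-candidate verification cost $|X|^{|D|}$ is absorbed into the stated bound only under the fixed-$|D|$ convention.
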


Note that we view $D$ as constant (binary, etc.) and consequentially view the above complexities as single exponential.

\section{Degrees of democracy} \label{degreedemocracy}

Although the Non-dictatorship condition represents  minimal criterion for democracy, there are many functions that pass the 
Non-dictatorship test, but can barely be called democratic. Consider for instance the Boolean function that takes 
the majority value if the first voter votes zero, and takes the value one otherwise. Although this is not dictatorship the 
first voter has an overwhelming way in the outcome. What voting functions should we consider democratic?
Scenarios taken from real life, such as the American electoral system (iterated majority function),
show that the majority vote is not the only one that can be viewed as truly democratic. The answer is non-trivial.

\medskip

Different criteria for democracy have been formulated such as Anonymity (invariant under $S_{n}$)
or Symmetricity, by Kalai \cite{Kalai02} (invariant under 
a transitive permutation group acting on $[n]$), that are somewhere on the scale in
between the majoritarian and dictatorial voting schemes. 
In this article we would like to introduce StrongDem, with deep algebraic motivation.

\begin{description}
\item[StrongDem:] 
Let $f$ be an aggregator for $X\subseteq D^{m}$ for $n\ge 2$ voters that 
satisfies the IIA condition, so it
is of the form $f = (f_{1},\ldots,f_{m})$. We say that $f$ is StrongDem
if for every $1\le j\le m$, $1\le i\le n$ there are $c_{1},\ldots,c_{n}\in D_{j}$
such that $f_{j}(c_{1},\ldots,c_{i-1},x,c_{i+1},\ldots,c_{n})$ does not depend on $x\in D_{j}$.
We further require that this property holds not only for $D_{j}$ but when we replace $D_{j}$ with any $D_{j}'\subseteq D_{j}$ such that 
the operation $f$ preserves (respects) $D_{j}'$. (In the replacement the independence from $x$ must hold only when $x$ is also from $D_{j}'$.)
\end{description}

The majority function on three or more arguments 
is in StrongDem: take any $D'\subseteq D$ and set all votes except the vote of 
voter $i$ on the $j^{\rm th}$ issue to some (arbitrary) $c\in D'$. Then the outcome will be $c$
no matter what position the $i^{\rm th}$ voter takes. Since any StrongDem aggregator is
clearly Non-dictatorship, we have the containment:

\bigskip

\begin{center}
Non-dictatorship $\;\supset\;$ StrongDem $\;\supset\;$ Majority voting
\end{center}

\bigskip

All containments are strict in the following strong sense:
For any two of the above conditions we can find $X$ which is a possibility domain with respect to the 
larger class, but an impossibility domain for the smaller (IIA + Idempotency are assumed).
An important example for an $X$ which 
admits an $f$ with the Non-dictatorship condition,
but has no StrongDem voting scheme is the affine subspace.
Let $D=\{0,1\}$, and $X = \{(x_{1},x_{2},x_{3})\in D^{3}\mid\; x_{1}+x_{2}+x_{3} = 1 \; \mod 2\}$.
There is a non-dictatorial voting scheme when $n$ is odd: Let  
$f_{j}: (u_{1},\ldots,u_{n})\rightarrow \sum_{i=1}^{n} u_{i} \mod 2$ for $1\le j\le 3$.
It is easy to see that $f=(f_{1},f_{2},f_{3})$ has the Non-dictatorship property.  
It can be shown that this is the {\em only} type of non-dictatorial aggregator for $X$,
and it is not StrongDem.


\begin{figure}[htb] 
\begin{center}
\begin{tabular}{p{ 6.5cm}  p{7.5cm}}\hline
Majority, Anonymity, Symmetricity & The scheme treats all voters exactly in the same way \\\hline
StrongDem & When all others votes are appropriately fixed, a single voter cannot change the outcome \\\hline
Non-dictatorship &  There is not a single voter who exclusively controls the outcome. 
\end{tabular}
\caption{Conditions on democracy and their informal meaning}
\end{center}
\end{figure}


The StrongDem condition 
is equivalent to the aggregator falling into a well-researched class of universal algebraic operations.
This class contains operations with ``no ability to count''.
In contrast, functions like the parity function that have the ability to count are 
very input sensitive: their value changes even when the count changes only by one.
In a far-reaching part of the algebraic theory the ``no ability to count'' class of 
operations generate algebras that ``avoid types one and two'' \cite{HM88} congruences.
This class of algebras have been recently characterized by
local consistency checking algorithm, which was a breakthrough \cite{BK09}.
What makes the notion of StrongDem particularly attractive is that 
when viewing its minimalistic definition, it seems a {\em necessary} condition for democracy,
but it also has equivalent formulations, that are strong enough to accept it as a {\em sufficient} condition.

\begin{definition} [Strong resilience] 
Let $D$ be a finite domain and $\mu$ be a measure on $D$. The influence ${\rm Inf}_{i,\mu}(f)$ of the $i$-th variable of $f:D^n \to D$ is ${\rm Prob}_{\mu^{n+1}}(f(x) \neq f(x'))$, where $x,x'$ run through all random input pairs that differ only in the $i$-th coordinate ($\mu^{n+1}$ gives a natural measure on such pairs). The maximal influence $\max {\rm inf}_{\mu}(f)$ is $\max_{i} {\rm Inf}_{i,\mu}(f)$. A function $f:D^n \to D$ is strongly resilient if for every measure $\mu$ on $D$: $\max {\rm inf}_{\mu}(f^k) \to 0$ when $k \to \infty$ where $f^k$ is defined recursively by composition $f^k=f(f^{k-1},\ldots,f^{k-1})$.
\end{definition}

\begin{theorem} [G. Kun and M. Szegedy \cite{KunS09}]
The following are equivalent: 
\begin{enumerate}
\item $f$ is StrongDem.
\item There is a strongly resilient operation in $[\{f\}]$.
\end{enumerate}
\end{theorem}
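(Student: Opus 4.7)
The plan is to establish the two directions separately. The forward direction (1) $\Rightarrow$ (2) is essentially probabilistic, while the reverse direction (2) $\Rightarrow$ (1) is essentially algebraic, passing through the clone structure.

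For (1) $\Rightarrow$ (2), the natural candidate for a strongly resilient operation is the iterate $f^{k}$ for large $k$. Fix a measure $\mu$ on $D$; without loss of generality I would restrict attention to the subalgebra $D' \subseteq D$ on which $\mu$ is supported, noting that $f|_{D'}$ inherits the StrongDem property because the definition was explicitly quantified over all preserved subdomains. For each coordinate $i$ of the arity-$n$ operation $f_{j}$, StrongDem furnishes a tuple $(c_{1},\ldots,c_{n}) \in (D')^{n}$ such that $f_{j}(c_{1},\ldots,c_{i-1},x,c_{i+1},\ldots,c_{n})$ is constant in $x$. I would show that for inputs drawn from $\mu^{n}$, the probability that the ``other'' $n-1$ coordinates fall into a pattern blocking any flip at coordinate $i$ is bounded below by some $\delta > 0$ depending on $\mu$ and $f$ but not on $k$. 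Viewing $f^{k}$ as a depth-$k$ tree of $f$-gates and a flip at a single leaf, its influence at the root is at most the probability that the flip propagates through each of $k$ gates along the path, which is at most $(1-\delta)^{k}$. Summing contributions over leaves corresponding to a fixed input coordinate still yields $\max\text{inf}_{\mu}(f^{k}) \to 0$.

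For (2) $\Rightarrow$ (1), I would argue by contrapositive. Suppose $f$ is not StrongDem: then there exist $D' \subseteq D$ preserved by $f$, an issue $j$, and a voter index $i$ such that for every tuple $c \in (D')^{n-1}$ the map $x \mapsto f_{j}(c_{1},\ldots,c_{i-1},x,c_{i+1},\ldots,c_{n})$ is non-constant on $D'$. Taking $\mu$ to be the uniform measure on $D'$, this failure translates into a uniform positive lower bound on $\text{Inf}_{i,\mu}(f_{j}|_{D'})$. The key claim is that this lower bound is preserved (in a slightly weaker but still uniform form) under arbitrary term composition: whenever $g \in [\{f\}]$ restricts to $D'$, some coordinate of $g|_{D'}$ inherits influence bounded below by a constant depending only on $f$ and $\mu$. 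Iterating $g$ cannot then drive the maximum influence below that constant, so no $g \in [\{f\}]$ is strongly resilient. The preservation step is where the algebraic tools of the paper enter: the failure of StrongDem on $(D',i)$ corresponds precisely to a ``type 1 or type 2'' congruence-theoretic obstruction in the algebra generated by $f|_{D'}$, and term operations cannot destroy such an obstruction.

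The main obstacle is the forward direction, specifically making the estimate $\delta$ robust. The absorbing tuples provided by StrongDem may have coordinates with $\mu$-mass zero if one is not careful to restrict to the support subalgebra $D'$ first; this is exactly why the definition of StrongDem was strengthened to quantify over subdomains preserved by $f$. Once that reduction is in place, one has some absorbing tuple $(c_{1},\ldots,c_{n}) \in (D')^{n}$ and $\mu$ assigns positive mass to every element of $D'$, yielding a usable $\delta$. The secondary difficulty is bookkeeping: the arity of $f^{k}$ is $n^{k}$, so one must confirm that the number of leaves sharing a given coordinate label does not overwhelm the $(1-\delta)^{k}$ decay, which is immediate because influence is additive over disjoint leaves while the per-leaf bound is exponentially small in the same parameter.
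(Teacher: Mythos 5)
You should first be aware that the paper contains no proof of this statement at all: it is quoted from Kun and Szegedy \cite{KunS09}, and the known proof of that result goes through heavy universal-algebraic machinery (tame congruence theory, avoidance of types \textbf{1} and \textbf{2}, weak near-unanimity terms), not through the elementary estimates you sketch. Measured against that, both directions of your proposal have genuine gaps. In the forward direction, your per-gate estimate is unsound: in the depth-$k$ tree computing $f^{k}$ only the leaves are distributed according to $\mu$; the $n-1$ siblings feeding an internal gate are distributed according to the pushforward of $\mu$ under the lower levels of the tree, and these pushforward measures can assign mass tending to $0$ to the coordinates of every blocking tuple, so the claimed $\delta>0$ ``depending on $\mu$ and $f$ but not on $k$'' does not follow from StrongDem. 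Your preliminary reduction is also not available as stated: the support of an arbitrary $\mu$ need not be a subdomain preserved by $f$, and passing to the subalgebra it generates destroys exactly the property (positive mass on every element) on which your $\delta$ was based. Note finally that the theorem only asserts the existence of a strongly resilient operation somewhere in $[\{f\}]$; it is not claimed that $f$ or $f^{k}$ itself is strongly resilient, so even your choice of candidate requires justification, and a correct proof may have to construct a different term.

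In the reverse direction the entire weight rests on your ``key claim'' that a uniform positive influence bound survives arbitrary term composition, and you support it only with the remark that failure of StrongDem ``corresponds precisely to a type 1 or type 2 congruence-theoretic obstruction'' which ``term operations cannot destroy.'' That correspondence, and the fact that such an obstruction forces a uniform influence lower bound for every operation in the clone and all of its iterates, is essentially the theorem being proved --- it is the substance of \cite{KunS09}, resting on the Hobby--McKenzie theory \cite{HM88} and its descendants --- and nothing in the present paper supplies it. You give no argument (elementary or otherwise) excluding the possibility that a clone generated by a single non-StrongDem operation contains some term whose iterates nevertheless lose all influence; for instance, the passage from ``for every fixing the restriction is non-constant on $D'$'' to an affine or essentially unary quotient on which influence can be tracked is precisely the step that needs the deep theory. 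As written, this direction is circular at its crucial point, so the proposal does not constitute a proof.
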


\section{Final notes and future research}

Our translations yield further surprising findings. It turns out that Sen's famous theorem \cite{Sen79}
has a generalization readily taken off the shelve from \cite{BP75,JeavonsCC98}: 

\begin{theorem} \label{majorityth}
If $X\subseteq \{0,1\}^m$ is 2-decomposable and $k$ is odd, then the majority function with $k$
voters is an IIA aggregator for $X$. For $k$ = odd the converse also holds.
\end{theorem}

From this Sen's original theorem easily follows. In this work we have only started to elaborate on the close connection between 
Judgment Aggregation (more generally, Evaluation Aggregation) and universal algebra.
There are several other ramifications of the connection:
robustness results, extra conditions. In a follow-up result \cite{SX14}, we deal with the majoritarian aggregators. 

\section{Acknowledgements}
We would like to thank Andrei A. Bulatov and Ron Lavi for insightful comments.

\bibliographystyle{alpha}
\bibliography{arrows}

\end{document}